\definecolor{verylightgray}{rgb}{0.001,0.001,0.001}
\providecommand{\keywords}[1]{\textbf{keywords.} #1}
\newtheoremstyle{thm}% name
	{11pt}       % Space above
	{11pt}      % Space below
	{\itshape}  % Body font
	{}          % Indent amount (empty = no indent, \parindent = para indent)
	{\bfseries} % Thm head font
	{:}         % Punctuation after thm head
	{ }      % Space after thm head: " " = normal interword space;
\theoremstyle{thm}
\newtheorem{lma}{Lemma}
\newtheorem{thm}{Theorem}
\newtheoremstyle{definition}% name
	{11pt}       % Space above
	{0pt}      % Space below
	{\normalfont}  % Body font
	{}          % Indent amount (empty = no indent, \parindent = para indent)
	{\bfseries} % Thm head font
	{:}         % Punctuation after thm head
	{ }      % Space after thm head: " " = normal interword space;
\theoremstyle{definition}
\providecommand{\bpm}{\begin{pmatrix}}
\providecommand{\epm}{\end{pmatrix}}
\newcommand{\eps}{\varepsilon}
\newcommand{\OPT}{\mathrm{OPT}}
\newcommand{\tallItemHeight}{(1/3 +\eps)\OPT}
\newcommand{\boxL}{\mathcal{B}_L}
\newcommand{\boxT}{\mathcal{B}_T}
\newcommand{\boxH}{\mathcal{B}_H}
\newcommand{\boxV}{\mathcal{B}_V}
\newcommand{\boxS}{\mathcal{B}_{S}}
\newcommand{\boxSV}{\mathcal{B}_{S,V}}
\newcommand{\boxSH}{\mathcal{B}_{S,H}}
\newcommand{\boxTV}{\mathcal{B}_{T\cup V}}
\newcommand{\hatboxTV}{\hat{\mathcal{B}}_{T\cup V}}
\newcommand{\checkboxTV}{\check{\mathcal{B}}_{T\cup V}}
\title{Improved approximation for two dimensional strip packing with polynomial bounded width\thanks{Research was supported in part by German Research Foundation (DFG) project JA 612 /14-2. An extended abstract of this paper was published at WALCOM 2017 \cite{JansenR17}}}
\author{Klaus Jansen, Malin Rau\\
Institute of Computer Science, University of Kiel, 24118 Kiel, Germany\\
\{kj,mra\}@informatik.uni-kiel.de}
\date{}
\begin{document}
\maketitle
\begin{abstract}
We study the well-known two-dimensional strip packing problem. Given a set of rectangular axis-parallel items and a strip of width $W$ with infinite height, the objective is to find a packing of all items into the strip, which minimizes the packing height. Lately, it has been shown that the lower bound of $3/2$ of the absolute approximation ratio can be beaten when we allow a pseudo-polynomial running-time of type $(n W)^{f(1/\eps)}$. If $W$ is polynomially bounded by the number of items, this is a polynomial running-time.
The currently best pseudo-polynomial approximation algorithm by Nadiradze and Wiese achieves an approximation ratio of $1.4+\eps$. 
We present a pseudo-polynomial algorithm with improved approximation ratio $4/3 +\eps$. 
Furthermore, the presented algorithm has a significantly smaller running-time as the $1.4+\eps$ approximation algorithm.
\end{abstract}
\keywords{Strip Packing; Pseudo Polynomial; Structural Lemma; Approximation Algorithm.
}
\section{Introduction}
An instance of the strip packing problem consists of a strip of width $W\in\mathbb{N}$ and infinite height and a set of items $I$, where each item $i\in I$ has width $w_i\in\mathbb{N}$ and height $h_i\in\mathbb{N}$, such that all items fit into the strip (i.e $w_i \leq W$ f.a. $i \in I$). 
 
A packing of the items is a mapping $\rho: I \rightarrow \mathbb{N}\times\mathbb{N}, i \mapsto (x_i,y_i)$, where $x_i \leq W - w_i$. We say an \textit{inner point} of a placed item $i$ is a point $(x,y) \in \mathbb{N}\times \mathbb{N}$, with $y_i \leq y < y_i +h_i$ and $x_i \leq x < x_i +w_i$.
We say two items $i$ and $j$ \textit{overlap} if there exists a point $(x,y) \in \mathbb{N}\times \mathbb{N}$, such that $(x,y)$ is an inner point of $i$ and an inner point of $j$. A packing is \textit{feasible} if no two items overlap. The objective is to find a feasible packing, which minimizes its height $\max_{i \in I} y_i + h_i$. For a set of items $S$ we denote its area by $A(S) := \sum_{i \in S} h_iw_i$. We denote the packing area by $W \times \max_{i \in I} y_i + h_i$.

Strip packing is one of the classical two-dimensional packing problems, which received a high research interest \cite{AdamaszekKPP16, Baker2, Baker, Coffman, Golan, harren20145, HenningJRS17, jansenSolisOba, JansenThoele, Kenyon00, nadiradzeWiese, Schiermeyer, Sleator, Steinberg,Khan}. It arises naturally in many practical applications as manufacturing and logistics as well as in computer science. There are many manufacturing settings where rectangular pieces have to be cut out of some sheet of raw material. If the packing height is minimized, the unused area which can be seen as the waste of the raw material is minimized as well. In computer science strip packing can be used to model scheduling parallel jobs on consecutive addresses. Here the width $W$ of the strip equals the number of given processors.

If $W$ occurs polynomially in the running time, it is called pseudo-polynomial. If $W \leq poly(n)$ the running time can be considered polynomial.
The algorithm with the so far best absolute approximation ratio using pseudo-polynomial running time is the algorithm by Nadiradze and Wiese \cite{nadiradzeWiese}. Their algorithm has an absolute approximation ratio of $1.4 +\eps$.

\paragraph{Results and methodology.} 
Let $\OPT$ be the height of an optimal packing.
We present an algorithm with absolute approximation ratio $4/3 +\eps$, which has a pseudo-polynomial running time. 
The main difficulty arises when placing items which have a small width and a large height. 
If the considered algorithm can not place all these items into an optimal packing's area, it would have to place it above this area, adding its height to the height of the packing. 
Since these items can have a height up to $\OPT$ this can double the height of the packing.

In \cite{nadiradzeWiese} Nadiradze and Wiese presented a new technique to handle tall items, which have small width and height larger than $0.4 \OPT$. 
They managed to place all these items into an optimal packing's area. 
In this packing some of the items with height up to $0.4 \OPT$ are shifted upwards and are placed above this area. 
These shifted items are responsible for adding $0.4 \OPT$ to the absolute approximation ratio. 

We present a stronger structural result than in \cite{nadiradzeWiese}, leading to an algorithm, that can place all items with height at least $\frac{1}{3}\OPT$ in an optimal packing's area. By this optimization just items with height up to $\frac{1}{3} \OPT$ have to be placed above this area, which results in an approximation algorithm with absolute approximation ratio $4/3 +\eps$. This is possible since we could reduce the area of the items with height smaller than $\frac{1}{3} \OPT$ that have to be shifted on top of the optimal packing area. The key to this better approximation lies in Lemma \ref{lma:verticalItemShift}.

\begin{figure} [ht]
\centering
\resizebox{.95\textwidth}{!}{%
\begin{tikzpicture}
\draw[very thick] (0,0) rectangle (10,11);
\draw [dotted](-0.5,11) -- (10.5,11) node[right] {\huge $(1 +\mathcal{O}(\eps))\OPT$};
\draw[very thick]  (0,11) rectangle (10,11.5) node[midway]{\Large medium sized items};
\draw [very thick] (0,11.5) rectangle (10,12.5) node[midway]{\Large remaining horizontal items};
\draw[very thick]  (0,12.5) rectangle (10,13) node[midway]{\Large remaining small items};
\draw[very thick]  (0,13) rectangle (0.5,13+22/5) node[midway,rotate=90]{\Large medium items};
\draw[very thick]  (0.5,13) rectangle (1.5,13+22/5) node[midway,rotate=90]{\Large vertical items};
\draw[very thick]  (5,13) rectangle (10,13+11/5)node[midway]{\Large shifted vertical items};
\draw[very thick]  (5,13+11/5) rectangle (10,13+22/5)node[midway]{\Large shifted vertical items};
\draw[very thick, fill=gray] (0,0) rectangle (3,3);
\draw[very thick, fill=gray] (5,4) rectangle (7,8);
\draw[very thick, fill=gray] (0,7) rectangle (4,5);
\draw[very thick, fill=gray] (0,7) rectangle (3,11);
\draw[very thick, fill=gray] (8,0) rectangle (10,9);
\draw[fill=gray] (6,0.8) rectangle (8,1);
\draw[fill=gray] (3,0.8) rectangle (6,1);
\draw[fill=gray] (3,0) rectangle (7.5,0.4);
\draw[fill=gray] (5,0.4) rectangle (7,0.8);
\draw[fill=gray] (3,0.4) rectangle (5,0.8);
\draw[very thick]  (3,0) rectangle (8,1);
\draw[fill=gray] (3,3.8) rectangle (6,4);
\draw[fill=gray] (0,3.8) rectangle (3,4);
\draw[fill=gray] (0,3) rectangle (7,3.4);
\draw[fill=gray] (5,3.4) rectangle (6.5,3.7);
\draw[fill=gray] (2,3.4) rectangle (5,3.8);
\draw[fill=gray] (0,3.4) rectangle (2,3.8);
\draw[very thick]  (0,3) rectangle (7,4);
\draw[fill=gray] (3,4.7) rectangle (5,4.9);
\draw[fill=gray] (0,4.8) rectangle (3,5);
\draw[fill=gray] (0,4) rectangle (5,4.4);
\draw[fill=gray] (3,4.4) rectangle (5,4.7);
\draw[fill=gray] (2,4.4) rectangle (3,4.8);
\draw[fill=gray] (0,4.4) rectangle (2,4.8);
\draw[very thick]  (0,4) rectangle (5,5);
\draw[fill=gray] (5,10) rectangle (7,10.2);
\draw[fill=gray] (5,10.2) rectangle (6.8,10.3);
\draw[fill=gray] (5,10.3) rectangle (6.8,10.5);
\draw[fill=gray] (5,10.5) rectangle (6.6,10.6);
\draw[fill=gray] (7,10) rectangle (9.7,10.2);
\draw[fill=gray] (7,10.2) rectangle (9.6,10.5);
\draw[fill=gray] (7,10.5) rectangle (9.4,10.6);
\draw[fill=gray] (5,10.6) rectangle (8.9,10.8);
\draw[fill=gray] (5,10.8) rectangle (8.6,10.9);
\draw[very thick]  (5,10) rectangle (10,11);
\draw[fill=gray]  (5,9) rectangle (7.5,9.2);
\draw[fill=gray]  (5,9.2) rectangle (7.3,9.3);
\draw[fill=gray]  (5,9.3) rectangle (7.2,9.5);
\draw[fill=gray]  (7.5,9) rectangle (9.7,9.2);
\draw[fill=gray]  (7.5,9.2) rectangle (9.5,9.3);
\draw[fill=gray]  (7.5,9.3) rectangle (9.4,9.6);
\draw[fill=gray]  (5,9.6) rectangle (8.7,9.8);
\draw[fill=gray]  (5,9.8) rectangle (8.6,9.9);
\draw[very thick]  (5,9) rectangle (10,10);
\draw[fill=gray]  (5,8) rectangle (6.9,8.1);
\draw[fill=gray]  (5,8.1) rectangle (6.6,8.4);
\draw[fill=gray]  (5,8.4) rectangle (6.4,8.7);
\draw[fill=gray]  (5,8.7) rectangle (6.1,8.8);
\draw[very thick]  (5,8) rectangle (7,9);
\draw[fill=gray]  (7,1) rectangle (7.2,9);
\draw[fill=gray]  (7.2,1) rectangle (7.4,9);
\draw[very thick]  (7,1) rectangle (7.5,9);
\draw[fill=gray]  (7.5,1) rectangle (7.7,6);
\draw[fill=gray]  (7.7,1) rectangle (8,6);
\draw[very thick]  (7.5,1) rectangle (8,6);
\draw[fill=gray]  (7.5,6) rectangle (7.7,9);
\draw[fill=gray]  (7.7,6) rectangle (7.9,9);
\draw[very thick]  (7.5,6) rectangle (8,9);
\draw[fill=gray]  (4,5) rectangle (4.3,7);
\draw[fill=gray]  (4.3,5) rectangle (4.4,7);
\draw[fill=gray]  (4.4,5) rectangle (4.6,7);
\draw[fill=gray]  (4.6,5) rectangle (4.8,7);
\draw[very thick]  (4,5) rectangle (5,7);
\draw[fill=gray]  (4,7) rectangle (4.2,8.5);
\draw[fill=gray]  (4.2,7) rectangle (4.6,8.5);
\draw[fill=gray]  (4.6,7) rectangle (4.9,8.5);
\draw[very thick]  (4,7) rectangle (5,8.5);
\draw[fill=gray]  (4,8.5) rectangle (4.3,11);
\draw[fill=gray]  (4.3,8.5) rectangle (4.5,11);
\draw[fill=gray]  (4.5,8.5) rectangle (5,11);
\draw[very thick]  (4,8.5) rectangle (5,11);
\draw[fill=gray]  (3,9) rectangle (3.3,11);
\draw[fill=gray]  (3.3,9) rectangle (3.5,11);
\draw[fill=gray]  (3.5,9) rectangle (3.8,11);
\draw[very thick]  (3,9) rectangle (4,11);
\draw[fill=gray]  (3,7) rectangle (3.3,9);
\draw[fill=gray]  (3.3,7) rectangle (3.6,9);
\draw[fill=gray]  (3.6,7) rectangle (3.9,9);
\draw[very thick]  (3,7) rectangle (4,9);
\draw[fill=gray]  (3,1) rectangle (3.4,3);
\draw[fill=gray]  (3.4,1) rectangle (3.5,3);
\draw[fill=gray]  (3.5,1) rectangle (3.9,3);
\draw[fill=gray]  (3.9,1) rectangle (4.2,3);
\draw[fill=gray]  (4.2,1) rectangle (4.7,3);
\draw[fill=gray]  (4.7,1) rectangle (5,3);
\draw[fill=gray]  (5,1) rectangle (5.1,3);
\draw[fill=gray]  (5.1,1) rectangle (5.3,3);
\draw[fill=gray]  (5.3,1) rectangle (5.4,3);
\draw[fill=gray]  (5.4,1) rectangle (5.8,3);
\draw[fill=gray]  (5.8,1) rectangle (5.9,3);
\draw[fill=gray]  (5.9,1) rectangle (6.3,3);
\draw[fill=gray]  (6.3,1) rectangle (6.6,3);
\draw[very thick]  (3,1) rectangle (7,3);
\draw [thick,decorate,decoration={brace,amplitude=12pt}] (10,13) -- (10,11) node [midway,right,xshift=12pt](B){\huge $\mathcal{O}(\epsilon)\OPT$};
\draw [thick,decorate,decoration={brace,amplitude=12pt}] (10,13+22/5) -- (10,13) node [midway,right,xshift=12pt](B){\huge $(\frac{2}{5} +\mathcal{O}(\epsilon))\OPT$};
\draw [thick,decorate,decoration={brace,amplitude=12pt}] (0,13+22/5) -- (1.5,13+22/5) node [midway,above,yshift=11pt](B){\huge $\mathcal{O}(\epsilon) W$};
\draw [thick,decorate,decoration={brace,amplitude=12pt}] (5,13+22/5) -- (10,13+22/5) node [midway,above,yshift=11pt](B){\huge $\frac{1}{2} W$};
\end{tikzpicture}
\begin{tikzpicture}

\draw[very thick] (0,0) rectangle (10,11);
\draw [dotted](-0.5,10) -- (10.5,10) node[right] {\huge $\OPT$};
\draw [dotted](-0.5,11) -- (10.5,11) node[right] {\huge $(1+5\eps)\OPT$};
\draw[very thick]  (0,11) rectangle (10,11.5) node[midway]{\Large{medium sized item}s};
\draw [very thick] (0,11.5) rectangle (10,12.5) node[midway]{\LARGE{some horizontal items}};
\draw[very thick]  (0,12.5) rectangle (1,12.5 +11/3) node[midway,rotate=90]{\Large{medium items}};
\draw[very thick]  (1,12.5) rectangle (10,12.5 +11/3)node[midway]{\Large{shifted vertical item}s};

\draw[very thick, fill=lightgray] (0,0) rectangle (3,3);
\draw[very thick, fill=lightgray] (5,4) rectangle (7,8);
\draw[very thick, fill=lightgray] (0,7) rectangle (4,5);
\draw[very thick, fill=lightgray] (0,7) rectangle (3,11);
\draw[very thick, fill=lightgray] (8,0) rectangle (10,9);

\draw[fill=lightgray] (6,0.8) rectangle (8,1);
\draw[fill=lightgray] (3,0.8) rectangle (6,1);
\draw[fill=lightgray] (3,0) rectangle (7.5,0.4);
\draw[fill=lightgray] (5,0.4) rectangle (7,0.8);
\draw[fill=lightgray] (3,0.4) rectangle (5,0.8);

\draw[pattern=north east lines]  (3,0) rectangle (8,1);

\draw[fill=lightgray] (3,3.8) rectangle (6,4);
\draw[fill=lightgray] (0,3.8) rectangle (3,4);
\draw[fill=lightgray] (0,3) rectangle (7,3.4);
\draw[fill=lightgray] (5,3.4) rectangle (6.5,3.7);
\draw[fill=lightgray] (2,3.4) rectangle (5,3.8);
\draw[fill=lightgray] (0,3.4) rectangle (2,3.8);

\draw[pattern=north east lines]  (0,3) rectangle (7,4);

\draw[fill=lightgray] (3,4.7) rectangle (5,4.9);
\draw[fill=lightgray] (0,4.8) rectangle (3,5);
\draw[fill=lightgray] (0,4) rectangle (5,4.4);
\draw[fill=lightgray] (3,4.4) rectangle (5,4.7);
\draw[fill=lightgray] (2,4.4) rectangle (3,4.8);
\draw[fill=lightgray] (0,4.4) rectangle (2,4.8);

\draw[pattern=north east lines]  (0,4) rectangle (5,5);

\draw[fill=lightgray] (5,10) rectangle (7,10.2);
\draw[fill=lightgray] (5,10.2) rectangle (6.8,10.3);
\draw[fill=lightgray] (5,10.3) rectangle (6.8,10.5);
\draw[fill=lightgray] (5,10.5) rectangle (6.6,10.6);
\draw[fill=lightgray] (7,10) rectangle (9.7,10.2);
\draw[fill=lightgray] (7,10.2) rectangle (9.6,10.5);
\draw[fill=lightgray] (7,10.5) rectangle (9.4,10.6);
\draw[fill=lightgray] (5,10.6) rectangle (8.9,10.8);
\draw[fill=lightgray] (5,10.8) rectangle (8.6,10.9);

\draw[pattern=north east lines]  (5,10) rectangle (10,11);

\draw[fill=lightgray]  (5,9) rectangle (7.5,9.2);
\draw[fill=lightgray]  (5,9.2) rectangle (7.3,9.3);
\draw[fill=lightgray]  (5,9.3) rectangle (7.2,9.5);
\draw[fill=lightgray]  (7.5,9) rectangle (9.7,9.2);
\draw[fill=lightgray]  (7.5,9.2) rectangle (9.5,9.3);
\draw[fill=lightgray]  (7.5,9.3) rectangle (9.4,9.6);
\draw[fill=lightgray]  (5,9.6) rectangle (8.7,9.8);
\draw[fill=lightgray]  (5,9.8) rectangle (8.6,9.9);
\draw[pattern=dots]  (5,9) rectangle (10,10);

\draw[fill=lightgray]  (5,8) rectangle (6.9,8.1);
\draw[fill=lightgray]  (5,8.1) rectangle (6.6,8.4);
\draw[fill=lightgray]  (5,8.4) rectangle (6.4,8.7);
\draw[fill=lightgray]  (5,8.7) rectangle (6.1,8.8);
\draw[pattern=dots]  (5,8) rectangle (7,9);

\draw[fill=lightgray]  (7,1) rectangle (7.2,9);
\draw[fill=lightgray]  (7.2,1) rectangle (7.4,9);
\draw[pattern=dots]  (7,1) rectangle (7.5,9);

\draw[fill=lightgray]  (7.5,1) rectangle (7.7,6);
\draw[fill=lightgray]  (7.7,1) rectangle (8,6);
\draw[pattern=dots]  (7.5,1) rectangle (8,6);

\draw[fill=lightgray]  (7.5,6) rectangle (7.7,9);
\draw[fill=lightgray]  (7.7,6) rectangle (7.9,9);
\draw[pattern=dots]  (7.5,6) rectangle (8,9);

\draw[fill=lightgray]  (4,5) rectangle (4.3,7);
\draw[fill=lightgray]  (4.3,5) rectangle (4.4,7);
\draw[fill=lightgray]  (4.4,5) rectangle (4.6,7);
\draw[fill=lightgray]  (4.6,5) rectangle (4.8,7);
\draw[pattern=dots]  (4,5) rectangle (5,7);

\draw[fill=lightgray]  (4,7) rectangle (4.2,8.5);
\draw[fill=lightgray]  (4.2,7) rectangle (4.6,8.5);
\draw[fill=lightgray]  (4.6,7) rectangle (4.9,8.5);
\draw[pattern=dots]  (4,7) rectangle (5,8.5);

\draw[fill=lightgray]  (4,8.5) rectangle (4.3,11);
\draw[fill=lightgray]  (4.3,8.5) rectangle (4.5,11);
\draw[fill=lightgray]  (4.5,8.5) rectangle (5,11);
\draw[pattern=dots]  (4,8.5) rectangle (5,11);

\draw[fill=lightgray]  (3,9) rectangle (3.3,11);
\draw[fill=lightgray]  (3.3,9) rectangle (3.5,11);
\draw[fill=lightgray]  (3.5,9) rectangle (3.8,11);
\draw[pattern=dots]  (3,9) rectangle (4,11);

\draw[fill=lightgray]  (3,7) rectangle (3.3,9);
\draw[fill=lightgray]  (3.3,7) rectangle (3.6,9);
\draw[fill=lightgray]  (3.6,7) rectangle (3.9,9);
\draw[pattern=dots]  (3,7) rectangle (4,9);

\draw[fill=lightgray]  (3,1) rectangle (3.4,3);
\draw[fill=lightgray]  (3.4,1) rectangle (3.5,3);
\draw[fill=lightgray]  (3.5,1) rectangle (3.9,3);
\draw[fill=lightgray]  (3.9,1) rectangle (4.2,3);
\draw[fill=lightgray]  (4.2,1) rectangle (4.7,3);
\draw[fill=lightgray]  (4.7,1) rectangle (5,3);
\draw[fill=lightgray]  (5,1) rectangle (5.1,3);
\draw[fill=lightgray]  (5.1,1) rectangle (5.3,3);
\draw[fill=lightgray]  (5.3,1) rectangle (5.4,3);
\draw[fill=lightgray]  (5.4,1) rectangle (5.8,3);
\draw[fill=lightgray]  (5.8,1) rectangle (5.9,3);
\draw[fill=lightgray]  (5.9,1) rectangle (6.3,3);
\draw[fill=lightgray]  (6.3,1) rectangle (6.6,3);
\draw[pattern=dots]  (3,1) rectangle (7,3);

\draw[very thick, red]  (3,0) rectangle (8,1);
\draw[very thick, red]  (0,3) rectangle (7,4);
\draw[very thick, red]  (0,4) rectangle (5,5);
\draw[very thick, red]  (5,10) rectangle (10,11);
\draw[very thick, red]  (5,9) rectangle (10,10);
\draw[very thick, red] (5,8) rectangle (7,9);
\draw[very thick, red]  (7,1) rectangle (8,9);

\draw[very thick, red]  (4,5) rectangle (5,11);
%\draw[very thick, red]  (4,7) rectangle (5,8.5);
%\draw[very thick, red]  (4,8.5) rectangle (5,11);
\draw[very thick, red]  (3,7) rectangle (4,11);
%\draw[very thick, red]  (3,7) rectangle (4,9);
\draw[very thick, red]  (3,1) rectangle (7,3);

\draw[very thick, red] (0,0) rectangle (3,3);
\draw[very thick, red] (5,4) rectangle (7,8);
\draw[very thick, red] (0,7) rectangle (4,5);
\draw[very thick, red] (0,7) rectangle (3,11);
\draw[very thick, red] (8,0) rectangle (10,9);

\draw [thick,decorate,decoration={brace,amplitude=12pt}] (10,12.5) -- (10,11) node [midway,right,xshift=12pt](B){\huge $3\eps\OPT'$};

\draw [thick,decorate,decoration={brace,amplitude=12pt}] (10,12.5 +11/3) -- (10,12.5) node [midway,right,xshift=12pt](B){\huge $\tallItemHeight$};

\draw [thick,decorate,decoration={brace,amplitude=6pt}] (0,12.5 +11/3) -- (1,12.5 +11/3) node [midway,above,yshift=11pt](B){\huge $(3\eps/2)W$};

\draw [thick,decorate,decoration={brace,amplitude=12pt}] (1,12.5 +11/3) -- (10,12.5 +11/3) node [midway,above,yshift=11pt](B){\huge $(1-3\eps/2)W$};
\end{tikzpicture}

}
\caption{Comparison of the structural results. Left the structural result, which leads to $7/5 +\eps$ and right the new structural result, which leads to $4/3+\eps$.
%Structure of a rearranged optimal packing. Above the optimal packing area we use some area for medium sized items, some horizontal and some vertical items, that have to be shifted up.
}
\label{fig:partition}
\end{figure}
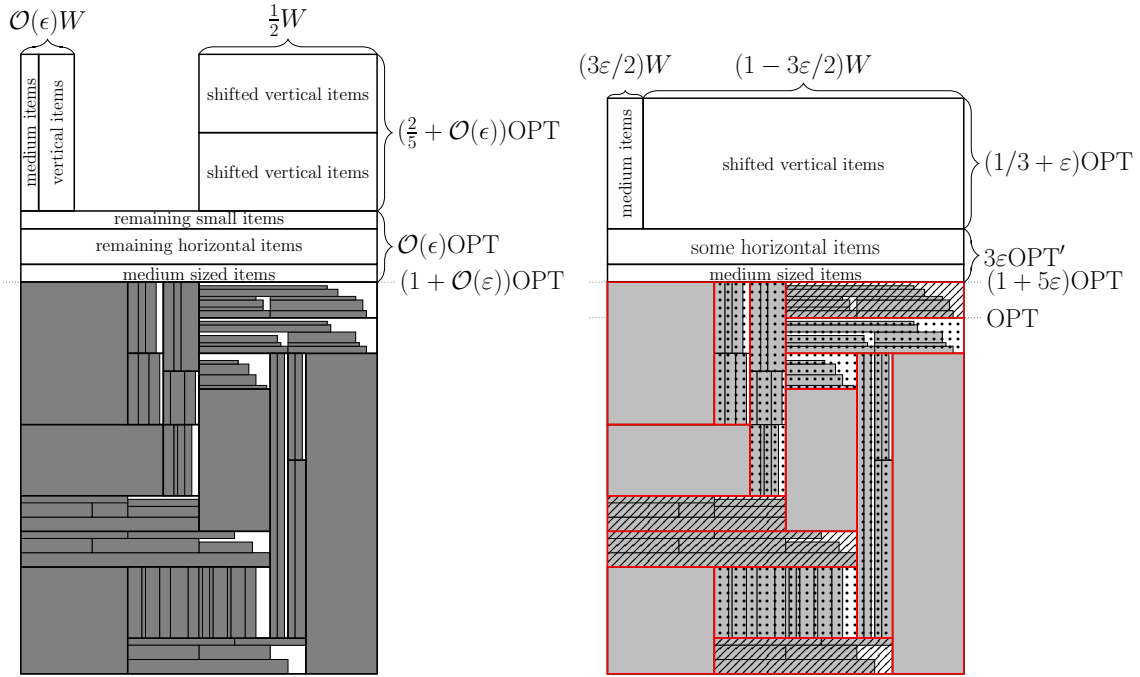

The second improvement to the algorithm in \cite{nadiradzeWiese} lies in the running time of the algorithm. The main idea in \cite{nadiradzeWiese} is to divide the packing area into a constant number of rectangular areas. The number of these areas depends on $\eps$ and can be quite large (i.e. $\Omega(6^{1/\delta})$). Since the width of each of these areas has to be guessed and the number of these boxes influences the choice of $\delta$ this induces a very large running time, i.e. $\mathcal{O}(W^{1/\delta})$, where in the worst case $\delta \in \Omega(1/\exp_6^{1/\eps}(1/\eps))$, where $\exp_6^{1/\eps}(1/\eps) := 6^{\dots^{6^{1/\eps}}}$ and the $6$ occurs $1/\eps$ times (tower of exponents).
We manage to reduce the number of these areas dramatically (i.e. $\mathcal{O}(1/\eps^3\delta^2)$ which  implies $\delta \in \Omega(\eps^{\mathcal{O}(2^{1/\eps})})$). How we find this better partition is described in the proof of Lemma \ref{lma:partitionOflargeBoxes}. So the result of our research is summarized in the following Theorem:

\begin{thm}
For each $\eps > 0$ there is an algorithm that finds a solution for each instance of the strip packing problem with height at most $(4/3 +\eps)\OPT$. The algorithm needs at most $(nW)^{1/\eps^{\mathcal{O}(2^{1/\eps})}}$ operations.
\end{thm}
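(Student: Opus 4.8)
The plan is to turn the two structural statements advertised in the introduction — Lemma~\ref{lma:verticalItemShift} (the vertical--item shift) and Lemma~\ref{lma:partitionOflargeBoxes} (the improved box partition) — into an algorithm running in time polynomial in $nW$ by the standard guess/round/pack pipeline, and then to rescale the error parameters at the end. First I would reduce to the case where $\OPT$ is known: there are only polynomially many (in $n$ and $W$) candidate values for the optimal height, so one binary-searches over them and returns the best feasible packing produced; henceforth fix a guess $\OPT$ together with the promise that a packing of height $\OPT$ exists. Next I would fix a small $\delta=\delta(\eps)$ and classify the items by width and height into \emph{vertical} (tall-and-thin) items of height at least $\tfrac{1}{3}\OPT$, \emph{horizontal} items, \emph{small} items, and a buffer class of \emph{medium} items; a shifting argument over a chain of $\Theta(1/\eps)$ width/height thresholds makes the total area of the medium items at most $\mathcal{O}(\eps)$ times $W\cdot\OPT$, at the cost of forcing $\delta$ to be extremely small — this is precisely where the dependence $\delta\in\Omega(\eps^{\mathcal{O}(2^{1/\eps})})$ is born. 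The medium items are set aside to be stacked on top at the very end, costing $\mathcal{O}(\eps)\OPT$.

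With the rounded instance in hand I would invoke Lemma~\ref{lma:verticalItemShift} to obtain a packing of height $(1+\mathcal{O}(\eps))\OPT$ in which every item of height at least $\tfrac{1}{3}\OPT$ lies inside the original $W\times\OPT$ box, a $\tfrac{1}{2}W$-wide strip of height at most $(\tfrac{1}{3}+\mathcal{O}(\eps))\OPT$ on top of it holds the shifted vertical items (each of height $\le\tfrac{1}{3}\OPT$), and a further $\mathcal{O}(\eps)\OPT$-high strip holds the displaced horizontal and small items — this is the replacement for the $0.4\,\OPT$ loss of \cite{nadiradzeWiese}. Then Lemma~\ref{lma:partitionOflargeBoxes} refines the region occupied by vertical and horizontal items into $\mathcal{O}(1/\eps^{3}\delta^{2})$ axis-parallel boxes, each designated either a vertical container (to be filled by placing vertical items side by side according to width configurations) or a horizontal container (to be filled greedily by a Steinberg/NFDH-type area argument), with the small items absorbed into the remaining free area by an area bound.

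The algorithmic step is then to (i) enumerate the combinatorial skeleton of this partition: the number of boxes is a constant depending only on $\eps$, and the left and right coordinate of each box ranges over at most $W+1$ values, so there are $(nW)^{\mathcal{O}(1/\eps^{3}\delta^{2})}$ candidates, which after substituting the value of $\delta$ is $(nW)^{1/\eps^{\mathcal{O}(2^{1/\eps})}}$; (ii) for each candidate partition solve a configuration LP relaxation that assigns vertical items to vertical containers via width configurations and fractionally assigns horizontal and small items to horizontal containers — it has polynomially many variables and constraints since both the number of containers and the number of distinct rounded item sizes are constant; (iii) round the LP: a basic feasible solution has only $\mathcal{O}(\#\text{containers})$ fractional variables, so all but a constant number of items are integrally assigned and can be packed inside their containers by the greedy/configuration routines, while the constantly many leftover items (height $\le\tfrac{1}{3}\OPT$, small total width) are appended in one extra strip of height $\mathcal{O}(\eps)\OPT$. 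Summing the contributions — $(1+\mathcal{O}(\eps))\OPT$ for the base box, $(\tfrac{1}{3}+\mathcal{O}(\eps))\OPT$ for the shifted vertical strip, and $\mathcal{O}(\eps)\OPT$ for the medium/horizontal/leftover strips — gives height $(\tfrac{4}{3}+\mathcal{O}(\eps))\OPT$, so starting the whole construction with a suitable constant multiple of the target $\eps$ yields $(4/3+\eps)\OPT$; the running time is dominated by the enumeration in~(i), giving the claimed bound.

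The hard part will not be any single step but the bookkeeping that makes them mutually compatible: the rounding that removes the medium items must be done so that the shift of Lemma~\ref{lma:verticalItemShift} still applies to the rounded instance; the partition of Lemma~\ref{lma:partitionOflargeBoxes} must be coarse enough that its boxes use only constantly many distinct widths yet fine enough that the greedy packings inside them waste only $\mathcal{O}(\eps)\OPT$ of height; and the $\mathcal{O}(\#\text{containers})$ stray items from LP rounding must be reabsorbed without triggering a second round of shifting. The crucial quantitative point — and where I expect the real work to lie — is pushing the number of boxes down to $\mathcal{O}(1/\eps^{3}\delta^{2})$ instead of the $\Omega(6^{1/\delta})$ of \cite{nadiradzeWiese}, since only then can $\delta$, and hence the running-time exponent, be merely exponential in $1/\eps$ rather than a tower of exponents; this is exactly the content of Lemma~\ref{lma:partitionOflargeBoxes}, to be proved by a linear-grouping and merging argument on the vertical containers.
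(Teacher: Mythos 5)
Your plan reproduces the paper's architecture faithfully: binary-search over $\OPT$; classify items and discard a small-area buffer of medium items to make $\delta$ the only truly expensive parameter; round; invoke Lemma~\ref{lma:verticalItemShift} to argue that shifted vertical items fit in a strip of height $(\tfrac13+\mathcal{O}(\eps))\OPT$ rather than $0.4\OPT$; invoke Lemma~\ref{lma:partitionOflargeBoxes} to keep the number of boxes at $\mathcal{O}(1/\eps^3\delta^2)$; then enumerate the box skeleton, solve configuration LPs for horizontal and vertical items, round a basic feasible solution, and absorb small items by area arguments. Your identification of the two key lemmas as, respectively, the source of the improved ratio and of the improved running time is exactly right, and the accounting of heights and the running-time exponent $1/\eps^{\mathcal{O}(2^{1/\eps})}$ come out as in the paper.

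Three small misstatements worth correcting in a full write-up. (i)~The shifted vertical items land in a strip of width $(1-\mathcal{O}(\eps))W$, not $\tfrac12 W$; the half-width strip is the older $7/5$-structure (left of Figure~\ref{fig:partition}), whereas the new structure uses nearly the full width but only height $(\tfrac13+\eps)\OPT$. (ii)~Lemma~\ref{lma:partitionOflargeBoxes} is proved not by linear grouping and merging of vertical containers but by a recursive reordering of tall and pseudo items: repeatedly locate a tallest movable item touching the bottom (resp.\ top), sort the movable items between the two vertical lines it induces, and recurse on the smaller instance to one side; each step removes at least one distinct height from consideration, giving $\mathcal{O}(S_{T\cup P}(S_T+S_P))$ subboxes instead of an exponential count. (iii)~Your pipeline mentions configuration LPs for vertical and horizontal items but omits that the tall items are placed by a separate bin-assignment dynamic program from \cite{nadiradzeWiese}: they must be assigned exactly to subboxes of matching height, and the LP machinery for vertical/horizontal items does not do this job. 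None of these invalidate the approach, but they are needed for the argument to go through as described.
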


An algorithm with the same approximation ratio was developed independently and at the same time by G\'alvez, Grandoni, Ingala and Khan \cite{Khan}. They extended their approach to strip packing with rotations, but did not improve the running time.

\paragraph{Related work.}
The first algorithm for the strip packing problem was described by Baker and Coffman \cite{Baker} in 1980. If the rectangles are ordered by descending width, this algorithm has an asymptotic approximation ratio of 3.
The first algorithms with proven absolute approximation ratios of $3$ and $2.7$ were given by Coffman, Garey, Jonson and Tarjan \cite{Coffman}.
After that Sleator \cite{Sleator} presented an algorithm which generates a schedule of height $2OPT(I) + h_{\max}(I)/2$, where $h_{\max}$ is the largest height of the items. So this algorithm has an asymptotic approximation ratio $2$. Schiermeyer \cite{Schiermeyer} and Steinberg \cite{Steinberg} improved this algorithm independently to an algorithm with absolute approximation ratio $2$. Harren and van Stee were the first to beat the barrier of $2$. They presented an algorithm with an absolute approximation ratio of $1.9396$. The so far best absolute approximation is given by the algorithm by Harren, Jansen, Pr\"adel and van Stee \cite{harren20145}, which has an absolute approximation ratio of $(5/3 + \epsilon)OPT(I)$. A reduction from the partition problem gives a lower bound on the absolute approximation ratio of $3/2 \cdot \OPT$ for any polynomial approximation algorithm.  

%asymptotic:
In the asymptotic case, the barrier of $3/2$ can be beaten. 
Golan \cite{Golan} presented the first algorithm with asymptotic approximation ratio smaller than $3/2$. It has an asymptotic approximation ratio of $4/3$. Next Baker \cite{Baker2} gave an algorithm with asymptotic ratio $5/4$. After that Kenyon and R\'emila \cite{Kenyon00} presented an AFPTAS which has an approximation ratio of $(1+\epsilon)\OPT$ and an additive constant $\mathcal{O}(h_{\max}/\epsilon^2)$. Later the additive constant was improved by Jansen and Solis-Oba \cite{jansenSolisOba} at the expense of the processing time of the algorithm. They presented an APTAS, which generates a schedule of height $(1+\epsilon)\OPT + h_{\max}$.

If we allow pseudo-polynomial processing time, there are better approximations possible.   This is thanks to the fact that the underlying partition problem is solvable in pseudo-polynomial time. Jansen and Th\"ole \cite{JansenThoele} presented an algorithm with approximation ratio $3/2 +\epsilon$. Recently Nadiradze and Wiese \cite{nadiradzeWiese} have presented an algorithm which beats the bound $3/2$. It has an approximation ratio of $1.4 +\epsilon$. On the negative side Adamaszek et al. \cite{AdamaszekKPP16} have shown that there is no pseudo-polynomial algorithm with approximation ratio smaller than $\frac{12}{11} \OPT$. This could be improved to a lower bound of $\frac{5}{4}\OPT$ in \cite{HenningJRS17}.

\paragraph{Organization of this paper.}
In the following sections we will prove a structural result, which leads to the algorithm with approximation ratio $(4/3 +\eps)\OPT$. Given an optimal packing with hight $\OPT$, we describe how it can be transformed into an other packing with a certain structure. Since each optimal packing can be transformed, the algorithm simply needs to guess the structure and fill the items via dynamic programming into it. 
In section \ref{Simplify} we describe adjustments to simplify the given set of items. We use these simplifications to find the structure as well as to speed up the packing algorithm.
In section \ref{ratio} we describe the key to find the improved approximation.  In section \ref{runningTime} we describe how the running time can be improved, by reducing number of different possible structures of the transformed packing. In section \ref{Algorithm} we describe the algorithm which finds a packing with height at most $(4/3 +\eps)\OPT$.

\section{Simplifying the input instance}
\label{Simplify}
Let $\eps > 0$, such that $1/\eps \in \mathbb{N}$. Further, let an instance of the strip packing problem be given and consider an optimal solution to it, which has a packing height of $\OPT$. 
%We will show in the following that it is possible to transform this optimal solution into a solution which has a particular structure. By this transformation we add at most $(1/3 + \eps) \OPT $ to the height of the packing, resulting in a packing with height $(4/3 +\eps)\OPT$. 
Notice that we can find the height of the optimal packing by a binary search framework in $\mathcal{O}(\log(\OPT))$ steps, which is polynomial in the input size. The described algorithm would also work if we would approximate the optimal packing height within the range of $(1 + \mathcal{O}(\eps))$, which would result in $\mathcal{O}(\log(1/\eps))$ steps of the framework. However, for the simplification of the notation we use the exact height.

%aufteilung der Items
The first step in the transformation as well as in the algorithm is to partition the set of items $I$. Let $\delta = \delta(\eps) >\mu = \mu(\eps)$ be two suitable constants depending on $\eps$. We define the set of 
large items $L := \{i \in I| h_i \geq \delta \OPT, w_i \geq \delta W\}$,
 tall items $T := \{i \in I \setminus L | h_i \geq \tallItemHeight\}$,
 vertical items $V := \{i \in I \setminus T|h_i \geq \delta \OPT, w_i \leq \mu W\}$,
 medium sized vertical items $M_V := \{i \in I \setminus T|h_i \geq \delta \OPT, \mu W < w_i < \delta W\}$,
horizontal items $H := \{i \in I|h_i \leq \mu \OPT, w_i \geq \delta W\}$,
 small items $\{i \in I| h_i \leq \mu \OPT, w_i \leq \mu W\}$ and
 medium sized horizontal items $M_H := I \setminus (L \cup T \cup V \cup M_V \cup H \cup S)$.

As usual, the medium sized items will be placed outside the optimal packing area. To guarantee that these items do not use to much space outside the packing area, we have to ensure that the total area of these items is small. We achieve this by finding appropriate values for $\delta$ and $\mu$. In the following Lemma, we show that such values do exist. It is a standard argument which follows by the pigeon-hole principle and is often used in packing algorithms, e.g. in \cite{jansenSolisOba}.
 
%wahl von delta und mu
\begin{lma}
\label{lma:sizeDelta}
Consider the sequence $\sigma_0 = x\eps^{y}$, $\sigma_{i+1} = \sigma_{i}^{z}\eps^y$. There is a value $j \in \{0, \dots, f(1/\eps)-1\}$ such that when defining $\delta = \sigma_j$  and $\mu = \sigma_{j+1}$ the total area of the items in $M_V \cup M_H$ is at most $f(\eps)\cdot \OPT \cdot W$.
\end{lma}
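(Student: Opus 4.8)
The plan is the standard ``shifting'' pigeon-hole argument used in this kind of packing result (cf.\ \cite{jansenSolisOba}). First I would record that, for suitable constants $x,y,z$ (take $z\ge 1$, $y\ge 1$, and $x$ small enough that $\sigma_0=x\eps^{y}<1/3$), the sequence is strictly decreasing and contained in $(0,1/3)$: for $0<\sigma_i\le 1$ we have $\sigma_{i+1}=\sigma_i^{z}\eps^{y}<\sigma_i$. Hence $\sigma_0>\sigma_1>\dots>\sigma_{f(1/\eps)}$ determines $f(1/\eps)$ pairwise disjoint open ``layers'' $(\sigma_{k+1},\sigma_k)$, $k=0,\dots,f(1/\eps)-1$.

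For each index $k$ let $B^{h}_{k}:=\set{i\in I : \sigma_{k+1}\OPT<h_i<\sigma_k\OPT}$ and $B^{w}_{k}:=\set{i\in I : \sigma_{k+1}W<w_i<\sigma_k W}$ --- the items whose height, resp.\ width, lies in the $k$-th layer. The main point is the inclusion: if $\delta=\sigma_k$ and $\mu=\sigma_{k+1}$, then $M_V\cup M_H\subseteq B^{h}_{k}\cup B^{w}_{k}$. Indeed, every $i\in M_V$ has $\mu W<w_i<\delta W$ by definition, so $i\in B^{w}_{k}$; and unwinding the complementary definition of $M_H$ shows that an item in $M_H$ either has $\mu\OPT<h_i<\delta\OPT$ (then $i\in B^{h}_{k}$) or has $h_i\le\mu\OPT$ and $\mu W<w_i<\delta W$ (then $i\in B^{w}_{k}$) --- here $\sigma_k\le\sigma_0<1/3$ is what guarantees that none of the items involved is tall, so that the classification into $L,T,V,M_V,H,S$ really leaves exactly these items for $M_H$. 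This case check is the only part that is not pure pigeon-hole, and it is entirely mechanical, so I expect no real obstacle.

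Finally the averaging: the layers are pairwise disjoint, so the sets $B^{h}_{0},\dots,B^{h}_{f(1/\eps)-1}$ are pairwise disjoint and likewise the $B^{w}_{k}$; since the optimal packing has width $W$ and height $\OPT$ we have $A(I)\le \OPT\cdot W$, whence $\sum_{k}A(B^{h}_{k})\le \OPT W$ and $\sum_{k}A(B^{w}_{k})\le \OPT W$, and therefore $\sum_{k=0}^{f(1/\eps)-1}\big(A(B^{h}_{k})+A(B^{w}_{k})\big)\le 2\,\OPT\,W$. Some index $j$ then satisfies $A(B^{h}_{j})+A(B^{w}_{j})\le \tfrac{2}{f(1/\eps)}\,\OPT\,W$, and setting $\delta=\sigma_j$, $\mu=\sigma_{j+1}$ gives $A(M_V\cup M_H)\le \tfrac{2}{f(1/\eps)}\,\OPT\,W$. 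To match the statement one finally fixes the function $f$ (and the constants $x,y,z$) so that $\tfrac{2}{f(1/\eps)}\le f(\eps)$, subject to the additional requirements --- coming from Sections~\ref{ratio} and \ref{runningTime} --- that $f(1/\eps)$ be large enough and that $\mu$ be polynomially small in $\delta$ in the right way; this is exactly where the final running time is pinned down.
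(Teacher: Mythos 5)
Your proof is correct and follows essentially the same pigeon-hole approach as the paper, but it is more careful on one point: the paper asserts that the sequence "delivers a partition of the set of items into $1/f(\eps)$ disjunctive sets," which is slightly imprecise, since a single item can belong to $M_V\cup M_H$ for two different choices of $j$ (once because its height lands in layer $j$, once because its width lands in a different layer $j'$). You handle this cleanly by splitting into the two families $B^h_k$ and $B^w_k$, each of which really is pairwise disjoint, and accepting the resulting factor $2$ in the averaging bound $\tfrac{2}{f(1/\eps)}\OPT W$, which can be absorbed by the choice of $f$. Your case analysis showing $M_V\cup M_H\subseteq B^h_k\cup B^w_k$ (using $\sigma_0<1/3$ so that $\delta\OPT$ stays below the tall threshold) is the right bookkeeping, and matches the intent of the paper's one-line proof.
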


\begin{proof}
This follows by the pigeon-hole principle: The sequence delivers a partition of the set of items into $1/f(\eps)$ disjunctive sets. If the area of each of this sets is larger than $f(\eps) \cdot \OPT \cdot W$, then their total area is larger than $\OPT \cdot W$, which is a contradiction, since all items fit into the area $W \times \OPT$.
\end{proof}

For the structural result it is sufficient to define $x := 1$, $y := 6$, $z := 2$ and $f(\eps):= \eps/6$.
Note that $\sigma_i = \eps^{6(2^{i+1}-1)}$. Since $\sigma$ is strictly monotonic decreasing we have $\delta \geq \sigma_{6/\eps-1}$. So we have the following lower bound:
 $\delta \geq \eps^k$, for $k= 6\cdot 2^{6/\eps}$. 
Since the area of the medium sized items is small, they can be placed above the packing without using too much extra space. 

%rundung der Items
The next step in our transformation is to round the heights of the items in $L \cup T \cup V$ and shift them such that they start and end at certain heights. Our rounding strategy is similar to the strategy in \cite{nadiradzeWiese} but we manage to reduce the number of different heights. The next Lemma describes our rounding procedure more formally.

\begin{lma}
\label{lma:rounding}
Let $\delta = \eps^k$ for some value $k \in \mathbb{N}$. At a loss of at most a factor $1+2\eps$ in the approximation ratio we can ensure that each item $i \in L \cup T \cup V$ with $\eps^{l-1}  \OPT > h_i \geq \eps^{l}  \OPT$ for some $l\in\mathbb{N}_{\leq k}$ has height $h_i' = k_i  \eps^{l+1}  \OPT$ for some $k_i \in \{1/\eps,\dots,1/\eps^2\}$. Furthermore the items' y-coordinates can be placed at multiples of $\eps^{l+1} \OPT$. 
\end{lma}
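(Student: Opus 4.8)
The plan is to handle the items in $L \cup T \cup V$ group by group according to their "size class": for $l \in \{1,\dots,k\}$, let $S_l$ be the set of items $i \in L \cup T \cup V$ with $\eps^{l-1}\OPT > h_i \geq \eps^{l}\OPT$. (Items with $h_i \geq \eps^0 \OPT = \OPT$ cannot occur strictly, and items below $\eps^k\OPT = \delta\OPT$ are excluded by definition of $L,T,V$, so every such item lies in exactly one $S_l$ with $l \le k$.) Within a fixed class $S_l$, every item has height in the interval $[\eps^l\OPT, \eps^{l-1}\OPT)$, which is partitioned into the $1/\eps - 1$ sub-intervals $[k_i\eps^{l+1}\OPT, (k_i+1)\eps^{l+1}\OPT)$ for $k_i \in \{1/\eps,\dots,1/\eps^2-1\}$. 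I would round each item's height \emph{up} to the next multiple of $\eps^{l+1}\OPT$, i.e. set $h_i' = k_i\eps^{l+1}\OPT$ with $k_i = \lceil h_i/(\eps^{l+1}\OPT)\rceil \in \{1/\eps,\dots,1/\eps^2\}$; this increases each height by at most a multiplicative factor $1 + \eps^{l+1}\OPT / h_i \leq 1 + \eps^{l+1}\OPT/(\eps^l\OPT) = 1+\eps$.

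The second and main part is to argue that these rounded items can be re-placed so that their $y$-coordinates are multiples of $\eps^{l+1}\OPT$, at the cost of one more factor $1+\eps$ (total $(1+\eps)^2 \le 1+2\eps$ for $\eps$ small, or handled via a cleaner bookkeeping so the combined loss is at most $1+2\eps$). The standard device is to process size classes from largest height to smallest ($l=1,2,\dots,k$) and apply a \emph{shifting/slicing} argument per class: take the optimal packing, cut the strip into horizontal slabs, and for each item of class $S_l$, shift it upward to the nearest grid line that is a multiple of $\eps^{l+1}\OPT$. Because an item of class $S_l$ has height $\ge \eps^l\OPT = (1/\eps)\cdot\eps^{l+1}\OPT$, snapping its bottom to a grid line of spacing $\eps^{l+1}\OPT$ moves it up by less than $\eps^{l+1}\OPT \le \eps h_i$; doing this consistently across a slab of $\OPT$-height creates a cumulative upward displacement bounded by $\eps\OPT$ (a geometric-series / charging argument over the at most $1/\eps^{l}$ stacked items, or better: one treats each slab of the $\eps^{l-1}\OPT$-grid separately so only $O(1/\eps)$ items stack and the shift telescopes). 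One must also make sure the rounding of \emph{taller} classes, already fixed on a coarser grid, is compatible with the finer grid used for class $l$ — this works because $\eps^{l+1}\OPT$ divides $\eps^{l'+1}\OPT$ for $l' < l$, so coarser grid lines are also finer grid lines, and previously-placed items stay grid-aligned.

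The main obstacle I expect is controlling the \emph{accumulation} of upward shifts so the total added height stays within a $1+2\eps$ factor rather than degrading by $\eps$ per size class (which would be disastrous, since there are $k = \Theta(2^{1/\eps})$ classes). The resolution is that the shifts do not accumulate across classes: one fixes a \emph{single} common target resolution argument by observing that at any fixed horizontal position $x$, the items of $L\cup T\cup V$ stacked above $x$ have total height at most $\OPT$, and each contributes an upward push of at most $\eps$ times \emph{its own} height when snapped to its class-grid; summing, the topmost item is pushed up by at most $\eps\cdot(\text{sum of heights below it}) \le \eps\OPT$. Hence the whole construction raises the packing height from $\OPT$ to at most $(1+\eps)\OPT$ for the $y$-shifts and a further $\eps\OPT$ is absorbed by the height rounding, giving total height $(1+2\eps)\OPT$; rescaling $\OPT$ to $\OPT' := (1+2\eps)\OPT$ then yields the claimed loss of a factor $1+2\eps$ in the approximation ratio, with all items of $L\cup T\cup V$ having rounded heights and grid-aligned $y$-coordinates as stated. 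The horizontal coordinates and the placement of all other items ($H$, $S$, medium) are left untouched in this step.
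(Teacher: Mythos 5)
Your proposal is correct in spirit but takes a genuinely different route from the paper. The paper handles both the height rounding and the $y$-alignment with a single global \emph{stretch}: it maps every point $(x,y)$ of the optimal packing to $(x,(1+2\eps)y)$, so each item now occupies a ``ghost'' region of height $(1+2\eps)h_i$ while retaining height $h_i$; since $\eps h_i \geq \eps^{l+1}\OPT$, there is at least one full grid spacing of slack above and below the item inside its own ghost region, so the bottom and top can be snapped to multiples of $\eps^{l+1}\OPT$ and the height rounded to $h_i'=\lceil h_i/(\eps^{l+1}\OPT)\rceil\eps^{l+1}\OPT$ \emph{without leaving the ghost region}. Disjointness is then immediate, because the stretch preserves disjointness and each item stays inside its own stretched copy; the whole $1+2\eps$ loss is paid once, up front. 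Your proposal instead rounds heights and then shifts items upward individually, bounding the accumulated displacement by a chain/discrepancy argument ($\sum_j \eps h_j \le \eps\OPT$ over any vertical chain). That approach also works, but it requires more delicate bookkeeping that your write-up does not fully supply: the cumulative push on an item $i$ is not governed by a single column at a ``fixed horizontal position $x$'' but by the worst chain of horizontally-overlapping items below $i$, which may zigzag across $x$-positions. One must define the new $y$-coordinate of each item recursively (e.g.\ bottom-to-top, as the maximum over overlapping predecessors of their new top, then snapped up to the grid) and then verify that \emph{any} such chain has total height at most $\OPT$ (true, because the intervals $[y_j^{\mathrm{bot}},y_j^{\mathrm{top}}]$ along a chain are pairwise disjoint). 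Also, your first bound ``$(1+\eps)^2\le 1+2\eps$'' is false for every $\eps>0$; the additive accounting you give at the end ($\eps\OPT$ for the grid push plus $\eps\OPT$ for the height rounding) is the right fix and does give $(1+2\eps)\OPT$. The main thing the paper's stretching buys you is that all of this per-item, per-chain case analysis — and in particular the proof that no two items overlap after shifting — collapses to the one-line observation that each item stays inside its own stretched region.
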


\begin{proof}
Since the rounding strategy is similar to \cite{nadiradzeWiese}  the proof is with exception of the choice of $\gamma$ quite analogue. 

Let a packing in the strip of height $\OPT$ be given. We stretch it by a factor of $1+2\eps$. 
This means each point $(x,y)$ in the original strip corresponds to the point $(x,(1+2\eps)y)$ in the stretched packing. 
Let $i \in L\cup T \cup V$ be an item with $\eps^{l-1} \cdot \OPT \geq h_i \geq \eps^{l} \cdot \OPT$ and 
let $y_T$ and $y_B$ be the y-coordinates of its top and bottom edges, respectively, in the original strip. 
Furthermore, we define the stretched y-coordinates as $\bar{y}_T := (1 +2\eps)y_T$ and $\bar{y}_B := (1 +2\eps)y_B$. 
As a consequence we have $\bar{y}_T - \bar{y}_B = (1+2\eps)(y_t -y_b) = (1+2\eps)h_i$. 
Now we change the y-coordinates of $i$ in the new strip to $y_B' := \bar{y}_B + \eps h_i$ and $y_T' := \bar{y}_T - \eps h_i$. 
We get that $y_T' - y_B' = \bar{y}_T - \bar{y}_b - 2 \eps h_i = (1+2\eps)h_i - 2\eps h_i = h_i$. 
We have $h_i \geq \eps^l \OPT$, which implies $\bar{y}_T -y'_T = y'_B -\bar{y}_B = \eps h_i \geq \eps^{l+1} \cdot \OPT$. 
This ensures, that for the interval $[y'_T, \bar{y}_T]$ there is an integer $k_T$ such that $k_T \cdot\eps^{l+1} \cdot \OPT\in [y'_T, \bar{y}_T]$, analogously there exists an integer $k_B$ such that $k_B \cdot\eps^{l+1} \cdot \OPT \in [\bar{y}_B, y'_B]$. 
We change the y-coordinates of item $i$ to $y''_T = k_T \cdot \eps^{l+1} \cdot \OPT$ and $y''_B = k_B \cdot \eps^{l+1} \cdot \OPT$. 
It can happen, that $y''_T -y''_B > h_i' := \lceil h_i / (\eps^{l+1} \OPT) \rceil \cdot \eps^{l+1} \cdot \OPT$. 
In this case we increase $y''_B$ by $(y''_T -y''_B) - h'_i$, such that $h_i' = y''_T -y''_B$. 
Item $i$ does not intersect an other item, since it is placed inside of the stretched version of itself.  
Thus when we change the height $h_i$ of each item $i \in L \cup T \cup V$ to $h'_i = \lceil h_i / (\eps^{l+1} \OPT) \rceil \cdot \eps^{l+1} \cdot \OPT $, where $l \in \mathbb{N}$ is chosen such that $\eps^{l-1}\OPT\geq h_i \geq \eps^l\OPT$.
Note that $\lceil h_i / (\eps^{l+1} \OPT) \rceil \in \{1/\eps,\dots,1/\eps^2\}$ since $\eps^{l-1}\OPT\geq h_i \geq \eps^l\OPT$.
Since the $h'$ does not exceed the stretched item height, we increased the optimal solution value by at most a factor $1+2\eps$.
\end{proof}

How many different heights do we get by this rounding strategy? Each item $i$ with height $h_i \in [\eps^{l-1},\eps^l]$ we round  to the next larger multiple of $\eps^{l+1} \cdot \OPT$. Since $1/\eps^2 \cdot \eps^{l+1} = \eps^{l-1}$ and $1/\eps \cdot \eps^{l+1} = \eps^l$ we have at most $1/\eps^2 -1/\eps \leq 1/\eps^2$ different multiples of $\eps^{l+1}$ in the interval $[\eps^{l-1},\eps^l]$. Since $\delta \geq \eps^k$ we have at most $k$ of this intervals. So in total we have at most $k/\eps^2$ different sizes.

\section{Improving the approximation ratio}
\label{ratio}
We apply the rounding according to Lemma \ref{lma:rounding}, obtaining a packing where each item in $L \cup T \cup V$ starts and ends at multiples of $\delta\eps$. The rounded packing has a height of $(1+2\eps)\OPT$. Similar to \cite{nadiradzeWiese}, we will show that we can partition its packing area into a constant number of rectangular areas, such that each of these areas contains items just from one of the following sets: $L$, $H \cup S$, or $T \cup V \cup S$. We allow items from $H \cup S$ or $T \cup V \cup S$ to be positioned into more than one area. We will see that there are simple algorithms to place the items from $L$ or $H$ into their rectangular areas, while it is still difficult to place the items from $T \cup V$, without increasing the height of the packing too much. Note that there are at most $1/\delta^2$ large items since they cover an area of at least $\delta^2 W \OPT $.

\begin{lma}
\label{lma:roughPartition}
We can partition the area $W \times (1 +2\eps)\OPT$ into at most $ 4(1+2\eps)/(\eps\delta^2)$ rectangular areas called boxes. The set of these boxes can be partitioned into sets $\boxL, \boxH$ and $\boxTV$ such that
\begin{itemize}
\item boxes in $\boxL$ are identified by items $i\in L$, i.e. they have box height $h_i$ and box width $w_i$,
\item $\mathcal{B}_H$ consists of at most $(1+2\eps)/(\eps\delta^2) - |L|/\delta$ many boxes of height $\eps\delta \OPT$, each of them containing at least some item in $H$ but only items in $H\cup S$,
%\item the total area of horizontal items contained in more than one box is at most $\mu W/(\eps \delta)$
\item $\boxTV$ consists of at most $3(1+2\eps)/(\eps\delta^2)$ many boxes, each of them containing items in $T\cup V\cup S$,
\item no item in $H$ is intersected vertically by any box border,
\item no item in $T \cup V$ is intersected horizontally by any box border.
\end{itemize}
\end{lma}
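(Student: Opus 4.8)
The plan is to transform the rounded optimal packing of Lemma~\ref{lma:rounding} --- in which every item of $L\cup T\cup V$ starts and ends at a multiple of $\eps\delta\OPT$ --- into the desired box structure in three stages: peel off the large items, slice the remainder into thin horizontal slabs, and split each slab vertically along the sides of the tall/vertical items and the large-item holes; empty space is absorbed into neighbouring boxes.

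First I would dispose of the easy items. Every $i\in L$ becomes its own box of dimensions $w_i\times h_i$ placed where $i$ lies; these form $\boxL$, and since each large item covers area at least $\delta^2 W\OPT$ inside an area of $(1+2\eps)W\OPT$ we get $|\boxL|=|L|\le(1+2\eps)/\delta^2$. I would also delete the medium items $M_V\cup M_H$: by Lemma~\ref{lma:sizeDelta} their total area is at most $(\eps/6)W\OPT$, so at the very end they can be restacked above everything without affecting this lemma's claim. What remains is the set $H\cup S\cup T\cup V$ together with rectangular holes left by the removed items.

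Next I would cut the strip by the horizontal lines $y=t\,\eps\delta\OPT$ for $0\le t\le N:=(1+2\eps)/(\eps\delta)$, giving $N$ slabs of height $\eps\delta\OPT$. Because every item of $L\cup T\cup V$ has integral-multiple-of-$\eps\delta\OPT$ top and bottom coordinates and the large ones are already gone, none of these lines cuts a $T\cup V$ item; they may cut items of $H\cup S$, which is allowed. Since moreover each $T\cup V$ item has height at least $\delta\OPT=(1/\eps)\cdot\eps\delta\OPT$, inside any slab every $T\cup V$ item, and every large-item hole, appears as a vertical strip crossing the slab from its bottom edge to its top edge. Hence no item of $H\cup S$ meeting a slab can horizontally overlap these strips, so the width $[0,W]$ of the slab is partitioned into intervals lying inside a $T\cup V$ item, large-item holes, and intervals available only to $H\cup S$. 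Two consecutive maximal ``runs'' of $T\cup V$-strips (merging across gaps that hold only $S$-items) are always separated by a gap containing an $H$-item or a large-item hole, hence of width $\ge\delta W$; as these separating gaps are pairwise disjoint, a slab has at most $1/\delta$ of them, hence at most $1/\delta+1$ $T\cup V$-runs and at most $1/\delta$ maximal $H\cup S$-stretches that actually hold an $H$-item. Each such $H\cup S$-stretch becomes a box of height $\eps\delta\OPT$ in $\boxH$; a width count inside each slab, summed over the $N$ slabs and discounting the stretches blocked by large-item holes, yields $|\boxH|\le (1+2\eps)/(\eps\delta^2)-|L|/\delta$. All vertical box borders constructed so far sit on a side of a $T\cup V$ item, a side of a large-item hole, or the strip boundary, so no $H$-item --- each of which lies strictly inside a single $H\cup S$-stretch --- is cut vertically.

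The crux is the construction and count of $\boxTV$. A box of $\boxTV$ must contain the full vertical extent of each $T\cup V$ item it meets (the ``no horizontal cut'' condition), so I cannot simply keep the height-$\eps\delta\OPT$ slab boxes; instead I would group the $T\cup V$ items (together with interspersed $S$-items) into clusters and take each cluster's bounding rectangle, widened only up to the nearest large-item hole or $H$-item on each side and extended vertically to span all slabs that its items cross --- so its horizontal borders lie on grid lines that bound, not cross, every $T\cup V$ item inside, and its vertical borders again avoid all $H$-items. The number of clusters must be bounded independently of $|T\cup V|$: a new $\boxTV$-box can only begin at a grid line where the profile of $T\cup V$-territory changes, i.e.\ where some $T\cup V$ item starts or ends or a large-item hole begins, and at each of the $N$ grid lines only $O(1/\delta)$ runs change, so $|\boxTV|=O(N/\delta)=O((1+2\eps)/(\eps\delta^2))$, which with the bookkeeping can be tightened to the factor $3$ in the statement. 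Making this charging rigorous --- guaranteeing that every cluster's region really is an axis-parallel rectangle, that it never overlaps a $\boxL$-box, and that no $T\cup V$ item is left straddling a box border --- is the one genuinely delicate point; the large-item stage, the slab cut, and the per-slab width bounds are routine pigeonhole.
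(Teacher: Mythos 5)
Your construction of $\boxL$ and $\boxH$ is essentially the same as the paper's: each $L$-item becomes its own box; the remaining strip is cut into slabs of height $\eps\delta\OPT$ (using Lemma~\ref{lma:rounding}'s alignment so no $T\cup V$ item is cut horizontally); inside each slab the $T\cup V$ items and $L$-holes span the full slab height, so the slab decomposes horizontally into alternating $T\cup V$-runs and $H\cup S$-stretches; and the bound on $|\boxH|$ follows because each $H$-item has width $\geq \delta W$. So far so good, and the accounting is correct.

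The gap is in $\boxTV$, and you correctly identify where it is but do not close it. Your ``cluster bounding rectangle'' construction has a real problem: the bounding rectangle of a cluster of $T\cup V$ items (widened sideways up to the nearest $H$-item or $L$-hole) can still contain an $H$-item or $L$-hole that intrudes from \emph{above or below} a shorter member of the cluster, not from the side. Concretely: if one tall item in the cluster ends at height $0.4\,\OPT$ and another reaches $0.6\,\OPT$, an $H$-item sitting over the shorter one at height $0.5\,\OPT$ lands inside the cluster's bounding rectangle, violating the requirement that $\boxTV$ boxes contain only $T\cup V\cup S$. Fixing this by recursively splitting the cluster exactly recreates the problem you were trying to solve, and the charging argument (``new box starts only where the run profile changes, $O(1/\delta)$ per grid line'') has not been verified to survive such splits. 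The paper avoids clusters entirely: after defining $\boxL$ and $\boxH$, it extends a vertical line segment from each left and right edge of every $\boxH$ box and every $L$-item, upward and downward, until the segment meets another $\boxH$ box, $L$-item, or the strip boundary. The maximal rectangles cut out by these segments form $\boxTV$; they are rectangles by construction (this is essentially a vertical trapezoidal decomposition of the complement of $\boxL\cup\boxH$), they never contain an $H$- or $L$-item, and the $3(1+2\eps)/(\eps\delta^2)$ bound follows from the observation that there are at most $(1+2\eps)/(\eps\delta^2)$ objects in $\boxL\cup\boxH$, each contributing two line segments, each of which bounds only $O(1)$ $\boxTV$-rectangles. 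You should replace the clustering idea with this line-extension construction; the rest of your argument then goes through.
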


\begin{proof}
Let us consider our stretched optimal packing in the strip with height $(1+2\eps)\OPT$. 
In the first step, we give each item in $L$ its personal box, which has exactly the dimensions of that item. 
Since the item does not overlap any other item the box does not either. 

In the next step, we define boxes for the horizontal items: We iterate over the strips of height $\eps\delta$ from bottom to top. We know that each item in $L\cup T \cup V$ starts and ends at a multiple of $\eps\delta$, so none of this items does start within one of these strips. 

We partition the strips in the following way: we start at the left of the strip and iterate to the right until we meet the first item in the set $H$ or in the set $L\cup T \cup V$. We remember which item we met first, and draw a vertical line when we meet the first item out of the other set. 
We remember from which set the item came we had just met. We iterate further to the right until we again met an item from the other set or the border of the strip.  
If we have not yet met the border of the strip, we draw a vertical line and continue as before. When we met the border of the strip, we look at our vertical lines. 
If we look at the set of items between two vertical lines we see that they either contain items from $L \cup T \cup V$ or items from $H$, but there is no set which contains items from $H$ and $L \cup T \cup V$ as well. 
The area between two vertical lines, which contains items from $H$ defines a box for items in $H$. 
Since each item in $H$ has a width of at least $\delta W$ we get at most $1/\delta -1$ of these boxes for horizontal items per horizontal strip of height $\eps\delta$. Since we have $(1+2\eps)/(\eps\delta)$ of these strips we get at most $(1+2\eps)/(\eps\delta) \cdot (1/\delta -1) \leq (1+2\eps)/(\eps\delta^2) -2$ of these boxes. We call the set of this boxes $\mathcal{B}_H$.

Now we describe how to get the boxes for the items in $T \cup V$: For each of the boxes $\mathcal{B}_H$ and the items in $L$ we draw vertical lines on the left and on the right side, until they meet the first item out of $L$ or the first box in $\mathcal{B}_H$. 
The area, which is bounded within two of this lines, defines a box for vertical and tall items.
We call this set of boxes $\mathcal{B}_{T\cup V}$.
All together we have at most $(1+2\eps)/(\eps\delta^2) -2$ large items and boxes in $\mathcal{B}_H$. Each of this boxes or items produces two lines. Additionally, we have the strip border which gives two additional lines. Each line touches at most $3$ boxes in $\mathcal{B}_{V \cup T}$. Each of this boxes needs two lines as a border. So in total we have at most  $3(1+2\eps)/(\eps\delta^2)$ boxes in $\mathcal{B}_{V\cup T}$.
%\textbf{TODO: eventuell lässt sich die 3 auch noch verringern zu 2}
\end{proof}

Let us considerer boxes $B \in \boxTV$ with $h(B)$ at least $(2/3+2\eps)\OPT$. The key in \cite{nadiradzeWiese} was to rearrange the items in this boxes, such that the tall items can be placed into a constant number of subboxes, which contain just items with the same rounded height. By this rearrangement many vertical items have to be shifted above the optimal packing area. The key for a better approximation is to show the possibility that some of these items can be placed back into this rearranged packing. We will prove this possibility in Lemma \ref{lma:verticalItemShift}.

For simplification, we remove all small items from the boxes $\boxH$ and $\boxTV$. Let $\hatboxTV \subseteq \boxTV$ be the set of boxes with height at least $(2/3 +2\eps)\OPT$ and $\checkboxTV := \boxTV \setminus \hatboxTV$. Let us assume that we are allowed to slice all vertical items horizontally as often as we desire. If we consider a packing of items, where some of the vertical items could be sliced vertically, we call it fractional packing. We call all tall items, which are not crossed by any box border movable items and all other tall items unmovable items.
The first step in the rearrangement is to shift tall items up or down respectively such that all movable tall items either touch the top or the bottom of the box.The existence of this rearrangement was already shown (see Lemma 1.4 in \cite{nadiradzeWiese}).

\begin{lma}[\cite{nadiradzeWiese}]
\label{lma:shifttallItems}
If we are allowed to slice the items in $V$ vertically, we can ensure the following:
In each box $B \in \boxTV$ there is a packing where all movable tall items are either touching the top or the bottom of the box.
\end{lma}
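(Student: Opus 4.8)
The plan is to argue one box at a time and to exploit that a box of the rounded packing is too short for three tall items to share a vertical line. First I would fix a box $B\in\boxTV$; after the small items have been removed it holds only items of $T$ and $V$. Since the rounded packing lives inside height $(1+2\eps)\OPT$, we have $h(B)\le(1+2\eps)\OPT$, and every tall item has height at least $\tallItemHeight$; because $3\tallItemHeight=(1+3\eps)\OPT>(1+2\eps)\OPT$, no vertical line inside $B$ lies in the width‑interval of three tall items (movable and unmovable counted together). This height bound on $B$ is the only structural input, and it is exactly what makes the rearrangement possible.

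Next I would colour the movable tall items of $B$. Projecting each of them onto its width‑interval on the $x$‑axis, the intersection graph of the resulting intervals is triangle‑free by the observation above; an interval graph is chordal, and a chordal graph without triangles has no cycles at all, so this graph is a forest and hence properly $2$‑colourable. I would fix such a colouring with classes ``bottom'' and ``top'', slide every ``bottom'' item down until its lower edge lies on the bottom edge of $B$, slide every ``top'' item up until its upper edge lies on the top edge of $B$, and leave the unmovable tall items where they are. The movable tall items are then pairwise non‑overlapping: two of the same colour have disjoint width‑intervals, so they cannot clash once aligned to the same edge; two of opposite colour that do share an $x$‑coordinate already sat one above the other in the original packing, so their heights sum to at most $h(B)$, and the one pushed to the bottom and the one pushed to the top still leave a (possibly empty) gap. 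Since this works for \emph{any} proper $2$‑colouring, the remaining freedom — which colour class goes up and which goes down, chosen independently on each connected component of the colouring forest — can be spent on the unmovable items.

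Finally I would handle the unmovable tall items and refill the vertical items as a fractional packing. If a movable tall item $i$ shares an $x$‑coordinate with an unmovable tall item $j$, then in the original packing $i$ lay entirely below or entirely above $j$; aligning $i$ to the corresponding edge of $B$ keeps it clear of $j$. So each movable item that meets an unmovable item has a \emph{required} side, and one has to orient the connected components of the colouring forest so that all required sides are realised simultaneously; this consistency check — and the fact that it can always be met — is the technical core of the statement, carried out in Lemma~1.4 of \cite{nadiradzeWiese}, which I would invoke here. Once the tall items are placed, at every $x$‑coordinate the free vertical space of $B$ keeps its total length (the tall items moved rigidly), and every maximal free segment that existed before the shift still fits inside some maximal free segment afterwards; hence, slicing each vertical item of $B$ vertically into columns of width $1$ and sliding each column into the free segment at its own $x$‑coordinate reinstates all vertical material and yields the desired fractional packing. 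The hard part is precisely the orientation‑consistency bookkeeping between movable and unmovable tall items; the forest‑colouring and the area/segment argument around it are elementary.
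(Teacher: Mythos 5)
The paper does not actually supply a proof of this statement --- it simply cites Lemma~1.4 of \cite{nadiradzeWiese}, so there is no in‑paper argument to match. Your sketch reconstructs the geometry in a sensible way, but it contains a circular step: after reducing everything to the question of whether the up/down orientations of the connected components of the colouring forest can be chosen consistently with the sides forced by the unmovable items, you invoke ``Lemma~1.4 of \cite{nadiradzeWiese}'' to settle exactly that question. Since the statement you are asked to prove \emph{is} that lemma, invoking it there proves nothing; the consistency claim is the entire content, and you have to establish it yourself.

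The good news is that the gap you flag as the ``technical core'' closes with a short direct argument that also makes the forest‑and‑$2$‑colouring machinery unnecessary. Because every tall item has height strictly greater than $(1/3+\eps)\OPT/(1+2\eps)\cdot h(B) > h(B)/3$, at most two tall items of $B$ can share any $x$‑coordinate (which you already observed), and, crucially, a single tall item $i$ can never be the upper one of such a pair at one $x$‑coordinate and the lower one at another: if $i$ sits above some tall $j$ then $i$'s bottom edge lies above height $h(B)/3$, hence $i$'s top edge lies above $2h(B)/3$; if in addition $i$ sat below some tall $k$, then $k$ would have to start above $2h(B)/3$ and still have height more than $h(B)/3$, exceeding $h(B)$ --- impossible. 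So each movable tall item that overlaps any other tall item in $x$ is unambiguously ``above'' or ``below'', and you can send the ``below'' ones to the bottom of $B$, the ``above'' ones to the top, and items that overlap no other tall item to either side. This assignment is automatically a proper $2$‑colouring of your interval graph, and it automatically respects every constraint coming from the unmovable items, so no orientation bookkeeping is needed. Your remaining observations --- that same‑side items have disjoint $x$‑ranges, that opposite‑side items which shared an $x$‑column have total height at most $h(B)$, and that at every $x$‑column the free segments only merge so the (vertically sliced) items of $V$ can be refitted --- then go through exactly as you wrote them and finish the proof.
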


Let us from now on assume that all movable tall items are touching the top or the bottom of the boxes in $\boxTV$. In Section \ref{runningTime}, we will reorder the tall items, such that we generate few subboxes for tall items.
%The number of these boxes depends solely on $\eps$, so it is possible to efficiently guess the size of these subboxes and their positions. 
It can happen that not all vertical items can be placed into the box after this reordering. All vertical items that can not be placed have to be shifted above the packing area. Since we have just the area $W \times (1/3+\eps)\OPT$ to pack the shifted items, we have to be careful, not to shift too many items. 
For this purpose, we introduce pseudo items, which only contain vertical items and touch the bottom or the top of a box in $\boxTV$, as described in \cite{nadiradzeWiese}.

For $B \in \hatboxTV$ let $(x_l,y_b)$ be the left bottom corner and $(x_r,y_t)$ the top right corner respectively. Let $X = \{x_1,x_2, \dots, x_{k-1}\}$ be the x-coordinates of the tall items in the packing, ordered in increasing order and define $x_0 = x_l$ and $x_k = x_r$. Consider a pair $x_{j-1},x_j$. 
If $[x_{j-1},x_j) \times [y_b,y_t)$ does not overlap any tall item, we introduce one pseudo item with size $[x_{j-1},x_j) \times [y_b,y_t)$. 
Consider the case that $[x_{j-1},x_j) \times [y_b,y_t)$ overlaps with exactly one tall item $i$ of height $h_i$. 
If $i$ is touching the bottom we introduce one pseudo item which covers the area $[x_{j-1},x_j) \times [y_b +h_i,y_t)$ and if $i$ touches the top boundary we introduce a pseudo item which covers the area $[x_{j-1},x_j) \times [y_b,y_t -h_i)$. 
The last case is that $[x_{j-1},x_j) \times [y_b,y_t)$ overlaps exactly two tall items. In this case we introduce no pseudo item. Let $P$ be the set of all the introduced pseudo items. All vertical items that are crossed by a pseudo item border are sliced along that border. Note that vertical items cross only vertical pseudo item borders. Let $B_V$ be the set of all (slices of) vertical items, which are contained in $B$ but not covered by any pseudo item. 

%pseudo items einführen
%unterschiedung zwischen Boxen der Höhe größer 2/3 OPT und kleiner 2/3 opp. 
If we reorder the tall and pseudo items in a box $B \in \hatboxTV$, it can happen, that it is not possible to place all the items in $B_V$ in $B$. Unlike in \cite{nadiradzeWiese} we have to ensure that at least a constant amount of these items can be placed in $B$.
%We are now going to analyze which amount of these items can be placed in any reordering. To do this, we define for any order of tall items a set of containers for vertical items induced by a given ordering of the items $T \cup P$.  
For each x-coordinate $x_i \in [x_l,\dots,x_r-1]\cap \mathbb{N}$ let $b \in T\cup P$ be the item, which touches the bottom of $B$ and $t \in T\cup P$ be the item touching the top, each intersecting the x-axis at $x_i + 1/2$. We define a container $C_i$ which touches $t$ and $b$ and spans from $x_i$ to $x_i+1$. Let $\mathcal{C}_B$ be the set of all container for a given placement of tall and pseudo items in $B$ (see figure \ref{fig:ContainerForverticalItems}). %Obviously, all vertical items in $B$ can be placed fractionally into the pseudo items and the containers $C_B$. 
A reordering of the tall and pseudo $T \cup P$ items in $B$ is a rearrangement, which just changes the x-coordinates of the bottom-left corners, but not the y-coordinates. It is feasible if there are no two items in $T \cup P$ that overlap in this reordering.

Next we will show that in any reordering of the items, there is a constant amount of containers, which can be placed into the box without overlapping with any other container or (pseudo) item.

\begin{lma}
\label{lma:verticalItemShift}
Let $\beta := \min \{|h_i - h_j| : i,j \in T\cup P, h_i \not = h_j \}$ be the minimal difference between the heights of two items in $T \cup P$. Let $B \in \hatboxTV$ with width $w$. Let $t \in T\cup P$ be the shortest item touching the top and $b \in T\cup P$ be the shortest item touching the bottom, with $h_t >0$ and $h_b >0$. Define $h := (h(B) - h_t - h_b)$.
For each feasible reordering of the tall and pseudo items and each $\alpha \leq \beta/(\beta +h)$, we can find a subset $S \subseteq \mathcal{C}_B$ of the containers for $B_V$, with $|S| \geq \alpha w$ that can be placed in the reordered packing.
\end{lma}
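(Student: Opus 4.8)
Fix $B\in\hatboxTV$ and a feasible reordering of $T\cup P$ inside $B$. For each integer $x$-coordinate the reordered strip of width one carries one item of $T\cup P$ at the top, of height $h_t+a\ge h_t$, and one at the bottom, of height $h_b+b\ge h_b$, so the free vertical space (the new container slot) has height $g=h-a-b$, where $a,b\ge 0$ are the ``excesses'' over the two minimal heights $h_t,h_b$. Writing $c_i=h-a_i'-b_i'$ for the height of the original container $C_i\in\mathcal{C}_B$ (with $a_i',b_i'$ the excesses in the original packing), the container $C_i$ can be dropped into a reordered slot of height $g$ without overlap exactly when $c_i\le g$. Hence the number of containers of $B_V$ that can be kept equals a maximum matching in the bipartite ``staircase'' graph given by this single inequality, and the goal is to bound it below by $\alpha w$. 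Three facts will drive the estimate: (i) every container has $c_i\le h$, since every strip has a top item of height $\ge h_t$ and a bottom item of height $\ge h_b$; (ii) a reordering changes only $x$-coordinates, so it merely permutes the top items among themselves and the bottom items among themselves, and therefore the multiset of top excesses, the multiset of bottom excesses, and the total excess $\Sigma$ (sum of all $a$'s and $b$'s) are the same before and after; (iii) any two distinct heights in $T\cup P$ differ by $\ge\beta$, so every excess is either $0$ or $\ge\beta$, which forces every ``non-full'' slot value and every ``non-full'' container value to lie in $(-\infty,h-\beta]$ rather than being just below $h$.

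\textbf{Greedy placement and the jamming configuration.} I would sort the containers by non-decreasing height and place them one by one, each into the tallest still-empty reordered slot; say this jams after $k$ placements, on the $(k{+}1)$-st shortest container, of height $c$. By the greedy rule the occupied slots are the $k$ tallest ones, so every empty slot has height at most the $(k{+}1)$-st tallest slot height, which is $<c$, while every still-unplaced container has height $\ge c$; thus unplaced containers and empty slots are completely decoupled. First the clean case $c=h$: then $c_{(k+1)}=\dots=c_{(w)}=h$, so $\sum_i c_i\ge(w-k)h$, whereas every one of the $w-k$ empty slots is non-full, hence $\le h-\beta$, so $\sum_j g_j\le kh+(w-k)(h-\beta)$. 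Since $\sum_i c_i=\sum_j g_j$, this gives $(w-k)\beta\le kh$, i.e.\ $k\ge \beta w/(\beta+h)\ge\alpha w$. So the lemma holds whenever the first unplaceable container is full-height (in particular when there are no non-minimal items at all, in which case $k=w$).

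\textbf{The remaining case and the obstacle.} It remains to treat a jam on a container of height $c\le h-\beta$. Here a single global area count is no longer enough: the contributions of the tall unplaced containers cancel on both sides of the inequality above. The remedy I would pursue is an induction on the number of distinct heights occurring in $T\cup P$ (equivalently on $h/\beta$). In the inductive step one first pairs, as in the clean case, every ``full'' container with a ``full'' slot; the number of full containers and the number of full slots are both at least $w-n_a-n_b$, where $n_a,n_b$ count the non-minimal top resp.\ bottom items, a quantity preserved by the reordering. After deleting these, the leftover containers and slots are all non-full, so a leftover item is larger than minimal by at least $\beta$; one argues that the leftover configuration is again an instance of the same reordering problem, with $h_t,h_b$ replaced by the next available top/bottom heights and $h$ decreased by at least $\beta$, and applies the induction hypothesis there. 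Carrying this out so that the guarantees obtained at the successive height levels telescope to exactly the factor $\beta/(\beta+h)$ — in particular, checking that the pruned slot profile is still realizable by a feasible reordering of the pruned items, and controlling how the deficiencies accumulated at the various levels combine — is the technical heart of the proof; the matching reformulation and the full-height case are the easy parts.
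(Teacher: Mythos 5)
There is a genuine gap, but it stems from a self-imposed overcomplication rather than a hard obstacle. Your greedy is the right object and is in fact exactly the matching the paper exhibits: after sorting containers ascending as $c_1\le\dots\le c_w$ and slots ascending as $g_1\le\dots\le g_w$, the greedy matches $c_i$ to $g_{w-i+1}$ and stops at the first index $k+1$ with $c_{k+1}>g_{w-k}$. You only prove $k\ge\alpha w$ when $c_{k+1}=h$, and you leave the case $c_{k+1}\le h-\beta$ to an induction that you sketch but do not carry out (indeed you flag it yourself as ``the technical heart''). As written, that is a missing proof, not a proof.

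The case split is unnecessary, and closing the gap requires no induction. In the ``clean'' case you lower-bounded $\sum_i c_i$ by $(w-k)h$; the right move is to keep $(w-k)\,c_{k+1}$ instead of replacing $c_{k+1}$ by $h$, and symmetrically to bound the $w-k$ empty slots by $g_{w-k}$ rather than by $h-\beta$. Then from $\sum_i c_i=\sum_j g_j$ one gets
\[
(w-k)\,c_{k+1}\;\le\;\sum_{i>k}c_i\;\le\;\sum_i c_i\;=\;\sum_j g_j\;\le\;k\,h+(w-k)\,g_{w-k},
\]
and the jam condition $c_{k+1}>g_{w-k}$ together with the $\beta$-gap between distinct heights in $T\cup P$ gives $c_{k+1}\ge g_{w-k}+\beta$, hence $(w-k)\beta\le kh$, i.e.\ $k\ge\beta w/(\beta+h)\ge\alpha w$, for every value of $c_{k+1}$. (If the greedy never jams, $k=w$ and the bound is trivial.) This is precisely the paper's computation, phrased there as a contradiction from assuming $c_{\lceil\alpha w\rceil}>g_{w-\lceil\alpha w\rceil+1}$: they compare the $\lceil\alpha w\rceil$-th smallest container to the $\lceil\alpha w\rceil$-th largest slot and use the same one-shot area count, with no distinction between ``full'' and ``non-full'' containers and no recursion. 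So your high-level strategy matches the paper's, but you should delete the dichotomy and the proposed induction and simply tighten the two area bounds as above; the recursive route would need you to verify that the pruned slot profile is again realizable by a feasible reordering, which is nontrivial and, as the unified count shows, entirely avoidable.
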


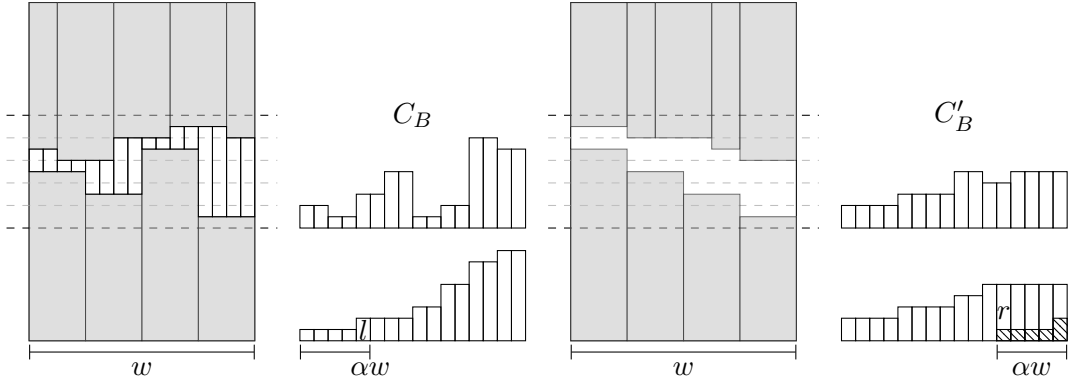
\begin{figure}[h]{}
\centering
%\resizebox{9cm}{5cm}{%
\begin{tikzpicture}

\pgfmathsetmacro{\h}{1.5}
\pgfmathsetmacro{\w}{3}

\pgfmathsetmacro{\wa}{0.0625}
\pgfmathsetmacro{\wb}{0.125}
\pgfmathsetmacro{\wc}{0.1875}

\pgfmathsetmacro{\wz}{0.25}
\pgfmathsetmacro{\we}{0.3125}
\pgfmathsetmacro{\wf}{0.375}
\pgfmathsetmacro{\wg}{0.4375}
\pgfmathsetmacro{\wh}{0.5}
\pgfmathsetmacro{\wi}{0.5625}
\pgfmathsetmacro{\wj}{0.625}
\pgfmathsetmacro{\wk}{0.6875}
\pgfmathsetmacro{\wl}{0.75}
\pgfmathsetmacro{\wm}{0.8125}
\pgfmathsetmacro{\wn}{0.875}
\pgfmathsetmacro{\wo}{0.9375}

\draw (0,0) rectangle (\w,3*\h);

\draw[|-|] (0,-0.1*\h) -- (\w,-0.1*\h) node[midway,below] {$w$};

\draw[dashed] (-0.1*\w,\h) -- (1.1*\w,\h);
\draw[dashed] (-0.1*\w,2*\h) -- (1.1*\w,2*\h);

%\draw[lightgray, dashed] (-0.1*\w,1.1*\h) -- (1.1*\w,1.1*\h);
\draw[lightgray, dashed] (-0.1*\w,1.2*\h) -- (1.1*\w,1.2*\h);
%\draw[lightgray, dashed] (-0.1*\w,1.3*\h) -- (1.1*\w,1.3*\h);
\draw[lightgray, dashed] (-0.1*\w,1.4*\h) -- (1.1*\w,1.4*\h);
%\draw[lightgray, dashed] (-0.1*\w,1.5*\h) -- (1.1*\w,1.5*\h);
\draw[lightgray, dashed] (-0.1*\w,1.6*\h) -- (1.1*\w,1.6*\h);
%\draw[lightgray, dashed] (-0.1*\w,1.7*\h) -- (1.1*\w,1.7*\h);
\draw[lightgray, dashed] (-0.1*\w,1.8*\h) -- (1.1*\w,1.8*\h);
%\draw[lightgray, dashed] (-0.1*\w,1.9*\h) -- (1.1*\w,1.9*\h);

\draw[fill= lightgray, opacity = 0.5] (0,0) rectangle (0.25 *\w,1.5*\h);
\draw[fill= lightgray, opacity = 0.5] (0.25 *\w,0) rectangle (0.5 *\w,1.3*\h);
\draw[fill= lightgray, opacity = 0.5] (0.5 *\w,0) rectangle (0.75 *\w,1.7 *\h);
\draw[fill= lightgray, opacity = 0.5] (0.75 *\w,0) rectangle (\w,1.1*\h);

\draw[fill= lightgray, opacity = 0.5] (0,1.7*\h) rectangle (0.125 *\w,3*\h);
\draw[fill= lightgray, opacity = 0.5] (0.125 *\w,1.6*\h) rectangle (0.375 *\w,3*\h);
\draw[fill= lightgray, opacity = 0.5] (0.375 *\w,1.8*\h) rectangle (0.625 *\w,3*\h);
\draw[fill= lightgray, opacity = 0.5] (0.625 *\w,1.9*\h) rectangle (0.875 *\w,3*\h);
\draw[fill= lightgray, opacity = 0.5] (0.875 *\w,1.8*\h) rectangle (\w,3*\h);

\draw (0,1.5*\h) rectangle (\wa *\w,1.7*\h);
\draw (\wa *\w,1.5*\h) rectangle (\wb *\w,1.7*\h);
\draw (\wb *\w,1.5*\h) rectangle (\wc *\w,1.6*\h);
\draw (\wc *\w,1.5*\h) rectangle (\wz *\w,1.6*\h);
\draw (\wz *\w,1.3*\h) rectangle (\we *\w,1.6*\h);
\draw (\we *\w,1.3*\h) rectangle (\wf *\w,1.6*\h);
\draw (\wf *\w,1.3*\h) rectangle (\wg *\w,1.8*\h);
\draw (\wg *\w,1.3*\h) rectangle (\wh *\w,1.8*\h);
\draw (\wh *\w,1.7 *\h) rectangle (\wi *\w,1.8*\h);
\draw (\wi *\w,1.7 *\h) rectangle (\wj *\w,1.8*\h);
\draw (\wj *\w,1.7 *\h) rectangle (\wk *\w,1.9*\h);
\draw (\wk *\w,1.7 *\h) rectangle (\wl *\w,1.9*\h);
\draw (\wl *\w,1.1 *\h) rectangle (\wm *\w,1.9*\h);
\draw (\wm *\w,1.1 *\h) rectangle (\wn *\w,1.9*\h);
\draw (\wn *\w,1.1 *\h) rectangle (\wo * \w,1.8*\h);
\draw (\wo *\w,1.1 *\h) rectangle (\w,1.8*\h);

\begin{scope}[xshift= 1.2*\w cm]

\node at (0.5*\w, 2*\h){$C_B$};
\begin{scope}[yshift= 1*\h cm]
\draw (0,0) rectangle (\wa *\w,0.2*\h);
\draw (\wa *\w,0) rectangle (\wb *\w,0.2*\h);
\draw (\wb *\w,0) rectangle (\wc *\w,0.1*\h);
\draw (\wc *\w,0) rectangle (\wz*\w,0.1*\h);
\draw (\wz *\w,0) rectangle (\we *\w,0.3*\h);
\draw (\we *\w,0) rectangle (\wf *\w,0.3*\h);
\draw (\wf *\w,0) rectangle (\wg *\w,0.5*\h);
\draw (\wg *\w,0) rectangle (\wh *\w,0.5*\h);
\draw (\wh *\w,0) rectangle (\wi *\w,0.1*\h);
\draw (\wi *\w,0) rectangle (\wj *\w,0.1*\h);
\draw (\wj *\w,0) rectangle (\wk *\w,0.2*\h);
\draw (\wk *\w,0) rectangle (\wl *\w,0.2*\h);
\draw (\wl *\w,0) rectangle (\wm *\w,0.8*\h);
\draw (\wm *\w,0) rectangle (\wn *\w,0.8*\h);
\draw (\wn *\w,0) rectangle (\wo*\w,0.7*\h);
\draw (\wo *\w,0) rectangle (\w,0.7*\h);

\end{scope}

\begin{scope}[yshift= 0*\h cm]
\draw (0,0) rectangle (\wa *\w,0.1*\h);
\draw (\wa *\w,0) rectangle (\wb *\w,0.1*\h);
\draw (\wb *\w,0) rectangle (\wc *\w,0.1*\h);
\draw (\wc *\w,0) rectangle (\wz *\w,0.1*\h);
\draw (\wz *\w,0) rectangle (\we *\w,0.2*\h) node[midway]{$l$};
\draw (\we *\w,0) rectangle (\wf *\w,0.2*\h);
\draw (\wf *\w,0) rectangle (\wg *\w,0.2*\h);
\draw (\wg *\w,0) rectangle (\wh *\w,0.2*\h);
\draw (\wh *\w,0) rectangle (\wi *\w,0.3*\h);
\draw (\wi *\w,0) rectangle (\wj *\w,0.3*\h);
\draw (\wj *\w,0) rectangle (\wk *\w,0.5*\h);
\draw (\wk *\w,0) rectangle (\wl *\w,0.5*\h);
\draw (\wl *\w,0) rectangle (\wm *\w,0.7*\h);
\draw (\wm *\w,0) rectangle (\wn *\w,0.7*\h);
\draw (\wn *\w,0) rectangle (\wo*\w,0.8*\h);
\draw (\wo *\w,0) rectangle (\w,0.8*\h);

\draw[|-|] (0,-0.1*\h) -- (\we*\w,-0.1*\h) node[below] {$\alpha w$};
\end{scope}
\end{scope}

\begin{scope}[xshift=2.4*\w cm]
\draw (0,0) rectangle (\w,3*\h);

\draw[|-|] (0,-0.1*\h) -- (\w,-0.1*\h) node[midway,below] {$w$};

\draw[dashed] (-0.1*\w,\h) -- (1.1*\w,\h);
\draw[dashed] (-0.1*\w,2*\h) -- (1.1*\w,2*\h);

%\draw[lightgray, dashed] (-0.1*\w,1.1*\h) -- (1.1*\w,1.1*\h);
\draw[lightgray, dashed] (-0.1*\w,1.2*\h) -- (1.1*\w,1.2*\h);
%\draw[lightgray, dashed] (-0.1*\w,1.3*\h) -- (1.1*\w,1.3*\h);
\draw[lightgray, dashed] (-0.1*\w,1.4*\h) -- (1.1*\w,1.4*\h);
%\draw[lightgray, dashed] (-0.1*\w,1.5*\h) -- (1.1*\w,1.5*\h);
\draw[lightgray, dashed] (-0.1*\w,1.6*\h) -- (1.1*\w,1.6*\h);
%\draw[lightgray, dashed] (-0.1*\w,1.7*\h) -- (1.1*\w,1.7*\h);
\draw[lightgray, dashed] (-0.1*\w,1.8*\h) -- (1.1*\w,1.8*\h);
%\draw[lightgray, dashed] (-0.1*\w,1.9*\h) -- (1.1*\w,1.9*\h);

\draw[fill= lightgray, opacity = 0.5] (0,0) rectangle (0.25 *\w,1.7*\h);
\draw[fill= lightgray, opacity = 0.5] (0.25 *\w,0) rectangle (0.5 *\w,1.5*\h);
\draw[fill= lightgray, opacity = 0.5] (0.5 *\w,0) rectangle (0.75 *\w,1.3 *\h);
\draw[fill= lightgray, opacity = 0.5] (0.75 *\w,0) rectangle (\w,1.1*\h);

\draw[fill= lightgray, opacity = 0.5] (0,1.9*\h) rectangle (0.25 *\w,3*\h);
\draw[fill= lightgray, opacity = 0.5] (0.25 *\w,1.8*\h) rectangle (0.375 *\w,3*\h);
\draw[fill= lightgray, opacity = 0.5] (0.375 *\w,1.8*\h) rectangle (0.625 *\w,3*\h);
\draw[fill= lightgray, opacity = 0.5] (0.625 *\w,1.7*\h) rectangle (0.75 *\w,3*\h);
\draw[fill= lightgray, opacity = 0.5] (0.75 *\w,1.6*\h) rectangle (\w,3*\h);

\begin{scope}[xshift= 1.2*\w cm]

\node at (0.5*\w, 2*\h){$C_B'$};

\begin{scope}[yshift= 1*\h cm]
\draw (0,0) rectangle (\wa *\w,0.2*\h);
\draw (\wa *\w,0) rectangle (\wb *\w,0.2*\h);
\draw (\wb *\w,0) rectangle (\wc *\w,0.2*\h);
\draw (\wc *\w,0) rectangle (\wz *\w,0.2*\h);
\draw (\wz *\w,0) rectangle (\we *\w,0.3*\h);
\draw (\we *\w,0) rectangle (\wf *\w,0.3*\h);
\draw (\wf *\w,0) rectangle (\wg *\w,0.3*\h);
\draw (\wg *\w,0) rectangle (\wh *\w,0.3*\h);
\draw (\wh *\w,0) rectangle (\wi *\w,0.5*\h);
\draw (\wi *\w,0) rectangle (\wj *\w,0.5*\h);
\draw (\wj *\w,0) rectangle (\wk *\w,0.4*\h);
\draw (\wk *\w,0) rectangle (\wl *\w,0.4*\h);
\draw (\wl *\w,0) rectangle (\wm*\w,0.5*\h);
\draw (\wm *\w,0) rectangle (\wn *\w,0.5*\h);
\draw (\wn *\w,0) rectangle (\wo *\w,0.5*\h);
\draw (\wo *\w,0) rectangle (\w,0.5*\h);
\end{scope}

\begin{scope}[yshift= 0*\h cm]
\draw (0,0) rectangle (\wa*\w,0.2*\h);
\draw (\wa *\w,0) rectangle (\wb *\w,0.2*\h);
\draw (\wb *\w,0) rectangle (\wc *\w,0.2*\h);
\draw (\wc *\w,0) rectangle (\wz *\w,0.2*\h);
\draw (\wz *\w,0) rectangle (\we *\w,0.3*\h);
\draw (\we *\w,0) rectangle (\wf *\w,0.3*\h);
\draw (\wf *\w,0) rectangle (\wg *\w,0.3*\h);
\draw (\wg *\w,0) rectangle (\wh *\w,0.3*\h);
\draw (\wh *\w,0) rectangle (\wi *\w,0.4*\h);
\draw (\wi *\w,0) rectangle (\wj *\w,0.4*\h);
\draw (\wj *\w,0) rectangle (\wk *\w,0.5*\h);
\draw (\wk *\w,0) rectangle (\wl *\w,0.5*\h)node[midway]{$r$};
\draw (\wl *\w,0) rectangle (\wm * \w,0.5*\h);
\draw (\wm *\w,0) rectangle (\wn *\w,0.5*\h);
\draw (\wn *\w,0) rectangle (\wo *\w,0.5*\h);
\draw (\wo *\w,0) rectangle (\w,0.5*\h);

\draw[pattern= north west lines] (\wk *\w,0) rectangle (\wl *\w,0.1*\h);
\draw[pattern= north west lines] (\wl *\w,0) rectangle (\wm *\w,0.1*\h);
\draw[pattern= north west lines] (\wm *\w,0) rectangle (\wn *\w,0.1*\h);
\draw[pattern= north west lines] (\wn *\w,0) rectangle (\wo *\w,0.1*\h);
\draw[pattern= north west lines] (\wo *\w,0) rectangle (\w,0.2*\h);

\draw[|-|] (\wk * \w,-0.1*\h) -- (\w,-0.1*\h) node[midway,below] {$\alpha w$};
\end{scope}
\end{scope}

\end{scope}

\end{tikzpicture}
%}
\caption{Two orderings of the items in $T \cup P$.}
\label{fig:ContainerForverticalItems}
\end{figure}

\begin{proof}
Let $C_B$ be the set of containers for vertical items in the first ordering and $C_B'$ be the set in a given feasible reordering. We sort both sets of containers in ascending order and index them from $1$ to $w$. We will show that the $\lceil \alpha w \rceil$ smallest containers in $C_B$ fit into the $\lceil \alpha w \rceil$ largest containers in $C_B'$.
Let $l$ be the container with index $\lceil \alpha w \rceil$ in the set $C_B$ and let $r$ be the container with index $w -\lceil \alpha w \rceil +1$ in the set $C_B'$. If $h_l \leq h_r$ the $\lceil \alpha w \rceil$ shortest container in $\mathcal{C}_B$ can be placed into the $\lceil \alpha w \rceil$ longest container $\mathcal{C}_B'$, see Figure \ref{fig:ContainerForverticalItems}.

Assume for contradiction that $h_l > h_r$. We know about the area of the sets of containers that $A(C_B) = A(C_B')$, since we have not changed the set of tall and pseudo items. Since each container with index $\geq l$ has height at least $h_l$, we know that $A(C_B) \geq h_l(w-\lceil \alpha w \rceil +1)$. Furthermore we know that $A(C_B')\leq h_r(w-\lceil \alpha w \rceil +1) + h(\lceil\alpha w \rceil -1)$, since each container $i$ with $i\leq r$ has height at most $h_r$ and each container $i$ with $i> r$ has height at most $h$. So in total we have
\[h_l(w-\lceil \alpha w \rceil +1) \leq A(C_B) = A(C_B')\leq h_r(w-\lceil \alpha w \rceil +1) + h(\lceil\alpha w \rceil -1). \]
Since $h_l > h_r$ and the difference between two items out of $T \cup P$ is at least $\beta$ we have $h_l \geq h_r +\beta$. This leads to 
\[(h_r +\beta)(w-\lceil \alpha w \rceil +1) \leq h_l(w-\lceil \alpha w \rceil +1) \leq h_r(w-\lceil \alpha w \rceil +1) + h(\lceil\alpha w \rceil -1).\]
It follows that 
$w\beta \leq (\beta +h)(\lceil\alpha w \rceil -1)$.
Since $\lceil\alpha w \rceil -1 < \alpha w$ this leads to 
$\beta < (\beta +h)\alpha$,
which is a contradiction for each $\alpha \leq \beta/(h +\beta )$.

\end{proof}

%We can choose $\alpha < \beta/(h+\beta)$. In each box of height at least $(2/3 + 2\eps)\OPT$ and width $w$ we can place at least a set of container with width at least $\alpha \cdot w$. The summed width of this boxes is at most $W$. This means we have to put container with summed width at most $(1-\alpha) W$ somewhere else. We put these container into an extra box of height $(1/3 + \eps)\OPT$ and width $(1-\alpha) W$. Since the box for medium sized vertical items has to be placed at the same point on the y-axis, we want $\beta$ as large as possible. With the properties in Lemma \ref{lma:heightOfBeta} we can assure that $\beta \geq \eps$.
%
%There are two kinds of items that effect the value of $\beta$: the tall items and the pseudo items. We know that the difference between heights of tall items is at least $\eps^2$, since they have heights that are multiples of $\eps^2$. We show that we can enlarge this difference to $\eps$. 
%
%The height difference between tall items and pseudo items can be very small, e.g. in the order of $\eps\delta$ this is since the boxes start end end at multiples of $\eps\delta$. So a pseudo item can have a height of the form $j\eps^2 + \eps\delta$, where $j \in \mathbb{N}$. We show that it is possible to enlarge the boxes of height at least $\tallItemHeight$ such that their height is a multiple of $\eps$. this means since each tall items is a multiple of $\eps$ that the pseudo items are also. So we have that the minimal difference between two tall and pseudo items in a box of height $(2/3 +2\eps)$ is at last $\eps$.

All the containers that can not be placed into the rearranged packings will be placed in an extra box $V_0$ of height $\tallItemHeight$ and width $(1-\alpha)W$. This alone with the knowledge from \cite{nadiradzeWiese} is enough to generate an algorithm with approximation ratio $4/3+\epsilon$. But we also want to speed up the algorithm by generating less boxes. Hence, we want the parameters $\alpha$ and $\delta$ as large as possible. 

\section{Improving the running time}
\label{runningTime}

The key to improve the running time, is to reduce the number of subboxes of a box in $\hatboxTV$. We do this by a other reordering of the tall items as in \cite{nadiradzeWiese}. In this section we will first present the new reordering strategy and then present some useful lemmas to improve the parameter $\delta$ additionally.

Let $S_P$ be the number of item sizes in $P$ and $S_T$ the number of item sizes in $T$ respectively. Additionally, let $S_{T\cup P}\leq S_T+S_P$ be the number of item sizes in $T\cup P$.
Furthermore, let us assume that there is at most one tall item on each side of the box, which overlaps the box border. In the following lemma we will present an algorithm wich reorders the tall and pseudo items, such that we generate few sub boxes.

\begin{lma}
\label{lma:partitionOflargeBoxes}
Let $B \in \hatboxTV$.
We can find a rearrangement of tall and pseudo items in $B$, such that we need at most $\mathcal{O}((S_P +S_T)S_{T\cup P})$ subboxes containing either tall or vertical items, such that each subbox $B_T$ for tall items contains just items with height $h(B_T)$, and all vertical items in $B$ can be packed fractionally into the subboxes for vertical items.
\end{lma}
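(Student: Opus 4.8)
The plan is to process the box $B$ from the bottom and the top separately, exploiting that after Lemma~\ref{lma:shifttallItems} every movable tall item touches either the top or the bottom edge of $B$. First I would deal with the (at most two) unmovable tall items crossing the left and right box borders: cut them off into their own thin subboxes at the extreme left and right of $B$, so that in the remainder of the box every tall item is movable and flush with the top or bottom. Now partition $T\cup P$ into those items glued to the bottom, $T_b\cup P_b$, and those glued to the top, $T_t\cup P_t$; each of these families has at most $S_{T\cup P}$ distinct heights. The idea is to sort, say, the bottom items by \emph{non-increasing} height from left to right and likewise re-sort the top items, so that on the bottom we obtain a staircase of at most $S_{T\cup P}$ "steps", each step being a maximal horizontal run of bottom items of one common height; do the same on top. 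This is the analogue of the reordering in \cite{nadiradzeWiese}, but sorting by height rather than grouping by rounded value.

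The second step is to argue that the bottom staircase and the top staircase together cut $B$ into few rectangular regions. Each of the $\mathcal O(S_{T\cup P})$ vertical step-boundaries from the bottom staircase, and each of the $\mathcal O(S_{T\cup P})$ from the top staircase, induces a vertical cut through $B$; these cuts split $B$ into $\mathcal O(S_{T\cup P})$ vertical slabs. Within one slab the bottom is occupied by tall/pseudo items of a single height $h_b$ and the top by tall/pseudo items of a single height $h_t$, so the slab decomposes into at most three horizontal pieces: a bottom piece of height $h_b$, a top piece of height $h_t$, and a middle piece of height $h(B)-h_b-h_t$ free for vertical items. For the bottom and top pieces I would further observe they are unions of (parts of) items of one common height, so they are either honest tall-item subboxes $B_T$ with $h(B_T)$ equal to that common height, or — where a pseudo item sits — free space that again becomes a vertical-item subbox. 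Counting: $\mathcal O(S_{T\cup P})$ slabs times $\mathcal O(1)$ pieces per slab would already give $\mathcal O(S_{T\cup P})$ subboxes; the stated bound $\mathcal O((S_P+S_T)S_{T\cup P})$ leaves comfortable slack, and the extra factor $S_P+S_T$ appears because within a single staircase step one must still separate the genuine tall items from the pseudo-item gaps, and a step of common height may interleave tall items with pseudo items of that height in up to $\mathcal O(S_P+S_T)$ alternations.

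The third step is to check the two assertions that make the reordering \emph{feasible} and \emph{useful}. Feasibility: sorting the bottom items among themselves changes only their $x$-coordinates and never their heights or $y$-coordinates, so no two bottom items overlap; the same holds on top; and a bottom item and a top item overlap only if $h_b+h_t>h(B)$ at some column, which cannot happen since it did not happen in the original flush packing and re-sorting does not change the multiset of heights present — more carefully, one shows the reordered bottom staircase is columnwise dominated by the original's "lower envelope" and similarly on top, so no new collisions are created. Usefulness: the total free area (the middle pieces plus pseudo-item subboxes) equals the total free area in the original flush packing, which is exactly $A(B)-A(T\cup P)$, i.e.\ the area available to $B_V$; since vertical items may be sliced horizontally and each has width at most $\mu W$ while every vertical-item subbox spans at least one step width, a standard greedy fractional packing (Next-Fit along the slabs) places all of $B_V$. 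The main obstacle I expect is precisely the feasibility argument for the re-sorted staircases: one has to rule out that moving a tall bottom item under a tall top item creates an overlap, and the clean way to see this is that in the original flush packing the function "height of bottom stack at column $x$" and "height of top stack at column $x$" satisfy, at every $x$, $\text{bot}(x)+\text{top}(x)\le h(B)$, and sorting both in the same (say left-to-right decreasing) order can only \emph{decrease} the pointwise maximum of one against the other — this is a rearrangement-inequality-type statement that needs to be stated carefully but is elementary once set up correctly.
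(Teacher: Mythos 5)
There are two genuine gaps, both in places where the paper has to work hard.

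First, the sort direction. You propose to sort the bottom staircase and the top staircase in the \emph{same} direction and invoke a rearrangement-inequality argument to rule out new overlaps. But the feasibility condition is the pointwise-sum condition $\mathrm{bot}(x)+\mathrm{top}(x)\le h(B)$, and pairing similarly ordered sequences \emph{maximizes} $\max_x\bigl(\mathrm{bot}(x)+\mathrm{top}(x)\bigr)$; it does not control it. A two-column counterexample: box height $10$, bottom items of heights $8,2$ and top items of heights $2,8$ (each of width $1$). The original flush packing is tight and feasible, but after sorting both sides non-increasing the bottom $8$ sits under the top $8$. The paper sorts the two sides in \emph{opposite} orders (top descending, bottom ascending), which is precisely what makes the area/pigeonhole non-overlap argument go through.

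Second, and more substantially, disposing of the unmovable border items by cutting them off into thin side subboxes does not work. An unmovable item at, say, the left border need not touch the top or bottom of $B$, so a vertical cut at its right edge will in general slice through movable tall items that overlap that column range, and tall items may not be sliced. Handling this is the actual content of the lemma, and it is where the extra factor $S_{T\cup P}$ comes from, not, as you suggest, from separating tall from pseudo items within a staircase step (the paper handles that at constant cost, placing pseudo items of a given height left of tall items of that same height). What the paper actually does: cut at the left border of the leftmost tallest item touching the bottom and at the right border of the rightmost tallest item touching the top, declare the uniquely sliced tall item at each cut to be a new unmovable item, sort the movable items between the cuts in opposite orders, and then iterate the construction outward toward each box border. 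Each iteration strictly reduces the number of distinct heights still in play, so there are at most $S_{T\cup P}$ iterations, each contributing $\mathcal{O}(S_T+S_P)$ subboxes, giving $\mathcal{O}((S_P+S_T)S_{T\cup P})$ in total. Without the opposite-direction sort your Case~1 already fails, and without the iterative scheme you have no handle on the unmovable border items, so your count is not justified by the plan as stated.
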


\begin{proof}
We consider two cases. In the first no tall item overlaps the left or right border of $B$. For this case it is shown in \cite{nadiradzeWiese} that we can simply sort the items from $T \cup P$ touching the top of $B$ in descending order of heights and the items touching the bottom in ascending order. We sort tall and pseudo items of the same height such that pseudo items are positioned left to the tall items. By this reordering no two items overlap and we have at most $2S_P$ boxes for tall items and at most $2S_T$ boxes for pseudo items, summing up to $2(S_P+S_T)$ sub boxes total. 

In the second case, on each side can be one tall item, which overlaps the box border. In this case, we reorder the items differently from \cite{nadiradzeWiese}. We reduce the number of boxes from an exponential to a quadratic function in the number of different heights in $T \cup P$. 

Let $h_b$ be the height of a tallest item touching the bottom of the box and $b_l$ be the leftmost and $b_r$ the rightmost item of height $h_b$. Similarly choose $h_t, t_l,t_r$ with respect to the top.
Further, let $i_l$ be the item in $\{t_l, b_l\}$ which is further left and $i_r$ the item which is further right in $\{t_r, b_r\}$. 
If $i_l$ and $i_r$ are touching the same border we change $i_r$ to the other item in $\{t_r,b_r\}$ such that $i_l$ and $i_r$ touch different borders. Let w.l.o.g $i_l = b_l$ and $i_r = t_r$. 

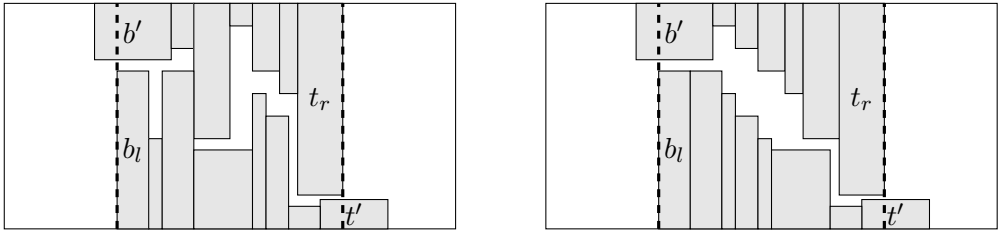
\begin{figure} [ht]
\begin{tikzpicture}
\pgfmathsetmacro{\h}{3}
\pgfmathsetmacro{\w}{6}
\draw (0,0) rectangle (\w,\h);
\draw[fill= black!10!white](0.25*\w,0) rectangle (0.32*\w,0.7*\h) node[midway] {$b_l$};
\draw[fill= black!10!white](0.65*\w,0.15*\h) rectangle (0.75*\w,\h) node[midway] {$t_r$};
\draw[fill= black!10!white] (0.7*\w,0) rectangle (0.85*\w,0.13*\h) node[midway] {$t'$};
\draw[fill= black!10!white] (0.2*\w,0.75*\h) rectangle (0.37*\w,\h) node[midway] {$b'$};
\draw[very thick, dashed] (0.25*\w,0) -- (0.25*\w,\h);
\draw[very thick, dashed] (0.75*\w,0) -- (0.75*\w,\h);
%toprectangles
\draw[fill= black!10!white] (0.37*\w,0.8*\h) rectangle (0.42*\w,\h);
\draw[fill= black!10!white] (0.42*\w,0.4*\h) rectangle (0.5*\w,\h);
\draw[fill= black!10!white] (0.5*\w,0.9*\h) rectangle (0.55*\w,\h);
\draw[fill= black!10!white] (0.55*\w,0.7*\h) rectangle (0.61*\w,\h);
\draw[fill= black!10!white] (0.61*\w,0.6*\h) rectangle (0.65*\w,\h);
%bottomrectangles
\draw[fill= black!10!white](0.32*\w,0) rectangle (0.35*\w,0.4*\h);
\draw[fill= black!10!white](0.35*\w,0) rectangle (0.42*\w,0.7*\h);
\draw[fill= black!10!white](0.42*\w,0) rectangle (0.55*\w,0.35*\h);
\draw[fill= black!10!white](0.55*\w,0) rectangle (0.58*\w,0.6*\h);
\draw[fill= black!10!white](0.58*\w,0) rectangle (0.63*\w,0.5*\h);
\draw[fill= black!10!white](0.63*\w,0) rectangle (0.7*\w,0.1*\h);

%zweite hälfte
\begin{scope}[xshift = 1.2*\w cm]
\draw (0,0) rectangle (\w,\h);
\draw[fill= black!10!white](0.25*\w,0) rectangle (0.32*\w,0.7*\h) node[midway] {$b_l$};
\draw[fill= black!10!white](0.65*\w,0.15*\h) rectangle (0.75*\w,\h) node[midway] {$t_r$};
\draw[fill= black!10!white] (0.7*\w,0) rectangle (0.85*\w,0.13*\h) node[midway] {$t'$};
\draw[fill= black!10!white] (0.2*\w,0.75*\h) rectangle (0.37*\w,\h) node[midway] {$b'$};
\draw[very thick, dashed] (0.25*\w,0) -- (0.25*\w,\h);
\draw[very thick, dashed] (0.75*\w,0) -- (0.75*\w,\h);
%toprectangles
\draw[fill= black!10!white] (0.37*\w,0.9*\h) rectangle (0.42*\w,\h);
\draw[fill= black!10!white] (0.42*\w,0.8*\h) rectangle (0.47*\w,\h);
\draw[fill= black!10!white] (0.47*\w,0.7*\h) rectangle (0.53*\w,\h);
\draw[fill= black!10!white] (0.53*\w,0.6*\h) rectangle (0.57*\w,\h);
\draw[fill= black!10!white] (0.57*\w,0.4*\h) rectangle (0.65*\w,\h);
%bottomrectangles
\draw[fill= black!10!white](0.32*\w,0) rectangle (0.39*\w,0.7*\h);
\draw[fill= black!10!white](0.39*\w,0) rectangle (0.42*\w,0.6*\h);
\draw[fill= black!10!white](0.42*\w,0) rectangle (0.47*\w,0.5*\h);
\draw[fill= black!10!white](0.47*\w,0) rectangle (0.5*\w,0.4*\h);
\draw[fill= black!10!white](0.5*\w,0) rectangle (0.63*\w,0.35*\h);
\draw[fill= black!10!white](0.63*\w,0) rectangle (0.7*\w,0.1*\h);
\end{scope}
\end{tikzpicture}
\caption{A packing before and after the reordering of the items.}
\label{fig:smallPacking}
\end{figure}

We draw a vertical line at the left border of $b_l$. The item we cut with this line we define as a new unmovable item $b'$. We do the same on the right side of $t_r$ and name the cut item $t'$ (see figure \ref{fig:smallPacking}).
Now we sort the movable items between the drawn vertical lines. 
The movable items touching the top are sorted in ascending order with respect to their height, while the movable items touching the bottom are sorted in descending order. 

We will show now, that in this reordering no two items overlap.
There is no tall item touching the bottom that overlaps $b'$ since each item touching the bottom has height at most $h_b$. Since $b'$ was placed above $b_l$ this means $b'$ fits above each item in the box $B$. Similarly one can see that no item overlaps $t'$.

Assume now there is an item $i_b$ touching the bottom that overlaps an item $i_t$ touching the top. Let $p = (x_p,y_p)$ be a point, which is overlapped by the item $i_b$ and $i_t$. Let $(x_l,y_b)$ denote the left bottom corner of $b_l$ and $(x_r,y_t)$ the right top corner of $t_l$. By our reordering there must be a set of items $I_b$ touching the bottom with total width greater than $x_p - x_l$, which is placed between $x_l$ and $x_r$ and has height at least $y_p - y_b$.  Furthermore there must be a set of items $I_t$ with total width greater than $x_r - x_p$ touching the bottom and having height at least $y_t - y_p$. 
Since the area the items can be placed in has a width of $x_r -x_l$ and the sets $I_b$ and $I_t$ have a total width of $w(I_t \cup I_b) > x_p - x_l+ x_r - x_p = x_r - x_l$ by the pidgin hole principle there must be an item in $I_b$ that overlaps an item in $I_t$ in the original packing.

We now look at the items touching the top and having the same height as $b'$. We remove this set of items, shift the items smaller than $h(b')$ to the right and place the items with height $h(b')$ next to $b'$. After this shifting no two tall items overlap. This is because we have shifted the smaller items above items, which are shorter than the items they were placed above before. By this shifting, we avoid that we need an extra box for the item $b'$. We do the same on the bottom with the items with height $h(t')$.

So far we have achieved the following: We have at most $2S_T$ boxes for tall items between $i_l$ and $i_r$ and at most $2S_P$ for pseudo items.
The total number of different heights touching the bottom and touching the top, on the left of $i_l$ is at least one smaller than in the whole box. Same holds for the number on the right side of $i_r$.

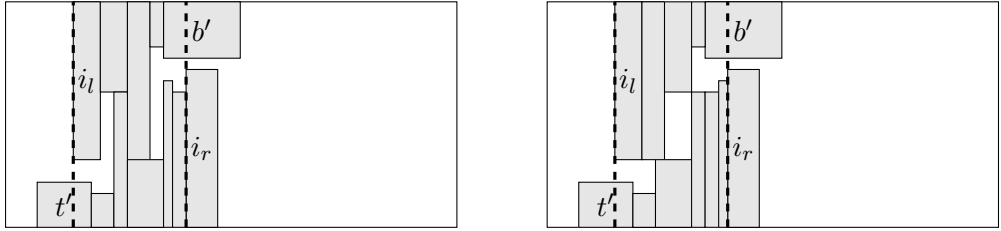
\begin{figure} [ht]
\centering
\begin{tikzpicture}
\pgfmathsetmacro{\h}{3}
\pgfmathsetmacro{\w}{6}
\draw (0,0) rectangle (\w,\h);
\draw[fill= black!10!white](0.4*\w,0) rectangle (0.47*\w,0.7*\h) node[midway] {$i_r$};
\draw[fill= black!10!white] (0.35*\w,0.75*\h) rectangle (0.52*\w,\h) node[midway] {$b'$};
\draw[fill= black!10!white] (0.15*\w,0.3*\h) rectangle (0.21*\w,\h)node[midway] {$i_l$};
\draw[fill= black!10!white] (0.07*\w,0.2*\h) rectangle (0.19*\w,0)node[midway] {$t'$};

%toprectangles
\draw[fill= black!10!white] (0.21*\w,0.6*\h) rectangle (0.27*\w,\h);
\draw[fill= black!10!white] (0.27*\w,0.3*\h) rectangle (0.32*\w,\h);
\draw[fill= black!10!white] (0.32*\w,0.8*\h) rectangle (0.35*\w,\h);
%bottomrectangles
\draw[fill= black!10!white] (0.19*\w,0.15*\h) rectangle (0.24*\w,0);
\draw[fill= black!10!white] (0.24*\w,0.6*\h) rectangle (0.27*\w,0);
\draw[fill= black!10!white](0.27*\w,0) rectangle (0.35*\w,0.3*\h);
\draw[fill= black!10!white](0.35*\w,0) rectangle (0.37*\w,0.65*\h);
\draw[fill= black!10!white](0.37*\w,0) rectangle (0.4*\w,0.6*\h);
\draw[very thick, dashed] (0.4*\w,0) -- (0.4*\w,\h);
\draw[very thick, dashed] (0.15*\w,0) -- (0.15*\w,\h);
%zweite hälfte
\begin{scope}[xshift = 1.2*\w cm]
\draw (0,0) rectangle (\w,\h);
\draw[fill= black!10!white](0.4*\w,0) rectangle (0.47*\w,0.7*\h) node[midway] {$i_r$};
\draw[fill= black!10!white] (0.35*\w,0.75*\h) rectangle (0.52*\w,\h) node[midway] {$b'$};
\draw[fill= black!10!white] (0.15*\w,0.3*\h) rectangle (0.21*\w,\h)node[midway] {$i_l$};
\draw[fill= black!10!white] (0.07*\w,0.2*\h) rectangle (0.19*\w,0)node[midway] {$t'$};

%toprectangles
\draw[fill= black!10!white] (0.21*\w,0.3*\h) rectangle (0.26*\w,\h);
\draw[fill= black!10!white] (0.26*\w,0.6*\h) rectangle (0.32*\w,\h);
\draw[fill= black!10!white] (0.32*\w,0.8*\h) rectangle (0.35*\w,\h);
%bottomrectangles
\draw[fill= black!10!white] (0.19*\w,0.15*\h) rectangle (0.24*\w,0);
\draw[fill= black!10!white](0.24*\w,0) rectangle (0.32*\w,0.3*\h);
\draw[fill= black!10!white] (0.32*\w,0.6*\h) rectangle (0.35*\w,0);
\draw[fill= black!10!white](0.35*\w,0) rectangle (0.38*\w,0.6*\h);
\draw[fill= black!10!white](0.38*\w,0) rectangle (0.4*\w,0.65*\h);
\draw[very thick, dashed] (0.4*\w,0) -- (0.4*\w,\h);
\draw[very thick, dashed] (0.15*\w,0) -- (0.15*\w,\h);
\end{scope}
\end{tikzpicture}
\caption{A recursive rearrangement of the tall and pseudo items.}
\end{figure}

We now describe how to continue to reorder the packing: We repeat the following step until a break condition occur. In each step, we will reduce the total number of different heights of the items touching the top and bottom by at least one.
We look on the left side of $i_l$. W.l.o.g. let $i_l$ touching the bottom of the box. Let $b'$ be the item, which was intersected by the vertical line at the left border of $i_l$. 
Let $h_t$ be the height of the largest item touching the top left of $i_l$. We rename the item $i_l$ as $i_r$ and  redefine $i_l$ as the left most item touching the top, which has height $h_t$. Again we draw a vertical line on the left side of $i_l$. Let $t'$ be the item intersected by this line.  Again we consider $t'$ and $b'$ as unmovable items. We sort the movable items touching the bottom between $t'$ and $\bar{i}_L$ in ascending order and the movable items touching the top in descending order. With the same arguments as above, one can see that by this reordering no item from the bottom overlaps an item from the top.

By choosing $i_l$ as the leftmost tallest item touching the top we have reduced the total number of different heights touching the top and bottom in the remaining area, which has to be reordered, by at least one. We repeat the described step until one of the following conditions occur:
\begin{enumerate}
\item The tallest item touching the top and the tallest item touching the bottom have a summed height of at most $h$.
\item The item $i_r$ is the unmovable item, which overlaps the left border.
\end{enumerate}

If condition 1. occur in any reordering of the items it can not happen that a tall or pseudo item touching the bottom overlaps any tall or pseudo item touching the top, since their height is not large enough. So at this point we simply sort the items touching the top in ascending order and the items touching the bottom just as well.

If condition 2. occur we repeat the normal reordering step once again. When we draw the vertical line, it will be placed exactly on the box border, and we are finished.

We repeat this steps analog on the right side of the initial $i_r$. Let us consider how many  different subbox for tall and pseudo items we create by this step: As seen before in each of the partitioning steps we create at most $2 S_T$ subboxes for tall items and at most $2S_P$ subboxes for pseudo items. 

In each of the partition steps, we reduce the total number of different heights touching the bottom and the top by one. 
If the tallest item touching the top and the tallest item touching the bottom are both smaller than $h(B)/2$, then condition 1 is fulfilled. Since in each partitioning step we reduced the problem by one of the tallest item sizes we need at most $S_{T\cup P}$ steps until the tallest item touching the bottom and the tallest item touching the top both have a height of at most $h(B)/2$. 
%In the first partition, we have reduced the number of different heights by one on each side. In the partitioning steps we create at most $2\sum_{i=1}^{(1/3 + 5\eps)/\eps -1}(2(1/3 + 5\eps)/\eps -i) + (2(1/3 + 4\eps)/\eps) \leq \frac{3}{4}(2(1/3 + 5\eps)/\eps)^2$.

Since we reorder iteratively to the left and to the right, we create at most $4S_{T\cup P}S_T$ boxes for tall items and at most $4S_{T\cup P}S_P$ boxes for pseudo items in total.
 
%So in total, we create at most $3((1/3 + 5\eps)/\eps)^2 +4$ subboxes for the tall items and at most $3((1/3 + 5\eps)/\eps)^2 +8$ subboxes for vertical items including the ones we need to construct the property that just one item overlaps the border at each side. 

%By all these steps we have created at most one box with height $h(B)$, since this item is the largest item in the first step and it touches the bottom as well as the top. 
%These box can contain tall items with sizes ranging between $h(H)-(1/3 + \eps)\OPT$ and $h(B)$. So if we sort the tall items in this box, we generate at most $(1/3 + \eps)/\eps$ further boxes. So in total we have at most $(3(1/3 + 5\eps)^2 +(1/3 + \eps)\eps +4\eps)/\eps^2 \leq 3/(2\eps^2)$ boxes for tall items, and at most $(3(1/3 + 5\eps)^2 +(1/3 + \eps)\eps +8\eps)/\eps^2 \leq (1/3 + 20\eps +26\eps^3)\leq  3/(2\eps^2)$ boxes for vertical items.
\end{proof}

%Since next to this box we have to place the box for medium sized items we want $\alpha$ to be as large as possible. 
%The following Lemma states some useful properties, we can assume when reordering the items in $T\cup P$.
To this point it holds that  $S_T$, $S_P$ and $S_{T\cup P} \in \mathcal{O}(1/\eps^2)$.
In the following lemma we reduce the values of $S_T$, $S_P$ and $S_{T\cup P}$ and assure that we can use the algorithm form lemma \ref{lma:partitionOflargeBoxes}, by providing the needed properties. As a consequence of Lemma \ref{lma:heightOfBeta} we get $S_T, S_P,S_{T\cup P} \in \mathcal{O}(1/\eps)$. Leading to a partition into at most $\mathcal{O}(1/\eps^2)$ subboxes of each box in $\hatboxTV$.

\begin{lma}
\label{lma:heightOfBeta}
If we increase the height of the packing area $W \times (1+2\eps)\OPT$ by $3\eps\OPT$, we can assume that each item in $T \cup P$ has a height, which is a multiple of $\eps \OPT$. Furthermore, at each side of a box, there is at most one tall item overlapping its border, that touches either its bottom or top and has a height, that is a multiple of $\eps\OPT$. By this step we intruduce $\mathcal{O}(1)$ subboxes for each box in $\hatboxTV$.
\end{lma}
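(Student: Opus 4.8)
The plan is to prove the two assertions of the lemma separately, spending the whole $3\eps\OPT$ of extra height on the first one (the height rounding); both steps will contribute only $\mathcal{O}(1)$ new subboxes per box.

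The rounding rests on one structural fact: the boxes of $\hatboxTV$ have pairwise disjoint horizontal ranges, since each has height at least $(2/3+2\eps)\OPT$, and two of them sharing an $x$-coordinate would, being disjoint rectangles, be stacked there and force a height of at least $2(2/3+2\eps)\OPT>(1+2\eps)\OPT$. Hence I may treat each $B\in\hatboxTV$ inside its own column in isolation. For such a $B$ I let $h(B)'$ be the smallest multiple of $\eps\OPT$ with $h(B)'\ge h(B)+2\eps\OPT$, so that $h(B)+2\eps\OPT\le h(B)'<h(B)+3\eps\OPT$; I grow $B$ upward to height $h(B)'$ and push the part of $B$'s column lying above $B$ up by $h(B)'-h(B)<3\eps\OPT$. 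As the columns are disjoint, the overall packing height grows by less than $3\eps\OPT$. Inside $B$ I round each tall item's height up to the next multiple of $\eps\OPT$ (an increase below $\eps\OPT$), leaving the floor-touching tall items on the floor and moving the ceiling-touching ones to the new ceiling. At every $x$-coordinate the room available for the container of (sliced) vertical items equals $h(B)'$ minus the heights of the at most two tall items touching the floor and the ceiling there; that sum grew by at most $2\eps\OPT$ while $h(B)'$ grew by at least $2\eps\OPT$, so the room did not shrink and the fractional vertical items still fit. The pseudo items, whose heights are $h(B)$ or $h(B)-h_i$ for a tall item $i$, thereby acquire heights $h(B)'$ or $h(B)'-h_i'$, all multiples of $\eps\OPT$. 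Tall items lying in boxes of $\checkboxTV$ are rounded up in the same way; here I would use that any vertical column of the packing meets only a constant number of $\boxTV$-boxes that contain a tall item (three tall items stacked exceed the strip height), which, together with the freedom to re-slice vertical items, keeps the extra height they require inside the same $3\eps\OPT$ budget. The only new subboxes produced in this step are, per $\hatboxTV$-box, the slab of height below $3\eps\OPT$ added on top of it together with the single tall item relocated to its new ceiling, i.e.\ $\mathcal{O}(1)$ per box.

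For the second assertion I first shift every unmovable (border-crossing) tall item vertically until it touches the top or the bottom of its box; such an item has, above it or below it, only a slab whose remaining content is vertical and hence sliceable, and a short case check (again using that three tall items cannot be stacked) shows that the required direction of the shift is the same in both boxes the item belongs to, so the shift is feasible. Since $3(1/3+\eps)\OPT>(1+2\eps)\OPT\ge h(B)$, at most two tall items cross any one side of a box, and after the shift two such items necessarily touch opposite horizontal edges. In that case I freeze the surplus one into its own subbox on each of the two sides of the border (the rectangle it occupies there, of height equal to that item's rounded height) and remove it from the set of items reordered later; this leaves one crossing tall item per side, adds $\mathcal{O}(1)$ subboxes per box, and costs no further height because all these heights are already multiples of $\eps\OPT$.

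The step I expect to be the main obstacle is the bookkeeping around the border-crossing tall items: carrying out the shift-to-an-edge and the freezing of the surplus crossing item consistently in both adjacent boxes at once, and checking that the upward growth of the $\hatboxTV$-boxes — which drags up everything lying above them, including portions of crossing tall items shared with neighbouring $\checkboxTV$-boxes — really stays within the $3\eps\OPT$ bound and keeps the packing feasible across every box border.
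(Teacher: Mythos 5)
The main gap is the ``treat each $B\in\hatboxTV$ inside its own column in isolation'' step. Disjointness of the $x$-ranges of the $\hatboxTV$-boxes does not make the corresponding vertical columns independent parts of the packing: the region above a box $B\in\hatboxTV$ is occupied by boxes from $\boxH$ (and possibly items from $L$) whose $x$-range can extend far past that of $B$ --- a single horizontal strip from the construction of $\boxH$ may have width up to $W$ and lie above several $\boxTV$-boxes and $L$-items at once, and the same holds for the $L$-items themselves. Pushing up only ``the part of $B$'s column lying above $B$'', by an amount $h(B)'-h(B)$ that varies from box to box, therefore requires a vertical cut at the column boundary, and neither horizontal items nor $L$-items may be sliced that way; conversely, shifting the whole offending $\boxH$-box or $L$-item upward makes it collide with whatever sits beside $B$'s column, which forces a cascade outside the column. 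You flag part of this under ``bookkeeping around the border-crossing tall items'', but the obstruction is more basic and concerns $\boxH$ and $L$ above the box, not just neighbouring $\checkboxTV$-boxes. The paper sidesteps the issue entirely: it observes that every tall item crosses one of the two fixed horizontal lines $y=(1/3+\tfrac{2}{3}\eps)\OPT$ and $y=(2/3+\tfrac{4}{3}\eps)\OPT$, and performs global uniform upward shifts (by $\eps\OPT$ each) of everything starting above those lines and then of the box tops, which creates clean gaps without cutting anything and uses the $3\eps\OPT=2\eps\OPT+\eps\OPT$ budget in three separate steps (Lemmas~\ref{lma:tallItemHeight} and \ref{lma:pseudoItemheight}). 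Your per-box rounding of $h(B)$ up to a multiple of $\eps\OPT$ after adding $2\eps\OPT$ is the same arithmetic as the paper's, but it has to be realised by such global shifts rather than per-column ones.

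The second half of your argument (reducing to a single border-crossing tall item per side by shifting the unmovable item to an edge and freezing the surplus crossing item in a private subbox) is in the same spirit as the paper's Lemma~\ref{lma:overlappingItems}, which instead glues adjacent vertical slabs onto the crossing item and, where two items cross the same side, moves the box border; both yield $\mathcal{O}(1)$ extra subboxes. But that part still sits on top of the flawed first half, so as written the proposal does not establish the lemma.
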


We will prove this lemma in three parts: First, we will show, that we can ensure that each tall item has a height, which is a multiple of $\eps \OPT$ when we add $2\eps \OPT$ to the packing height. After that, we show that we can guarantee that each box in $\hatboxTV$ has a height, which is a multiple of $\eps \OPT$ when we enlarge the packing height by $\eps \OPT$. This ensures that the generated pseudo items have a height which is a multiple of $\eps \OPT$ as well. In the last step, we will look at the unmovable items.
\begin{lma}
\label{lma:tallItemHeight}
At a loss of an approximation ratio of at most $2 \eps \OPT$ we can ensure, that each tall item has a size, which is a multiple of $\eps \OPT$. Each box for vertical items of height at least $(2/3 + 2\eps) \OPT$ will be enlarged by at least $2 \eps \OPT$. 
\end{lma}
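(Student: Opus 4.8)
The plan is to round the height of every tall item up to the next multiple of $\eps\OPT$ and, box by box, to carve out just enough empty vertical space to absorb this rounding. Since every tall item has height at least $\tallItemHeight$, rounding it up enlarges its height by strictly less than $\eps\OPT$. Moreover, if $k$ tall items were stacked along a common vertical line, their combined height would be at least $k\tallItemHeight$, which already exceeds $(1+2\eps)\OPT$ for $k=3$; hence every vertical line of the stretched packing crosses at most two tall items, and it suffices to insert at most $2\eps\OPT$ of empty horizontal space into any vertical line, with the full $2\eps\OPT$ needed only where two tall items are stacked.

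First I would apply Lemma~\ref{lma:shifttallItems} so that in every box of $\boxTV$ all movable tall (and pseudo) items touch the top or the bottom edge of their box. For a box $B\in\hatboxTV$ of height $H$ I would enlarge $B$ by $2\eps\OPT$: move everything lying directly above $B$ within its horizontal extent up by $2\eps\OPT$ --- by the construction in the proof of Lemma~\ref{lma:roughPartition} this material is a single item of $L$, a single box of $\boxH$, or the strip boundary, so it can be shifted as one rigid block --- then keep every bottom-touching tall or pseudo item anchored to the bottom of $B$ and let each tall item grow upward to its rounded height, and keep every top-touching tall or pseudo item anchored to the new top of $B$ and let each tall item grow downward. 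Along any vertical line the two resulting heights sum to less than $H+2\eps\OPT$, so no two items overlap; the vertical items, which we may slice, previously occupied the containers delimited by the tall and pseudo items, and these containers only became taller, so they still fit. A box $B\in\checkboxTV$ carries at most one tall item per vertical line (two would force its height to be at least $(2/3+2\eps)\OPT$), so the same argument works after enlarging such a $B$ by only $\eps\OPT$, and an unmovable tall item straddling a vertical box border is treated in the same way since both incident boxes are enlarged consistently.

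What remains, and what I expect to be the main obstacle, is to verify that all these per-box enlargements together add at most $2\eps\OPT$ to the packing height, i.e.\ that they do not accumulate. Here I would use that inside any vertical strip of the partition of Lemma~\ref{lma:roughPartition} the boxes of $\boxTV$ are stacked and their heights sum to at most $(1+2\eps)\OPT$: such a strip therefore contains at most one box of $\hatboxTV$ and at most two boxes of $\boxTV$ carrying a tall item at all, and if it contains a box of $\hatboxTV$ it can contain no other tall-item-carrying box (their heights would already sum to more than $(1+2\eps)\OPT$). Hence the enlargements inside any vertical strip sum to at most $2\eps\OPT$, so the stretched packing grows by at most $2\eps\OPT$ in height; and since every box of $\hatboxTV$ was enlarged by exactly $2\eps\OPT$, the second statement follows as well.
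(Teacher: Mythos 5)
There is a genuine gap, and it is precisely where you yourself flag it as the ``main obstacle.'' Your per-strip accounting counts only the enlargements of boxes that lie \emph{in} a given vertical strip, but it does not account for height increases that propagate \emph{into} a strip through shared boundary objects. When you enlarge a box $B\in\hatboxTV$ in strip~1 by $2\eps\OPT$, you push up the $L$-item or $\boxH$-box $\ell$ sitting on top of $B$. But $\ell$ is not a ``rigid block'' confined to $B$'s horizontal extent: $\ell$ may span several vertical strips, and lifting $\ell$ lifts everything above $\ell$ in every strip it touches, which in turn may push other $L$/$\boxH$ items up, and so on. A strip that contains no enlarged box at all can still gain height this way, and a strip whose own enlargements total $\eps\OPT$ can inherit another $\eps\OPT$ or more from a neighboring strip via a chain of $L$-items, so the bound ``at most $2\eps\OPT$ per strip'' does not follow from the local count you give. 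Whether the geometry (tall items have height $>(1/3+\eps)\OPT$, strip has height $(1+2\eps)\OPT$) actually prevents all such chains from accumulating beyond $2\eps\OPT$ is far from obvious and would need its own argument; you neither prove it nor notice the need. Relatedly, the parenthetical claim that the material directly above a box of $\boxTV$ ``is a single item of $L$, a single box of $\boxH$, or the strip boundary, so it can be shifted as one rigid block'' is unjustified and in general false in the relevant sense: even if the top edge of $B$ abuts a single $L$-item, that item typically extends past $B$'s horizontal extent, so shifting it is not a local operation.

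The paper avoids this entirely by doing two \emph{global} horizontal shifts rather than per-box surgery. Every tall item has height at least $(1/3+\eps)\OPT$ and the stretched strip has height $(1+2\eps)\OPT$, so every tall item crosses the line $y=(1/3+\eps)\OPT$ or the line $y=(2/3+\tfrac{4}{3}\eps)\OPT$. Shifting everything starting above the higher line up by $\eps\OPT$, and then everything starting above the lower line up by another $\eps\OPT$, opens a gap of $\eps\OPT$ directly above the top edge of every tall item (used to round its height up), increases the total packing height by exactly $2\eps\OPT$ with no propagation bookkeeping, and automatically enlarges every box of $\hatboxTV$ by $2\eps\OPT$ because such a box must start below $\tfrac{1}{3}\OPT$ and end above $(2/3+2\eps)\OPT$, hence crosses both shift lines. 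If you want to salvage a per-box argument you would need to prove, not just assert, that the cascaded shifts along chains of $L$- and $\boxH$-items never exceed $2\eps\OPT$ on any vertical line; as written, that is the missing step.
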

\begin{proof}
Since a tall item has a height $h_i > (1/3 +\eps)\OPT$ each tall item is either intersects the horizontal line at $(1/3 +\frac{2}{3}\eps)\OPT$ or the horizontal line at $(2/3 +\frac{4}{3}\eps)\OPT$. 
We shift all items that start after $(2/3 +\frac{4}{3}\eps)\OPT$ exactly $\eps \OPT$ upwards. By this shifting, all horizontal and large boxes above $(2/3 +\frac{3}{2}\eps)\OPT$ and the items in it stay unchanged, except that they have moved upwards. The vertical boxes starting before $(2/3 +\frac{3}{2}\eps)\OPT$ and ending after or at $(2/3 +\frac{3}{2}\eps)\OPT$ are enlarged by $\eps \OPT$. Notice that there is no tall item starting after $(2/3 +\frac{4}{3}\eps)\OPT$ since each tall item has a height which is at least $\tallItemHeight$ and the packing has a height of at most $(1+2\eps)\OPT$. Now above each tall item ending between $(2/3 +\frac{4}{3}\eps)\OPT$ and $(1+2\eps)\OPT$ is a gap of height $\eps\OPT$. We use this gap to extend each tall item until it has a height, which is a multiple of $\eps\OPT$. More precisely we round the items height to $\lceil h_i/\eps\OPT \rceil \eps\OPT$.

Since each tall item has height of at least $\tallItemHeight$ we know that each tall item, which has not jet a height, which is a multiple of $\eps \OPT$, ends between $(1/3 +\eps)\OPT$ and  $(2/3 +\frac{4}{3}\eps)\OPT$. Since the tall item ends before $(2/3 +\frac{4}{3}\eps)\OPT$ the latest possible start time is $(1/3 +\frac{2}{3}\eps)\OPT$.
So we shift all items starting after $(1/3 +\eps)\OPT$ exactly $\eps \OPT$ upwards. As in the step before we do not create any new box, but we enlarge all boxes for vertical and tall items starting before $(1/3 +\eps)\OPT$ end ending after or at $(1/3 +\eps)\OPT$. This step creates a gap of height $\eps \OPT$ above each tall item, starting before $(1/3 +\eps)\OPT$. So we can stretch all tall items such that they have a height which is a multiple of $\eps\OPT$. 

Notice that boxes of height at least $(2/3 + 2\eps)\OPT$ start before $1/3 \cdot \OPT$ end end after $(2/3 + 2\eps)\OPT$ so they are enlarged by both shifting operations.  
\end{proof}

Now $S_T \leq 1/\eps$ and each box containing two tall items above each other have a height of at least $(2/3 + 4\eps)\OPT$. We still need that all pseudo items, which will be generated in this boxes have a height, which is a multiple of $\eps\OPT$. If the box has a height which is a multiple of $\eps\OPT$ and all tall items have a height which is a multiple of $\eps\OPT$ as well, the property that each tall item has a height, which is a multiple of $\eps \OPT$ follows by the construction of the tall items.

\begin{lma}
\label{lma:pseudoItemheight}
At loss of at most $\eps \OPT$ in the approximation ratio, we can ensure that each box in $\hatboxTV$ has a height, which is a multiple of $\eps \OPT$.
\end{lma}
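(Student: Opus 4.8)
The plan is to mimic the shifting argument of Lemma~\ref{lma:tallItemHeight}: instead of opening free space above individual tall items, I open one global horizontal strip of free space and use it to round every box in $\hatboxTV$ simultaneously. After Lemma~\ref{lma:tallItemHeight} the packing has height $(1+4\eps)\OPT$, every tall item has height a multiple of $\eps\OPT$, and every box $B\in\hatboxTV$ has height at least $(2/3+4\eps)\OPT$; hence its bottom edge is at height at most $(1/3)\OPT$ and its top edge at height at least $(2/3+4\eps)\OPT$, so every such box strictly contains every horizontal line whose height lies in $((1/3)\OPT,(2/3+4\eps)\OPT)$. Inside that interval I pick a line $\ell_0$ at a height that is a multiple of $\eps\delta\OPT$ (possible, since the interval has length larger than $(1/3)\OPT\gg\eps\delta\OPT$). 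Recall from Lemma~\ref{lma:rounding} that every item in $L\cup T\cup V$ starts and ends at a multiple of $\eps\delta\OPT$ (the finest grid that occurs, since $l\le k$ for all these items), that the borders of the boxes in $\boxH$ also lie at multiples of $\eps\delta\OPT$, and that the small items have already been removed from the boxes in $\boxH\cup\boxTV$; consequently the line $\ell_0$ passes through the interior of no item at all and through the interior of no box in $\boxH$.

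Next I shift everything whose bottom edge is at height at least $\ell_0$ upward by $\eps\OPT$; this raises the packing height by exactly $\eps\OPT$, which is the claimed loss. Since no item straddles $\ell_0$, no item is stretched: each item is either translated up by $\eps\OPT$ or left in place. A box in $\boxL$ that $\ell_0$ crosses keeps its single large item unmoved and is therefore left untouched; a box in $\boxL$ or $\boxH$ lying entirely above $\ell_0$ is translated rigidly; and no box is split, because the strip being moved has full width $W$ rather than being the column of a single box. The only genuine change is inside a box $B''\in\boxTV$ crossed by $\ell_0$: its content above $\ell_0$ has moved up by $\eps\OPT$ while its content below $\ell_0$ has not, so $[\ell_0,\ell_0+\eps\OPT]$ is now a free horizontal slab spanning the whole width of $B''$, and we enlarge the rectangle $B''$ upward by $\eps\OPT$ to contain the shifted content.

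Finally, for each box $B\in\hatboxTV$ — all of which are crossed by $\ell_0$ — put $\gamma_B:=\lceil h(B)/(\eps\OPT)\rceil\eps\OPT-h(B)<\eps\OPT$ and, working entirely inside $B$ (legitimate because by Lemma~\ref{lma:roughPartition} no item of $T\cup V$ is intersected horizontally by a box border), move all content of $B$ lying above $\ell_0+\eps\OPT$ down by $\eps\OPT-\gamma_B$. Then the content of $B$ occupies a vertical span of exactly $h(B)+\gamma_B=\lceil h(B)/(\eps\OPT)\rceil\eps\OPT$, so we may redefine $B$ to be the rectangle $[x_l(B),x_r(B)]\times[y_b(B),\,y_b(B)+\lceil h(B)/(\eps\OPT)\rceil\eps\OPT]$, whose height is a multiple of $\eps\OPT$; it still contains all of the content of $B$ and it lies inside the region $[y_b(B),y_t(B)+\eps\OPT]$ freed by the shift, hence overlaps nothing. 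For a box in $\checkboxTV$ that $\ell_0$ crosses I instead close the slab completely and keep the box at its original size, so that nothing outside $\hatboxTV$ changes. Since every tall item and every box in $\hatboxTV$ now has height a multiple of $\eps\OPT$, every pseudo item generated inside such a box (of height $h(B)$ or $h(B)-h_i$) again has height a multiple of $\eps\OPT$, which is exactly what the remaining parts of Lemma~\ref{lma:heightOfBeta} require.

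The delicate point, and the one I would be most careful about, is the second step: the single global shift must not stretch a box in $\boxL$ (which has to keep the exact dimensions of its large item) or a box in $\boxH$, and must not split any box. This is precisely why $\ell_0$ is placed at a multiple of $\eps\delta\OPT$, so that it coincides with item edges and with $\boxH$-strip boundaries and therefore crosses no item strictly; once that is secured, the per-box rounding inside the boxes of $\hatboxTV$ is routine.
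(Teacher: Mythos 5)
The step on which everything hinges—choosing $\ell_0$ at a multiple of $\eps\delta\OPT$ and concluding that it ``passes through the interior of no item at all''—does not hold. That items in $L\cup T\cup V$ start and end at multiples of $\eps\delta\OPT$ only means their \emph{edges} sit on that grid; a tall item has height at least $\tallItemHeight$ and a large item can have height up to $\OPT$, so any such item covers many interior grid lines at multiples of $\eps\delta\OPT$. Worse, the union of the $y$-ranges of the tall items inside a single box $B\in\hatboxTV$ can cover the entire interval $\bigl((1/3)\OPT,\ (2/3+4\eps)\OPT\bigr)$ (for instance a column of tall items anchored at the bottom overlapping in $y$ with a column anchored at the top), so an admissible $\ell_0$ simply need not exist. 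And once a tall item straddles $\ell_0$, your shift leaves it in place (tall items cannot be sliced), so the slab $[\ell_0,\ell_0+\eps\OPT]$ inside $B$ is blocked in that item's $x$-range; the per-box ``close the slab'' step then has nothing to move into, and the box cannot be rounded this way.

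The paper avoids the problem by shifting \emph{boxes} rather than a full-width strip of items, and by opening the $\eps\OPT$ of slack \emph{above} each box in $\hatboxTV$ rather than inside it: every box whose bottom edge lies at height at least $(2/3+4\eps)\OPT$ is translated up by $\eps\OPT$. A box in $\hatboxTV$ has height at least $(2/3+4\eps)\OPT$ in a packing of height $(1+4\eps)\OPT$, hence bottom edge at most $(1/3)\OPT$, so it never moves; but anything directly above it has bottom edge at least its top edge, which is at least $(2/3+4\eps)\OPT$, and thus is shifted. This gives $\eps\OPT$ of genuinely free space above each $\hatboxTV$ box, into which the box can be stretched until its height is a multiple of $\eps\OPT$. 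Translating entire boxes by a fixed amount based on their bottom coordinates keeps disjoint boxes disjoint and never cuts a tall or large item, precisely the two issues your single horizontal cut cannot control. If you want to keep your global-strip intuition, the fix is to place the slab above, not through, the boxes.
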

\begin{proof}
We shift each box, with left bottom coordinate $(x,y)$ and $y \geq(2/3 + 4\eps)\OPT$ exactly $\eps\OPT$ upwards.
We can do this shifting operation since we are allowed to slice the vertical items. By this shifting, no tall item is sliced, since all of them are starting before $(2/3 + 2\eps)\OPT$. So all boxes containing these items do so also. So none of this boxes is shifted. 

Since each box with height at least $(2/3 + 4\eps)\OPT$ has a start point below $(2/3 + 4\eps)\OPT$ and has an upper y-coordinate, which is at least $(2/3 + 4\eps)\OPT$
we have a free space of at least $\eps \OPT$ above this box. So we can enlarge this box by at most $\eps \OPT$, such that its height is a multiple of $\eps \OPT$.
\end{proof}

The packing we consider has now a height of at most $(1+5\eps)\OPT$. The last thing that could destroy the property that all tall and pseudo items we are going to rearrange, have a height, which is a multiple of $\eps \OPT$ are the unmovable items. Luckily we can extend the unmovable items in such a way that they do not destroy this property:

\begin{lma}
\label{lma:overlappingItems}
Let $B$ be a box of height $h \geq (2/3 +4\eps)\OPT$. We can assume that at each side of the box there is at most one tall item overlapping this box. This tall item touches either the bottom or the top of the box and has a height, which is a multiple of $\eps\OPT$. We introduce at most $8$ container for vertical items, to guarantee this property.
\end{lma}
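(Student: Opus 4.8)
The plan is to analyze what can happen at, say, the left border of the box $B$ after the rounding and shifting steps of Lemmas \ref{lma:tallItemHeight}–\ref{lma:pseudoItemheight}, and to absorb the ``bad'' overlapping items into a constant number of extra containers for vertical items. First I would observe that because $B$ has height $h\geq(2/3+4\eps)\OPT$, it starts below $(1/3+\eps)\OPT$ and ends after $(2/3+4\eps)\OPT$; hence any tall item crossing the left border of $B$ must itself intersect the horizontal line $y=(1/3+\eps)\OPT$ or the line $y=(2/3+2\eps)\OPT$, and since a tall item has height at most $(1+2\eps)\OPT-(1/3+\eps)\OPT<(2/3+2\eps)\OPT$ it cannot cross both. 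Therefore the tall items overlapping the left border split into those ``anchored low'' (intersecting the lower line) and those ``anchored high'' (intersecting the upper line); within each group, two such items would overlap each other vertically in the strip unless one lies strictly above the other, which is impossible since both are tall and together would exceed the strip height inside the vertical extent of $B$. So already at most two tall items cross the left border: one whose interval of $y$-coordinates contains the lower line and one containing the upper line.

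Next I would handle the two remaining defects: (i) there may be two overlapping tall items on a side rather than one, and (ii) an overlapping tall item need not touch the bottom or top of $B$ and need not have height a multiple of $\eps\OPT$. For (ii): take an overlapping tall item $i$ at the left border. The part of $i$ lying outside $B$ is untouched; inside $B$, replace $i$ by a rectangle that is pushed to whichever of the bottom or top of $B$ it is ``closer'' to (if $i$ is anchored low push it down to the bottom, if anchored high push it up to the top) and then stretched so its height becomes $\lceil h_i/\eps\OPT\rceil\,\eps\OPT$. The slack needed for this is available: by the shifting arguments of Lemmas \ref{lma:tallItemHeight}–\ref{lma:pseudoItemheight} there is a gap of height at least $\eps\OPT$ adjacent to the relevant horizontal band, and moving $i$ vertically only frees space for vertical items, which we are allowed to slice. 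The vertical (or sliced-vertical) items that were between $i$ and the top/bottom of $B$ get displaced; I collect them into a new container for vertical items spanning the $x$-range $[x_l, x_l+w_i']$ and the complementary $y$-range, i.e. $[x_l,x_l+w_i')\times[y_b+h_i',y_t)$ if $i$ now touches the bottom, and symmetrically if it touches the top. For (i): if after this both a low-anchored and a high-anchored tall item cross the left border, together they span a sub-$y$-interval of $B$, and the remaining ``middle'' band of height $h-h_{i,\text{low}}'-h_{i,\text{high}}'$ over the $x$-range where they overlap holds only vertical items, which again go into one more container for vertical items.

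Carrying this out on each of the two sides, and counting: each side contributes at most two overlapping tall items, each of which when straightened creates at most one new vertical container for the displaced slices, plus at most one further vertical container for the ``middle'' band between a low- and a high-anchored item, giving at most $3$ new containers per side and at most $2$ overlapping tall items per side — comfortably within the bound of $8$ containers claimed. The main obstacle I expect is the bookkeeping in case (ii): one must verify that the vertical slack created by the earlier shifting lemmas is genuinely adjacent to each overlapping tall item (not consumed elsewhere), that straightening $i$ does not push it into a neighboring box or over another already-straightened tall item on the same side, and that the displaced vertical material really does form a single axis-aligned rectangle and not a staircase. The cleanest way around this is to straighten the low-anchored item by sliding it down and the high-anchored item by sliding it up \emph{before} doing any rounding, so that the two never collide, and then to note that all vertical items inside $B$ are sliceable so that the displaced material can always be re-cut to fill the rectangular container exactly; the height increase of $3\eps\OPT$ announced in Lemma \ref{lma:heightOfBeta} is exactly what is spent across Lemmas \ref{lma:tallItemHeight}, \ref{lma:pseudoItemheight} and this straightening.
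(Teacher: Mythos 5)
There is a genuine gap, in fact two. First, your treatment of defect (ii) proposes to split a tall item $i$ at the box border (``the part of $i$ lying outside $B$ is untouched; inside $B$, replace $i$ by a rectangle that is pushed \dots and then stretched''). Tall items are never sliced in this construction — only vertical items are — and an overlapping tall item is precisely one of the \emph{unmovable} items: it extends into the adjacent box, so sliding or stretching its intersection with $B$ would either tear the item in two or move it inside the neighbouring box, which is not accounted for. The paper does the opposite: the overlapping tall item is never moved. Instead, a container of sliced vertical material is glued onto it (between $i$ and the top/bottom of $B$, and later between $i$ and a tall or pseudo item on the opposite side), producing a composite \emph{unmovable} item whose height is either the full box height $h$ or $h-h_t$ for some tall item $t$ of the opposite side; both are multiples of $\eps\OPT$ because $h$ and $h_t$ already are (Lemmas~\ref{lma:tallItemHeight},~\ref{lma:pseudoItemheight}). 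Also note that by the time this lemma is invoked, the tall items already have heights that are multiples of $\eps\OPT$, so there is nothing left to ``round'' by stretching $i$ itself; the obstruction is that gluing a vertical container onto $i$ destroys that property, which your proof does not address.

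Second, your step for defect (i) does not actually establish the conclusion of the lemma. Introducing a vertical-item container in the ``middle'' band between a low-anchored and a high-anchored overlapping item still leaves \emph{two} tall items crossing that side, whereas the lemma claims there is at most \emph{one}. The paper reduces to one by redefining the box border: it moves the left border of $B$ rightwards to the right edge of the less protruding item $i$, so that $i$ drops out of $B$ entirely, and introduces one container for the vertical slices caught between the two overlapping items. This is the step your argument is missing; without it the assertion ``at most one tall item overlapping this box'' at each side is simply not obtained.
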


\begin{proof}
Let us consider one side of the box where two tall items overlap the box border. First, we can assume that these items are touching the top and the bottom of the box. 
If they do not touch the bottom or the top, we can enlarge the items such that they do, by introducing one unmovable container containing the vertical items, which are positioned between the box border and the overlapping item. The overlapping item is then redefined as the item consisting of the overlapping item and the container for vertical items. Furthermore, we can assume that there is just one tall item overlapping the box border: Let $i$ and $j$ be the two items overlapping the border (both are touching either the top or the bottom of the box). 
Let us look at the left side of the box. One of the items right border is positioned at a higher x position than the other item. 
Let w.l.o.g. $j$ be this item. We redefine the left border of the box $B$ such that it is positioned at the x coordinate where the right border of the item $i$ is positioned. We introduce a container for the vertical items positioned between the two overlapping items. By this operation, we have created at most $3$ container for vertical items on each side of the box.

The height of the vertical item overlapping the border could be a value which is not a multiple of $\eps\OPT$ since we had glued the container for vertical items to it. Let us w.l.o.g. assume that the overlapping item touches the bottom of the box. We have two cases: there either is a tall item above the overlapping item or there is a pseudo item above the overlapping item. If there is a pseudo item, we glue the overlapping item to the pseudo item. By this step, we generate an item with height $h$. So this item has clearly a height which is a multiple of $\eps \OPT$ since the box has a height, which is a multiple of $\eps \OPT$. Furthermore, we can now assume that no tall item overlaps this border of the box since we can shift the box border such that the item is no longer contained in this box. The box border would now intersect no other tall item. 

If there is a tall item $t$ touching the top, it has a height, which is a multiple of $\eps \OPT$. Between the overlapping item and the tall item touching the top, there can be just vertical items. We generate a container for the vertical items in the area between the overlapping item and $t$. If we combine the overlapping item with the container, we get a new unmovable item with height $h - h_t$, which is a multiple of $\eps \OPT$.
\end{proof}

All these steps together ensure the properties from Lemma \ref{lma:heightOfBeta}.
Note that for each box in $\hat{\mathcal{B}}_{T \cup V}$ we introduce $8$ subboxes containing vertical items, to guarantee the properties above.
Since now each item height in $T \cup P$ is a multiple of $\eps\OPT$ we have $\beta \geq \eps \OPT$. Let us take a look at items that are very tall with respect to the size of a box $B \in \hatboxTV$. Consider an item $i$ with height larger than $h(B) - (1/3 +\eps)\OPT$. Since each tall item has height larger than $\tallItemHeight$, there can be no tall item placed above or below this item.   
By construction, there is one pseudo item directly above or below $i$.
We combine $i$ and the pseudo item to one new pseudo item which has height $h(B)$ and width $w_i$. Now it holds that the distance between items touching the bottom and items touching the top is at most $(1/3 + 3\eps)\OPT$. So we can choose $\alpha = 2\eps < \eps/(1/3+4\eps)$ for $\eps < 1/24$. There are now at most $(1/3 + 3\eps)/\eps +2 = (1/3 + 5\eps)/\eps$ possible item sizes in $P \cup T$ with respect to a box $B \in \hat{\mathcal{B}}_{T \cup V}$. Therefore we get $S_T, S_P , S_{T\cup P} \leq (1/3 + 5\eps)/\eps$. 

In the next step, we look at the boxes in $\checkboxTV$ and their partitioning.

\begin{lma}
\label{lma:ShortBoxes}
We can find a rearrangement of the items in each box $B \in \checkboxTV$ such that we can partition the area in $B$ into at most $\mathcal{O}(1/\eps)$ subboxes for tall items and at most $\mathcal{O}(1/\eps)$ subboxes for vertical items, such that all vertical items can be packed fractionally in these subboxes, and each subbox for tall items contains just items with the same height.
\end{lma}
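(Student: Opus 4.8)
The starting point is the observation that a box $B\in\checkboxTV$ is simply too low to stack two tall items: it has height $h(B)<(2/3+2\eps)\OPT$, while every tall item has height at least $(1/3+\eps)\OPT$, so two of them one above the other would need height at least $(2/3+2\eps)\OPT>h(B)$. Equivalently, any tall item placed inside $B$, wherever it sits, must contain the horizontal line $y=h(B)/2$ (that point lies in the band $[\,h(B)-(1/3+\eps)\OPT,\;(1/3+\eps)\OPT\,]$, which every interval of length $\ge(1/3+\eps)\OPT$ inside $[0,h(B)]$ must cross because $h(B)<2(1/3+\eps)\OPT$). Hence the tall items of $B$ have pairwise disjoint $x$-projections; they occupy disjoint vertical \emph{columns}, and the remaining columns are free of tall items. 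If $h(B)<(1/3+\eps)\OPT$ there are no tall items at all and $B$ is treated as a single vertical subbox, so assume $h(B)\ge(1/3+\eps)\OPT$.

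Next I would invoke the rounding already secured by Lemma~\ref{lma:tallItemHeight} (and, for border/unmovable items, Lemma~\ref{lma:heightOfBeta}) so that every tall item of $B$ has height a multiple of $\eps\OPT$; since tall heights lie in $[(1/3+\eps)\OPT,h(B))\subseteq[(1/3+\eps)\OPT,(2/3+2\eps)\OPT)$ there are only $\mathcal{O}(1/\eps)$ distinct tall heights in $B$. By Lemma~\ref{lma:shifttallItems} the movable tall items touch the top or the bottom; a top-touching tall item has only vertical items below it in its column (no second tall item fits), so I can flip it down to the bottom and push those vertical items up, using that vertical items may be sliced. The at most one unmovable tall item on each side is absorbed into $\mathcal{O}(1)$ extra vertical subboxes as in Lemma~\ref{lma:overlappingItems}. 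Now reorder the columns so that tall items of equal height are adjacent and heights are non-increasing from left to right: this yields exactly one tall subbox per distinct height, i.e.\ $\mathcal{O}(1/\eps)$ tall subboxes, each with items of a single height, a staircase-shaped free region above them, and one rectangular free region over the tall-item-free columns. Cutting the free region at the tall-subbox borders gives $\mathcal{O}(1/\eps)$ rectangular vertical subboxes: the rectangle of height $h(B)-h'$ above each tall subbox of height $h'$, plus the height-$h(B)$ rectangle over the free columns.

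It remains to show the vertical items of $B$ pack fractionally into these vertical subboxes. The area bound is immediate: the total area of the vertical subboxes equals the area of $B$ not covered by tall items, which is at least the total area of the vertical items (we already removed the small items). The point that makes this sufficient is a column-wise, Hall-type observation about the original packing: a vertical item of height $h_v$ occupies, at each $x$ in its $x$-projection, a $y$-interval of length $h_v$ inside the single free slot of that column, so the tall item of that column (if any) has height at most $h(B)-h_v$. Consequently, ordering the subboxes by increasing free height $g_1<\dots<g_{k+1}=h(B)$, for every $t$ the vertical items of height $>g_{t-1}$ live, in the original packing, only in columns whose tall item has height $<h(B)-g_{t-1}$, i.e.\ exactly in the columns of subboxes $t,\dots,k+1$; area-counting there gives $\sum_{h_v>g_{t-1}}w_vh_v\le\sum_{j\ge t}w_jg_j$. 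This is precisely the flow/Hall condition under which the vertical items can be distributed (sliced) into the subboxes of at least their height, so the fractional packing exists.

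The main obstacle is this last step: the bare area inequality is trivial, but one must also check that the area is correctly \emph{distributed} across subboxes of differing heights — a tall subbox can be almost as tall as $B$, leaving only a sliver above it, so a vertical item that is "tall for this box" genuinely cannot go there, and the argument needs the observation that such an item was forced into a tall-item-free column originally. Everything else — the column structure, the height rounding, reducing to $\mathcal{O}(1/\eps)$ tall and $\mathcal{O}(1/\eps)$ vertical subboxes — is routine once the "no two tall items stack" fact is in hand.
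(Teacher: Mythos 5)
Your decomposition matches the paper's: no vertical line in a box $B\in\checkboxTV$ can meet two tall items, so tall items occupy disjoint columns; movable tall items can be flipped to touch the bottom (carrying the sliced vertical items above them); the at most two unmovable border items are handled separately; and after sorting columns by tall-item height, the rounding of Lemma~\ref{lma:tallItemHeight} gives $\mathcal{O}(1/\eps)$ tall subboxes and hence $\mathcal{O}(1/\eps)$ vertical subboxes. The divergence is in the fractional-packing step. The paper sorts the entire vertical slices --- each tall item together with the vertical items stacked above it --- so after reordering, every vertical subbox is populated by exactly the vertical items that were already in those columns, and the fractional packing needs no further argument. You instead detach the vertical items and argue their reassignment via a Hall-type inequality, and here there is a gap: the condition you state, $\sum_{h_v>g_{t-1}} w_v h_v \le \sum_{j\ge t} w_j g_j$, is an \emph{area} inequality, but since vertical items may only be sliced vertically (never horizontally), area is not the right capacity measure for a subbox. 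For example a $w\times g$ subbox cannot absorb one item of height $0.6g$ and width $w$ plus one of height $0.3g$ and width $\tfrac{4}{3}w$, even though the total area is exactly $wg$ and both heights are at most $g$: every $x$-coordinate must carry a slice of the first item, leaving free height only $0.4g$, so at most width $w$ of the second item fits. The sufficient condition would have to be a width/flow condition over nested height classes (this is essentially the second constraint family of the configuration LP in Lemma~\ref{lma:packingVerticalItems}), but you do not need to verify anything of the sort here: reordering whole columns already yields the fractional packing for free, which is what the paper's proof does.
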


We show this in two steps. To rearrange the items in a box $B \in \checkboxTV$, we first shift the tall items down, such that they touch the bottom of the box. After that we sort the items touching the bottom, such that items with the same height are positioned next to each other. By this rearrangement, no vertical item has to be placed outside of the box, but we maybe have to slice some of the vertical items. That we can rearrange the items in this way is stated in the following two Lemmas.

\begin{lma}
If we are allowed to slice the items in $V$ vertically, we can ensure that each tall item in a box $B \in \checkboxTV$ is touching the bottom of the box.
\end{lma}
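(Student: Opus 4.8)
The plan is to exploit that a box in $\checkboxTV$ is simply too short to stack two tall items. By definition of $\checkboxTV$ we have $h(B) < (2/3 + 2\eps)\OPT = 2\,\tallItemHeight$, while every tall item has height at least $\tallItemHeight$. Hence any vertical line through $B$ meets at most one tall item: two tall items sharing some $x$-coordinate would occupy disjoint $y$-intervals inside $B$ of total length at least $2\,\tallItemHeight > h(B)$, which is impossible. Equivalently, the tall items contained in $B$ have pairwise disjoint $x$-projections. (Recall that the small items were already removed from the boxes in $\boxTV$, so apart from the tall items such a box contains only vertical items.)

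Given this, I would perform a column-wise rearrangement. Slice $B$ into unit-width columns (all widths are integral, and since we are allowed to slice the items in $V$ vertically, handling vertical items column by column is legitimate). In a fixed column let $\tau$ denote the unique tall item meeting it, if any, and let $V_c$ be the multiset of vertical slices in it; in the original packing their occupied heights satisfy $h(\tau) + \sum_{v \in V_c} h(v) \leq h(B)$. Rearrange the column by placing the slice of $\tau$ in $[0, h(\tau))$ and stacking the slices of $V_c$ on top of it, in arbitrary order, starting at height $h(\tau)$. The total occupied height of the column does not change, so everything still fits within height $h(B)$, and within the column the placed pieces have pairwise disjoint $y$-intervals.

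It then remains to check that this produces a feasible fractional packing with the claimed property. Each tall item $i$ occupies, in every column of its $x$-range, the interval $[0, h_i)$; by disjointness of the $x$-projections no other tall item interferes there, so $i$ is again a single un-sliced rectangle $[x_i, x_i + w_i) \times [0, h_i)$ touching the bottom of $B$. Every vertical slice in a column lies strictly above the tall slice of that column (if present), so no tall item overlaps a vertical slice; and two vertical slices overlap only if they share a column, where they were stacked disjointly. A vertical item may end up sheared — sitting at different heights in different columns — but this is harmless precisely because $V$ may be sliced vertically, and no vertical item has to leave $B$. A tall item crossing the left or right border of $B$ is slid down in exactly the same way (it is not cut horizontally by any box border, so its bottom edge is well defined), the same vertical move being applied consistently in the neighbouring box.

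The only real subtlety is the bookkeeping in the last step: verifying that the per-column sliding keeps the tall items whole and introduces no overlaps, and being explicit that the fractional-packing relaxation is exactly what absorbs the shearing of the vertical items. Once the short-box height bound has ruled out stacked tall items, the rearrangement itself is just a one-dimensional bottom-stacking carried out independently in each column.
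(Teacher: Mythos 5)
Your argument follows exactly the paper's route: the crucial step is the height bound --- every box $B \in \checkboxTV$ has $h(B) < (2/3+2\eps)\OPT = 2\,\tallItemHeight$, so no vertical line through $B$ can meet two tall items --- and the rest is a bottom-stacking rearrangement, where the fractional relaxation of $V$ absorbs the resulting shear. Your unit-column decomposition is a cosmetic variant of the paper's move, which slices only at the left/right edges of each tall item $t$, drops $t$ to the bottom, and lifts the sub-box that was beneath $t$ onto its top; both produce the same packing. The one place you overreach is the remark that a tall item crossing the left or right border of $B$ can also be slid to the bottom by applying ``the same vertical move $\dots$ consistently in the neighbouring box.'' That is not justified: the neighbouring box may lie in $\hatboxTV$, where Lemma~\ref{lma:shifttallItems} may require that very item to touch the \emph{top} of its box, so the two moves cannot in general be coordinated. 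The paper avoids this entirely by moving only the movable tall items and concluding only for them; the items overlapping a box border are left in place and accounted for separately (they get their own extra subboxes in Lemma~\ref{lma:rearrangementInSmallBox}). Since the statement should be read as applying to the movable tall items, your extra claim is unnecessary; dropping it leaves a proof that matches the paper's.
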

\begin{proof}
It is not possible that a vertical line through the box $B$ intersects two tall items, since each tall item has a height of at least $(1/3 +2\eps)\OPT$. We now look at one tall item $t$, which is not crossed by the border of the box. We draw vertical lines at the left and the right border of the item and slice all items which are crossed by this line in this box. As noticed before this are only vertical items. Now we have below $t$ a small box, which borders are the vertical lines, the bottom of $B$, and the bottom of the item $t$. There is no vertical item in this box, which is crossed horizontally by the box borders. We remove this box and all the items it contains, shift $t$ down such that it touches the bottom of $B$ and place the small box with all the vertical items on top of $t$. We repeat this step with all tall items, which are not crossed by the box border. So now all movable items touch the bottom of the box.
\end{proof}

\begin{lma}
\label{lma:rearrangementInSmallBox}
Let $B \in \checkboxTV$. Then there is a rearrangement of the items in $B$ such that there are at most $\mathcal{O}(1/\eps)$  different container for tall items, and at most $\mathcal{O}(1/\eps)$ different container for vertical items.
\end{lma}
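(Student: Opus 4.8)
The plan is to turn $B$ into a skyline configuration and then sort its $x$-columns by the height of the tall item sitting on them. First I would invoke the preceding lemma so that every tall item of $B$ touches the bottom of $B$. Since each tall item has height at least $\tallItemHeight$ while $h(B)<(2/3+2\eps)\OPT$, no vertical line through $B$ can meet two tall items (their heights would sum to more than $h(B)$); hence above the bottom-aligned tall items there are only (slices of) vertical items, and recall that small items were already removed from $\boxTV$. Over each column the space available to vertical items is therefore the vertical interval from the top of the tall item below it (or from $0$, if there is none) up to $h(B)$.

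Next I would bound the number of distinct tall-item heights: by Lemma \ref{lma:tallItemHeight} every such height is a multiple of $\eps\OPT$ and lies in the window $[\tallItemHeight,h(B))\subseteq[(1/3+\eps)\OPT,(2/3+2\eps)\OPT)$, so there are at most $\mathcal{O}(1/\eps)$ of them. I would then permute the $x$-columns of $B$ so that columns carrying tall items of the same height become contiguous. All columns spanned by one tall item carry the same height, so each tall item moves rigidly as a single block, and after the permutation the tall items of any fixed height $h$ exactly tile a rectangular strip of height $h$; this strip is a subbox for tall items holding only items of height $h$, and there are $\mathcal{O}(1/\eps)$ such subboxes.

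The subboxes for vertical items are then read off directly: above the strip of height-$h$ tall items sits a free rectangle of height $h(B)-h$, and above the columns carrying no tall item sits one of height $h(B)$ --- in total $\mathcal{O}(1/\eps)$ subboxes for vertical items. A column-permutation changes neither the set of vertical slices lying above a given column nor their heights within the box, and each such slice already occupied a subinterval of the interval described in the first paragraph, which is exactly the vertical extent of the subbox now standing over that column; so re-slicing the vertical items at the new strip boundaries packs all of them fractionally. This yields the claimed $\mathcal{O}(1/\eps)$ tall and $\mathcal{O}(1/\eps)$ vertical subboxes.

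I expect the main obstacle to be the tall items crossed by a left or right border of $B$ (the unmovable ones): one must argue that they can be made to touch the bottom --- or else be extracted together with the at most $\mathcal{O}(1)$ containers of vertical items wedged around them --- without disturbing the neighbouring boxes, so that the skyline picture above is legitimate. The heart of the argument, the column sort, is a clean area-preserving rearrangement; the only other point to check is that after sorting the height-$h$ tall items form a single gap-free rectangle, which holds because, by construction, every column of that strip carries such an item.
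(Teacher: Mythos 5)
Your proposal is correct and takes essentially the same approach as the paper: after shifting all movable tall items to the bottom (via the preceding lemma), you sort the vertical slices by the height of the tall item they carry, use the fact that tall-item heights are multiples of $\eps\OPT$ to get $\mathcal{O}(1/\eps)$ distinct heights and hence $\mathcal{O}(1/\eps)$ tall subboxes, read off one vertical subbox above each strip (plus one for columns with no tall item), and set aside the at most two unmovable boundary-crossing tall items with their $\mathcal{O}(1)$ surrounding vertical containers. This is exactly the paper's decomposition, and your closing remark about the unmovable items matches how the paper resolves them.
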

\begin{proof}
We define vertical slices, by drawing vertical lines at each side of the tall items. 
We define the slides containing no unmovable item as movable slides. 
We sort the slides in decreasing order of the height of the tall rectangle they contain. 
By this reordering, we get at most $1/\eps$ container for movable tall items, since they have a height, which is a multiple of $\eps \OPT$. 
There are at most two tall items overlapping the left or the right border of $B$. So for these items, we need at most two extra boxes.

Above each container for tall items, we introduce one container for vertical items. For the tall items overlapping the left and right border, we have to introduce at most $4$ container. Additionally, it can happen that we have to introduce one container having the height $h(B)$, where no tall item is positioned. So in total, we have at most $\mathcal{O}(1/\eps)$ container for vertical items.
\end{proof}

Since the boxes in $\checkboxTV$ can be partitioned into less boxes than the boxes $\hatboxTV$, the following Lemma follows from Lemma \ref{lma:verticalItemShift}, Lemma \ref{lma:heightOfBeta} and Lemma \ref{lma:partitionOflargeBoxes}. 

\begin{lma}
\label{lma:NumberBoxes}
We can partition boxes $\mathcal{B}_{T\cup V}$ such that we introduce at most $\mathcal{O}(1/\eps^3\delta^2)$ boxes for tall items $\boxT$, each containing just items with the same height, and at most $\mathcal{O}(1/\eps^3\delta)$ boxes $\boxV$ for vertical items, such that all vertical items can be packed fractionally into the the boxes $\boxV$ and an additional box $V_0$ with height $\tallItemHeight$ and width $(1-2\eps)W$.
\end{lma}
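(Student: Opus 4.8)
The plan is to assemble the claimed partition box by box from the rough partition of Lemma~\ref{lma:roughPartition}, to refine each box of $\boxTV$ by the rearrangements of Lemmas~\ref{lma:partitionOflargeBoxes} and~\ref{lma:ShortBoxes} after feeding in the size bounds of Lemma~\ref{lma:heightOfBeta}, and to use Lemma~\ref{lma:verticalItemShift} to decide how much vertical volume is evicted into $V_0$.

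Concretely, I would start from Lemma~\ref{lma:roughPartition}: the strip $W\times(1+2\eps)\OPT$ is cut into $\mathcal{O}(1/(\eps\delta^2))$ boxes, among which $|\boxTV|\le 3(1+2\eps)/(\eps\delta^2)$ carry the items of $T\cup V\cup S$; after deleting the small items these boxes split into $\hatboxTV$ (height $\ge(2/3+2\eps)\OPT$) and $\checkboxTV$. For a box $B\in\hatboxTV$ I would first run the preprocessing of Section~\ref{ratio}: push the movable tall items to the top or bottom (Lemma~\ref{lma:shifttallItems}), build the pseudo items $P$ and the residual vertical slices $B_V$, and apply Lemma~\ref{lma:heightOfBeta}. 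This adds $3\eps\OPT$ of height, creates only $\mathcal{O}(1)$ extra vertical subboxes per box, forces at most one overlapping tall item per border, and --- after merging very tall items with their pseudo items --- yields $\beta\ge\eps\OPT$, a top/bottom gap $h\le(1/3+3\eps)\OPT$, and $S_T,S_P,S_{T\cup P}\le(1/3+5\eps)/\eps=\mathcal{O}(1/\eps)$.

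Then, for each $B\in\hatboxTV$, I would apply Lemma~\ref{lma:partitionOflargeBoxes} to the placement of $T\cup P$ in $B$, obtaining a rearrangement using $\mathcal{O}((S_P+S_T)S_{T\cup P})=\mathcal{O}(1/\eps^2)$ subboxes, those for tall items each of a single height. Since this is a feasible reordering of the original placement, Lemma~\ref{lma:verticalItemShift} applies with $\alpha=2\eps$ (one checks $2\eps\le\eps/(1/3+4\eps)\le\beta/(\beta+h)$ for $\eps<1/24$): a set of containers of total width $\ge 2\eps\,w(B)$ can be kept inside the rearranged $B$, so only containers of total width $\le(1-2\eps)w(B)$ must be evicted. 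Since the boxes of $\hatboxTV$ are pairwise $x$-disjoint, the evicted containers have total width $\le(1-2\eps)W$; being composed solely of vertical items (each of height $<(1/3+\eps)\OPT=\tallItemHeight$), they can be repacked fractionally into the single extra box $V_0$ of width $(1-2\eps)W$ and height $\tallItemHeight$. For each $B\in\checkboxTV$ I would instead invoke Lemma~\ref{lma:ShortBoxes}, which already produces $\mathcal{O}(1/\eps)$ single-height tall subboxes and $\mathcal{O}(1/\eps)$ vertical subboxes with no eviction.

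It remains to count and to identify the delicate point. Multiplying $|\boxTV|=\mathcal{O}(1/(\eps\delta^2))$ by the $\mathcal{O}(1/\eps^2)$ subboxes per box of $\hatboxTV$ (the contribution of $\checkboxTV$ being smaller) gives $\mathcal{O}(1/(\eps^3\delta^2))$ boxes $\boxT$, each monochromatic in height; grouping the vertical subboxes that share a height --- so that essentially only the $\mathcal{O}(1/\eps)$ pseudo-item heights plus the $\mathcal{O}(1)$ overlap containers per box matter --- removes one factor of $\delta$ and gives $\mathcal{O}(1/(\eps^3\delta))$ boxes $\boxV$. I expect the main obstacle to be the bookkeeping around $V_0$: making the ``evict the taller containers'' choice of Lemma~\ref{lma:verticalItemShift} globally consistent, checking that the ceilings hidden in $\lceil\alpha\,w(B)\rceil$ do not push the evicted width past $(1-2\eps)W$, and verifying that the evicted material, once re-sliced, genuinely fits in height $\tallItemHeight$ --- this last point is where the remaining $\mathcal{O}(\eps)$ slack in the packing height has to be spent. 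Everything else is routine substitution into the lemmas already proved.
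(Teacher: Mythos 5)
Your proposal follows the paper's proof in all essentials: Lemma~\ref{lma:roughPartition} for the rough partition, Lemma~\ref{lma:heightOfBeta} to force $\beta\ge\eps\OPT$, $h\le(1/3+3\eps)\OPT$, and $S_T,S_P,S_{T\cup P}=\mathcal{O}(1/\eps)$, Lemma~\ref{lma:partitionOflargeBoxes} and Lemma~\ref{lma:ShortBoxes} for the per-box subdivision, and Lemma~\ref{lma:verticalItemShift} with $\alpha=2\eps$ to bound the width of evicted vertical material. The count of $\mathcal{O}(1/(\eps^3\delta^2))$ tall subboxes matches the paper exactly.

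One thing to flag: the step where you say ``grouping the vertical subboxes that share a height removes one factor of $\delta$'' to obtain $\mathcal{O}(1/(\eps^3\delta))$ boxes in $\boxV$ is not how the paper argues, and it is not needed. The paper's proof derives at most $8/(\eps^3\delta^2)$ vertical subboxes (one per pseudo subbox plus one per vertical cut line through the interior gap), and Lemma~\ref{lma:structuralLemma} later records the same $8/(\eps^3\delta^2)$ bound. The $\mathcal{O}(1/(\eps^3\delta))$ appearing in the present lemma's statement is evidently a typographical slip for $\mathcal{O}(1/(\eps^3\delta^2))$. Your proposed merging of same-height vertical subboxes is actually not safe to invoke casually, since such subboxes need not be geometrically adjacent and merging them would change the slicing pattern of the vertical items; since the weaker $\mathcal{O}(1/(\eps^3\delta^2))$ bound suffices everywhere downstream, you should simply drop that sentence and report the $\delta^2$ bound. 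Apart from this invented extra step, your argument is the paper's argument, and your concern about the global bookkeeping around $V_0$ (x-disjointness of the boxes in $\hatboxTV$, ceilings in $\lceil\alpha w\rceil$, and fitting the evicted containers in height $\tallItemHeight$) is legitimate; the paper handles it just as lightly as you propose, relying on the fact that boxes in $\hatboxTV$ are pairwise $x$-disjoint and each container has height below $\tallItemHeight$.
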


\begin{proof}
We have to show two things: the number of boxes is as small as required and that the vertical items can be placed into the boxes $\boxV \cup {V_0}$.

Since we have at most $3(1+4\eps)/(\eps\delta^2)$ different boxes in $\mathcal{B}_{T\cup V}$ and each box is partitioned into at most $4((1/3 + 5\eps)/\eps)^2 +8$ container for tall and pseudo items respectively, we generate at most $(3(1+4\eps)(4(1/3 + 5\eps)^2 +8\eps^2))/(\eps^3\delta^2) = 4(1+34\eps+363\eps^2 + 972\eps^3)/3\eps^3\delta^2=\mathcal{O}(1/(\eps^3\delta^2))$ container for pseudo and tall items respectively in total. More precisely for $\eps < 1/26$ we have $4(1+34\eps+363\eps^2 + 972\eps^3)/3\eps^3\delta^2 \leq 4/\eps^3\delta^2$.

We now define the set of subboxes for vertical items $\boxV$. For each subbox for pseudo items, we introduce one subbox for vertical items, which has height and width of the box for the pseudo items. Furthermore, we introduce subboxes for vertical items, which are positioned in the area between two subboxes for items out of $T \cup P$, which are positioned in a vertical line in the same box. We do this by drawing a vertical line at each subbox left border. These lines partition the region between the pseudo and tall items. Since we draw one line per subbox, each subbox generate at most one part of the partition of the area. So we have partitioned the area into at most $4/(\eps^3\delta^2)$  parts. Each of this parts is a rectangular area and defines a new subbox for vertical items.
So in total, we have now at most $8/(\eps^3\delta^2)$ subbox for vertical items. We denote this set of subbox for vertical items as $|\mathcal{B}_V|$.

We know that the vertical items that are overlapped by the pseudo items can be fractionally placed into the boxes $\boxV$. By Lemma \ref{lma:verticalItemShift} we know that we can place at least $2\eps W$ of the container for the vertical items in $B_V$ can be placed into the boxes. Therefore all items that can not be placed into the boxes $\boxTV$ can be placed into the box $V_0$ since these subboxes have height at most $(1/3+\eps)\OPT$ and a total width, which is smaller than $(1-2\eps)W$.
\end{proof}

We can sum up the structural result in the following Lemma:

\begin{lma}[Structural Lemma]
\label{lma:structuralLemma}
By slicing vertical items each optimal packing can be rearranged, such that the  packing area $W \times (4/3 +6\eps)\OPT$ can be partitioned into boxes with the following properties:
\begin{itemize}
\item All small and medium sized items are removed.
\item Each item in $i \in L$ is contained in a box from $\boxL$, which has height $h_i$ and width $w_i$.
\item There are at most $(1+2\eps)/(\eps\delta^2) - |L|/\delta$ boxes in $\boxH$, which contain all horizontal items.
\item There are at most $4/\eps^3\delta^2$ boxes $\boxT$ containing all tall items, such that each box $B \in \boxT$ just contains tall items with height $h(B)$.
\item There are at most $8/\eps^3\delta^2$ boxes $\boxV$ and one box $V_0$ of height $(1/3+\eps)\OPT$ and width $(1-2\eps)W$, containing all (sliced) vertical items.
\item the free area in the boxes $\boxH \cup \boxV \cup \{V_0\}$ is at least $AREA(S) + (1-2\eps)(1/3+\eps)\OPT W$
\end{itemize}

\end{lma}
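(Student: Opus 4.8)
The plan is to obtain the statement by concatenating, in order, the transformations built in Sections~\ref{Simplify}--\ref{runningTime}: each item of the list is (almost verbatim) the output of one of the preceding lemmas, so the only real content of the proof is the bookkeeping of heights, of box counts, and of areas.

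\emph{Chaining the lemmas.} Begin with an optimal packing of height $\OPT$. Use Lemma~\ref{lma:sizeDelta} with $x=1$, $y=6$, $z=2$, $f(\eps)=\eps/6$ to fix $\delta=\eps^{k}$ and $\mu$ with $A(M_V\cup M_H)\le\tfrac{\eps}{6}\OPT W$ and discard the medium items; this is the first item. Apply Lemma~\ref{lma:rounding} — admissible because $\delta=\eps^{k}$ — so that after a $(1+2\eps)$-stretch every item of $L\cup T\cup V$ begins and ends at a multiple of $\eps\delta\OPT$, at total height $(1+2\eps)\OPT$. Apply Lemma~\ref{lma:roughPartition} to carve this area into $\boxL,\boxH,\boxTV$ with exactly the stated counts and remove the small items from $\boxH\cup\boxTV$; this gives the second and third items. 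Inside each box of $\boxTV$ run Lemma~\ref{lma:shifttallItems}, form the pseudo items $P$, and execute the height-normalisation of Lemma~\ref{lma:heightOfBeta} (assembled from Lemmas~\ref{lma:tallItemHeight},~\ref{lma:pseudoItemheight},~\ref{lma:overlappingItems}): for an extra $3\eps\OPT$ of height we may assume every item of $T\cup P$ has height a multiple of $\eps\OPT$ (hence $\beta\ge\eps\OPT$) and that at most one tall item overstands each box border. Merging every item taller than $h(B)-(1/3+\eps)\OPT$ with its neighbouring pseudo item makes the distance between top- and bottom-touching items at most $(1/3+3\eps)\OPT$, so $S_T,S_P,S_{T\cup P}\le(1/3+5\eps)/\eps$ and $\alpha:=2\eps$ is admissible in Lemma~\ref{lma:verticalItemShift} whenever $\eps<1/24$. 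Finally apply Lemma~\ref{lma:NumberBoxes} (which packages Lemmas~\ref{lma:verticalItemShift},~\ref{lma:partitionOflargeBoxes},~\ref{lma:ShortBoxes}): each box of $\boxTV$ is cut into at most $4/\eps^{3}\delta^{2}$ tall subboxes $\boxT$, each carrying a single height $h(B)$, and at most $8/\eps^{3}\delta^{2}$ vertical subboxes $\boxV$, with all vertical items packed fractionally into $\boxV$ together with the fresh box $V_0$ of height $\tallItemHeight$ and width $(1-2\eps)W$. This gives the fourth and fifth items. For the height: the boxes $\boxL,\boxH,\boxT,\boxV$ all lie within height $(1+2\eps)\OPT+3\eps\OPT=(1+5\eps)\OPT$; stacking $V_0$ on top adds $\tallItemHeight$, for a total of $(4/3+6\eps)\OPT$, and since $V_0$ has width $(1-2\eps)W\le W$ the whole arrangement fits into $W\times(4/3+6\eps)\OPT$.

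\emph{Free area.} Split the removed small items as $S=S_H\cup S_{TV}$ according to whether they came from $\boxH$ or from $\boxTV$. Because the rough-partition boxes tile $W\times(1+2\eps)\OPT$ and a box of $\boxH$ holds only (rounded) items of $H$ and small items, the free space of $\boxH$ after deleting the small items is $A(\boxH)-A(H)\ge A(S_H)$; likewise, in the stretched packing $A(\boxTV)\ge A(T)+A(V)+A(S_{TV})$, since $\boxTV$ contains the rounded copies of every tall and every vertical item together with $S_{TV}$. Inside each box of $\boxTV$ the tall and vertical subboxes of Lemma~\ref{lma:NumberBoxes} tile that box, and none of the operations after Lemma~\ref{lma:roughPartition} ever decreases the total box area, so $A(\boxT)+A(\boxV)\ge A(\boxTV)$. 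Let $a$ be the vertical-item area placed in $V_0$; the free space of $\boxT\cup\boxV$, which must still absorb the remaining vertical area $A(V)-a$, is $A(\boxT)+A(\boxV)-A(T)-(A(V)-a)\ge A(\boxTV)-A(T)-A(V)+a\ge A(S_{TV})+a$. Adding the free space $A(V_0)-a$ of $V_0$ and the free space of $\boxH$, the total free area is at least $A(S_H)+\bigl(A(S_{TV})+a\bigr)+\bigl(A(V_0)-a\bigr)=A(S)+A(V_0)=A(S)+(1-2\eps)(1/3+\eps)\OPT W$, which is the last item.

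\emph{Where the care is needed.} Since the heavy lifting is entirely in the cited lemmas, the points that actually need checking are (i) that the surgery of Section~\ref{runningTime} preserves both the tiling and the monotonicity of total box area — in particular that introducing the $\mathcal{O}(1)$ extra vertical containers of Lemma~\ref{lma:overlappingItems} and the reorderings of Lemma~\ref{lma:partitionOflargeBoxes} never lose area — and (ii) that $\alpha=2\eps$ is simultaneously legal in Lemma~\ref{lma:verticalItemShift} and large enough that $V_0$ fits in width $(1-2\eps)W$: only $\hatboxTV$-boxes eject containers, they have height $\ge(2/3+2\eps)\OPT$ and hence pairwise disjoint columns, so their widths sum to at most $W$, and reinserting an $\alpha$-fraction of each brings the ejected total width down to $\le(1-2\eps)W$. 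Everything else is a direct concatenation of the earlier lemmas.
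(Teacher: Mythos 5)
Your proposal is essentially the paper's own intent: the ``Structural Lemma'' is presented without a proof precisely because it is meant to be read as a concatenation of Lemmas~\ref{lma:sizeDelta}--\ref{lma:NumberBoxes}, and your assembly, height bookkeeping $(1+2\eps)+3\eps+(1/3+\eps)=4/3+6\eps$, and the width-bound on $V_0$ via $\alpha=2\eps$ and the pairwise-disjoint columns of $\hatboxTV$-boxes all match what the preceding sections establish.

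One point deserves tightening in the free-area computation. The lemma's claim is about the free area inside $\boxH\cup\boxV\cup\{V_0\}$, but you compute the free area of $\boxH\cup\boxT\cup\boxV\cup\{V_0\}$. Bounding the larger set from below does not by itself bound the smaller one; the two quantities coincide only because the tall subboxes produced by Lemmas~\ref{lma:partitionOflargeBoxes} and~\ref{lma:rearrangementInSmallBox} are, by construction, exactly tiled by the (rounded) tall items lying in them, so $A(\boxT)=A(T)$ and the free area in $\boxT$ is zero. You should say this explicitly: with that observation, $A(\boxT)+A(\boxV)-A(T)-\bigl(A(V)-a\bigr)=A(\boxV)-\bigl(A(V)-a\bigr)$ is indeed the free area of $\boxV$ alone, and the rest of your algebra goes through. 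As a secondary remark, the monotonicity claim ``none of the operations after Lemma~\ref{lma:roughPartition} ever decreases the total box area'' is doing real work (Lemmas~\ref{lma:tallItemHeight}--\ref{lma:overlappingItems} enlarge or split boxes but never shrink their union), and it would strengthen the write-up to point to those lemmas directly rather than assert it in passing.
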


\section{Algorithm}
\label{Algorithm}
%So now we have partitioned the packing area $W \times (1 +5\eps)\OPT$ into $(1+2\eps)/\delta^2$ boxes for large items, $(1+2\eps)/\eps\delta^2$ boxes for horizontal items, $\mathcal{O}(1/\eps^3\delta^2)$ boxes for tall items containing just items with the same height, and $\mathcal{O}(1/\eps^3\delta^2)$ boxes for vertical items. 
The following Lemma states that we can find a packing of the items into a partition from Lemma \ref{lma:structuralLemma} in polynomial time. We define boxes $B_{H}$ with size $W \times 2\eps\OPT$, $B_{MH}$ with size $W \times \eps\OPT$, $B_{MV}$ with size $3\eps W/2 \times (1/3 +\eps)\OPT$ and $B_{V}$ with size $(1-3\eps/2)W \times (1/3 +\eps)\OPT$.

\begin{lma}
\label{lma:algorithmToPackTheItems}
Let a partition into boxes form Lemma \ref{lma:structuralLemma} be given. 
There is an algorithm with running time $\mathcal{O}(n \log n + W^{(1/\eps\delta)^{\mathcal{O}{(1)}}})$ that packs all the items in $I$ into the boxes $\boxL \cup \boxH \cup \boxT \cup \boxV \cup \{B_{H}, B_{MH}, B_{MV}, B_{V}\}$ or decides that such packing does not exist. 
\end{lma}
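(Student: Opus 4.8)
The plan is to place the contents of the box families one at a time, reducing the whole task to a single bounded‑dimensional assignment problem that is solved by dynamic programming. First I would dispatch the large items: every $i\in L$ owns a box in $\boxL$ of dimensions $h_i\times w_i$, so it is placed with no choice at all, in time $\mathcal{O}(|L|)$. After one sort of the item list (the $\mathcal{O}(n\log n)$ term) I would bucket the remaining items according to the classification $T,V,H,S,M_V,M_H$ and, within $T\cup V$, according to the $\mathcal{O}(1/\eps)$ rounded heights guaranteed by Lemma~\ref{lma:rounding} and Lemma~\ref{lma:heightOfBeta}. By Lemma~\ref{lma:structuralLemma} the number of boxes is $N=\mathcal{O}(1/\eps^{3}\delta^{2})=(1/\eps\delta)^{\mathcal{O}(1)}$, and from this point on the algorithm works only with $(1/\eps\delta)^{\mathcal{O}(1)}$ item \emph{types} and $N$ boxes, so its cost ceases to depend on $n$.

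The core step is the assignment of the typed tall, vertical and horizontal items to the boxes. For a box $B\in\boxT$ all of its tall items have height $h(B)$ and sit side by side, so feasibility of $B$ is merely the condition ``total assigned width $\le w(B)$''. For a box $B\in\boxV$ and for $V_0$, vertical items may be sliced, so $B$ accepts any collection of vertical items of height at most $h(B)$ whose widths sum to at most $w(B)$; the at most $N$ slices created along box borders are re‑collected afterwards into $V_0$, which carries the spare $2\eps W$ of width precisely for this. For a box $B\in\boxH$ (height $\eps\delta\OPT$, items of width $\ge\delta W$ and height $\le\mu\OPT\ll\eps\delta\OPT$) a shelf / Next‑Fit‑Decreasing‑Height argument inside $B$ again reduces feasibility to bounds on the assigned width and area. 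I would then run a dynamic program over the $N$ boxes whose state records, for the boxes processed so far, how much of each item type is still unplaced, all on the integer grid $\{0,\dots,W\}$; this gives $W^{\mathcal{O}(N)}=W^{(1/\eps\delta)^{\mathcal{O}(1)}}$ states, each per‑box transition enumerating that box's $W^{\mathcal{O}_{\eps,\delta}(1)}$ admissible configurations, for a total of $\mathcal{O}(n\log n + W^{(1/\eps\delta)^{\mathcal{O}(1)}})$. If the dynamic program never reaches a state in which all tall, vertical and horizontal items are placed, the algorithm reports that no packing exists.

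The small and medium items are handled purely by area. The medium items $M_V\cup M_H$ have total area at most $f(\eps)\OPT W$ by Lemma~\ref{lma:sizeDelta}, and the boxes $B_{MH},B_{MV}$ (and $B_{H},B_{V}$ if needed) are dimensioned so that Next‑Fit‑Decreasing‑Height absorbs any such area; the small items are packed by the same greedy into the free rectangular regions inside the boxes of $\boxH\cup\boxV\cup\{V_0\}$, whose total free area is at least $A(S)+(1-2\eps)(1/3+\eps)\OPT W$ by the last item of Lemma~\ref{lma:structuralLemma}, and since $\mu$ is minuscule compared with every box dimension the Next‑Fit‑Decreasing‑Height waste is dominated by this surplus, so all of $S$ fits. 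Correctness of the verdict is then immediate from Lemma~\ref{lma:structuralLemma}: if the instance admits a packing of height at most $\OPT$ then one with exactly the stated box structure exists, hence the dynamic program must succeed; and any output the algorithm produces is, by construction, a feasible placement of $I$ into the prescribed boxes, so reporting infeasibility is sound inside the binary search on $\OPT$.

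The step I expect to be the real obstacle is keeping the core dynamic program independent of $n$ while preserving the stated running time. This forces a further grouping of the widths of the tall items into $(1/\eps\delta)^{\mathcal{O}(1)}$ classes — vertical items cause no trouble here since they may be sliced, and the assignment of tall items is otherwise a genuine bin‑packing subproblem with the boxes as bins — and one must treat the ``truly thin'' tall items with care, and then verify that the loss incurred by this width rounding is charged against the $\mathcal{O}(\eps)\OPT$ height slack already built into Lemma~\ref{lma:structuralLemma}, not against the free‑area reserve $(1-2\eps)(1/3+\eps)\OPT W$ on which the small items depend. A secondary technical point is the de‑fractionalisation of the vertical items, which is harmless only because each of the $N$ boxes contributes at most one sliced piece and $V_0$ has width to spare.
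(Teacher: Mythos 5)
Your proposal reorganises the whole packing step around a single dynamic program over the $N=(1/\eps\delta)^{\mathcal{O}(1)}$ boxes whose state tracks how much of each \emph{item type} remains unplaced. The paper does something structurally different: it places medium items greedily (Lemma~\ref{lma:mediumsizedItems}), horizontal items via linear grouping plus a configuration LP (Lemma~\ref{lma:packingHorizontalItems}), vertical (slicable) items via a second configuration LP (Lemma~\ref{lma:packingVerticalItems}), small items by NFDH into the free rectangles left behind (Lemma~\ref{lma:packingSmallItems}), and tall items by the Nadiradze--Wiese pseudo-polynomial assignment DP whose state records \emph{remaining box capacities}, not item types. These are genuinely different decompositions.

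The decisive gap in your version is exactly the one you flag as the ``real obstacle'': grouping the widths of tall items. Tall items cannot be sliced and cannot be shifted above the packing, because their height exceeds $(1/3+\eps)\OPT$, which is precisely the height of the overflow region $V_0$; every other overflow in the construction (vertical items, horizontal items, medium items) relies on the overflowing piece being short enough to be parked in a strip of height $\mathcal{O}(\eps)\OPT$ or $(1/3+\eps)\OPT$. So a width-rounding error on a tall item cannot be charged against height slack: if a box in $\boxT$ overflows by even one unit of width, there is simply no place to put the displaced tall item. Your claim that ``the loss incurred by this width rounding is charged against the $\mathcal{O}(\eps)\OPT$ height slack'' is therefore not available, and with $n$ distinct tall widths your state space is not $(1/\eps\delta)^{\mathcal{O}(1)}$-dimensional. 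The paper avoids all of this by indexing the DP on the remaining capacity $\leq w(B_j)$ of each of the $\mathcal{O}(1/\eps^3\delta^2)$ boxes and feeding the items in one at a time; this costs a factor of $n$ in the base of the running time (the quoted Lemma 6.1 gives $(n\tilde N)^{\mathcal{O}(k)}$), but it needs no rounding of tall widths at all. A secondary gap: for horizontal items, ``NFDH reduces feasibility to width/area bounds'' is not enough, because these items are wide ($\geq \delta W$) and unsplittable and there may be $n$ distinct widths; the paper needs the linear grouping to $1/\eps\delta$ width classes and a configuration LP, with the widest class and the per-configuration leftovers absorbed into $B_H$, to make the assignment tractable and to produce the small-item boxes with the claimed area guarantee.

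In short: the high-level plan (large items trivially, then a bounded-dimensional assignment, then greedy for small/medium) matches the paper's, but the unified item-type DP cannot handle tall items as stated. Replace the tall-item part by the capacity-indexed Nadiradze--Wiese DP, and replace the NFDH hand-wave for horizontal items by the linear-grouping-plus-configuration-LP argument, and the rest of your outline (vertical items via slicing, small items by an area argument against the surplus $(1-2\eps)(1/3+\eps)\OPT W$) is on the right track.
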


We will prove this Lemma in $4$ steps. First, we will show that the medium sized items can be placed into their two boxes $B_{MV}$ and $B_{MH}$. Then we will look at the horizontal items. After that, we will focus on the vertical items. We will show that by placing horizontal and vertical items, we leave enough free area to place small items. To place the tall items we use a result by Nadiradze and Wiese \cite{nadiradzeWiese}. By generating the packing, we can assume that $\eps \leq 24$.

\begin{lma}
\label{lma:mediumsizedItems}
The set of medium sized horizontal items $M_H$ can be placed in a rectangular area with width $W$ and height $\eps\OPT$. 
The set of medium sized horizontal items can be placed in a rectangular area of width $3/2 \cdot \eps W$ and height $(1/3 +\eps)\OPT$. The algorithm to place this items has a running time of $\mathcal{O}(n\log(n))$.
\end{lma}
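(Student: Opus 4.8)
The plan is to prove both statements by a routine area argument combined with the classical Next-Fit-Decreasing-Height (NFDH) shelf packing. The first step is to read off the relevant side lengths of the medium items from the case definitions in Section~\ref{Simplify}. Every item $i\in M_H$ has $h_i<\delta\OPT$: if $h_i\ge\delta\OPT$ then, depending on $w_i$, the item would already lie in $L$, $T$, $V$, or $M_V$, so it could not be in $M_H$. Likewise every item $i\in M_V$ has $w_i<\delta W$ by definition, and also $h_i<(1/3+\eps)\OPT$, since $w_i<\delta W$ excludes $i$ from $L$, so $h_i\ge(1/3+\eps)\OPT$ would force $i\in T$, contradicting $M_V\subseteq I\setminus T$. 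Because $\delta=\sigma_j\le\sigma_0=\eps^6$ (in particular $\delta\le\eps^2$), each single item of $M_H$ fits into the target box $W\times\eps\OPT$ and each single item of $M_V$ fits into $(3\eps/2)W\times(1/3+\eps)\OPT$.

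Second, I invoke the area bound of Lemma~\ref{lma:sizeDelta} with the chosen $f(\eps)=\eps/6$: $A(M_H\cup M_V)\le(\eps/6)\OPT W$, hence $A(M_H)\le(\eps/6)\OPT W$ and $A(M_V)\le(\eps/6)\OPT W$ separately. I then use the standard fact (see \cite{Coffman}) that NFDH --- sort items by non-increasing height, pack them left to right into shelves, opening a new shelf on top whenever the current one overflows --- packs a family of items of total area $A$, each of width at most $W'$ and height at most $h_{\max}$, inside a strip of width $W'$ and height at most $2A/W'+h_{\max}$. The proof charges, for each shelf $j\ge 2$, the area of shelf $j-1$ together with the first item of shelf $j$ against a full $W'\times h^{(j)}$ rectangle (where $h^{(j)}$ is the height of shelf $j$), using that the occupied width of shelf $j-1$ plus the width of the first item of shelf $j$ exceeds $W'$.

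For $M_H$ I run NFDH with $W'=W$: the height produced is at most $2(\eps/6)\OPT W/W+\delta\OPT=(\eps/3)\OPT+\delta\OPT\le\eps\OPT$, using $\delta\le\eps^2\le 2\eps/3$. For $M_V$ I first rotate every item by $90^\circ$, turning its width into $h_i<(1/3+\eps)\OPT$ and its height into $w_i<\delta W$, and run NFDH into a strip of width $(1/3+\eps)\OPT$ (all rotated widths fit); the height produced is at most $2(\eps/6)\OPT W/\big((1/3+\eps)\OPT\big)+\delta W=\tfrac{\eps}{1+3\eps}W+\delta W<\eps W+\delta W\le(3\eps/2)W$, using $\delta\le\eps^2\le\eps/2$; rotating back places $M_V$ inside $(3\eps/2)W\times(1/3+\eps)\OPT$. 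Note that rotation is essential here: the additive $h_{\max}$ term of NFDH must be absorbed in the dimension of size $w_i<\delta W$, not in the dimension of size $h_i$, which can be almost as large as the whole box height. Each invocation of NFDH is dominated by its sorting step, so the total running time is $\mathcal{O}(n\log n)$.

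The argument contains no genuine obstacle; the only points demanding attention are checking the side-length bounds for $M_H$ and $M_V$ against the slightly involved definitions of the item classes, choosing the correct orientation for the NFDH strip in the $M_V$ case, and verifying that $\delta$, fixed in Section~\ref{Simplify} as $\delta=\sigma_j$ with $\sigma_0=\eps^6$, is small enough --- concretely $\delta\le\eps^2$ --- for the two slack inequalities $\delta\OPT\le(2\eps/3)\OPT$ and $\delta W\le(\eps/2)W$ to hold in the relevant range of $\eps$.
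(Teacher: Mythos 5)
Your proof is correct and follows essentially the same approach as the paper: it invokes the area bound $A(M_H), A(M_V) \le (\eps/6)\OPT W$ from Lemma~\ref{lma:sizeDelta}, applies the standard NFDH height bound $2A/W' + h_{\max}$, and handles $M_V$ by rotating $90^\circ$ so that the additive $h_{\max}$ term lands in the small dimension. Your only addition is to carefully derive the side-length bounds $h_i < \delta\OPT$ for $M_H$ and $w_i < \delta W$, $h_i < (1/3+\eps)\OPT$ for $M_V$ from the item-class definitions, which the paper's proof uses without explicit justification.
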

\begin{proof}
We know that $A(M_H) \leq f(\eps) W \OPT = \eps /6 \cdot W \OPT$ and for each $i \in M_H$ we have that $w_i \leq W$.  We know by \cite{Coffman} that if we pack these items with the NFDH algorithm into a strip with width $W$ the packing height is at most $2A(M_H)/ W + h_{max} \leq (\eps/ 3 +\delta)\OPT \leq \eps \OPT$.

The items in $M_V$ will be placed with the NFDH algorithm as well, but this time we rotate the items and the packing area by 90 degree first. We now pack the rotated item into a strip with width $(1/3 + \eps)\OPT$. Since the items have a total area of at most $A(M_V) \leq \eps /6 \cdot W \OPT$ and all the items have a width of at most $(1/3 + \eps)\OPT$, \cite{Coffman} implies that we can place the items in the strip, constructing a packing with height at most $2A(M_V)/(1/3 + \eps)\OPT + h_{max} = 2\eps W /6(1/3 + \eps)+\delta W= \eps W/(1+3\eps) +\delta W\leq 3/2 \cdot \eps W$. 
\end{proof}

This proves that we can place the medium sized items into the boxes $B_{MH}$ and $B_{MV}$. We will now see how the horizontal items are placed. The main idea is to place them in their boxes with a linear programming approach as seen many times before. We will show that if $\boxH$ is the set of boxes obtained from the optimal packing, we can pack the items in $H$ in $\mathcal{B}_H \cup \{B_{H0}\}$. We use the fact that the items fractionally fit into the boxes. Here fractionally means, that we just need to slice the items horizontally to find a valid packing in the boxes. When we pack the items with the algorithm described in the proof, we will generate a constant number of boxes for small items. Their area has at least the size of the area the small items in the boxes $\boxH$ had used before.

\begin{lma}
\label{lma:packingHorizontalItems}
There is a polynomial time algorithm that assigns all items in $H$ into the boxes $\mathcal{B}_H \cup \{B_{H}\}$ if there is a fractional packing of these items into the boxes $\mathcal{B}_H$. The algorithm needs at most $\mathcal{O}((1/\eps\delta)^{(1/\delta)}) +\mathcal{O}(n\log(n))$ operations. This algorithm generates at most $1/\eps^2\delta^2$ Boxes $\mathcal{B}_{S,H}$ boxes for small items with total area at least $AREA(\mathcal{B}_H) - AREA(H)$.
\end{lma}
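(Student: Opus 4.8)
The plan is to reduce the packing of $H$ into $\boxH$ to a linear assignment problem, solve it, turn a fractional solution into an integral one via a basic‑solution argument, and finally carve the leftover area of the horizontal boxes into a constant number of rectangles reserved for small items.

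\emph{The linear program and its feasibility.} Write $\boxH=\{B_1,\dots,B_N\}$ with $N\le(1+2\eps)/(\eps\delta^2)$, where $B_l$ has width $w(B_l)$ and height $\eps\delta\OPT$. For every $i\in H$ and every $l$ with $w_i\le w(B_l)$ introduce a variable $x_{il}\ge 0$, the total height of (slices of) item $i$ placed in $B_l$, and impose $\sum_l x_{il}=h_i$ for each $i\in H$ and $\sum_{i} x_{il}\le \eps\delta\OPT$ for each $l$. A fractional packing of $H$ into $\boxH$ — which only slices items horizontally — gives a feasible point: take for $x_{il}$ the total height of the slices of $i$ lying in $B_l$; the slice heights of $i$ sum to $h_i$, no box height is exceeded, and the width constraint is automatic because a slice of $i$ can only lie in a box of width at least $w_i$. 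Hence the program is feasible; it has $n+N$ constraints and is solvable in polynomial time. (Equivalently one may first discretize the widths of the items of $H$ into $\mathcal{O}(1/(\eps\delta))$ classes and guess, for each box, one of the $(1/\eps\delta)^{\mathcal{O}(1/\delta)}$ resulting configurations, which is where the term $\mathcal{O}((1/\eps\delta)^{1/\delta})$ of the claimed running time comes from.)

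\emph{From fractional to integral.} Take a basic feasible solution $x^\ast$. In standard form the constraint matrix has $n+N$ rows, so $x^\ast$ has at most $n+N$ nonzero entries; since $\sum_l x^\ast_{il}=h_i>0$ forces at least one nonzero entry per item, at most $N$ items have two or more nonzero entries — call these the \emph{split} items. Delete all split items from the assignment. There are at most $N\le(1+2\eps)/(\eps\delta^2)$ of them, each of height at most $\mu\OPT=\delta^2\eps^6\OPT$, so their total height is at most $(1+2\eps)\eps^5\OPT\le 2\eps\OPT$; as each has width at most $W$, stacking them left‑aligned inside $B_H$ (of size $W\times 2\eps\OPT$) places all of them. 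Every surviving item is assigned entirely to one box $B_l$, and the items assigned to $B_l$ have total height $\sum_i x^\ast_{il}\le\eps\delta\OPT$, so stacking them left‑aligned in non‑increasing order of width places them inside $B_l$.

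\emph{Boxes for small items and the area bound.} Fix $B_l$ with its items stacked widest at the bottom. The free region of $B_l$ is bounded by the non‑increasing step function recording the occupied width at each height, together with the rectangle above the stack; it is therefore a union of axis‑parallel rectangles, one above the stack plus one per distinct width occurring in $B_l$. To keep this number constant we group the widths of the items of $H$ into $\mathcal{O}(1/\eps)$ classes (linear grouping, so that after a small shift each class fits in the slot of the previous one), so that each box produces at most $\mathcal{O}(1/\eps)$ rectangles; over all $N$ boxes this is at most $\mathcal{O}(N/\eps)=\mathcal{O}(1/(\eps^2\delta^2))$ rectangles, which we collect into $\boxSH$. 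For the area: the items of $H$ ending up inside $\boxH$ are a subset of $H$, hence of area at most $A(H)$, and they lie inside boxes of total area $A(\boxH)$, so the free region inside $\boxH$ has area at least $A(\boxH)-A(H)$; up to the negligible loss caused by the width grouping — which is dominated by the area $\le 2\eps^5 W\OPT$ vacated by removing the split items — this free region is exactly covered by the rectangles of $\boxSH$. Together with the $\mathcal{O}(n\log n)$ cost for sorting and grouping this gives the stated running time.

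\emph{Main obstacle.} The routine part is everything through the placement into $B_H$ and the individual boxes; the delicate point is the last step, namely decomposing the residual area of the horizontal boxes into only $\mathcal{O}(1/(\eps^2\delta^2))$ rectangles while surrendering essentially no area. This forces the right discretization of the horizontal item widths — so that each width group can be shifted into the slot of the previous one without enlarging any item — and a matching charging argument showing the discarded slivers and the grouping slack are absorbed by the area freed by the (few) split items.
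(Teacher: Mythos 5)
Your LP does not model fractional packings into the horizontal boxes. You write $x_{il}$ for the height of the slices of $i$ in $B_l$ and impose $\sum_i x_{il}\le h(B_l)=\eps\delta\OPT$, but a fractional packing can place several (slices of) items side by side at the same height inside a single box, and then the total slice height in that box exceeds $h(B_l)$. Concretely, if $B_l$ has width $W$ and height $\eps\delta\OPT$, and two items of widths $0.6W$ and $0.4W$ each of height $\eps\delta\OPT$ are placed side by side, the fractional packing is valid but your constraint reads $2\eps\delta\OPT\le\eps\delta\OPT$. Since horizontal items only have width $\ge\delta W$ while the boxes can be up to $W$ wide, such side-by-side placements are the normal case, not a corner case. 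Thus the claim that ``the program is feasible'' whenever a fractional packing exists is false, and the transportation-type argument built on this LP (basic solution, at most $N$ split items) does not get off the ground.

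The paper handles exactly this point by using a configuration LP: widths are first rounded by linear grouping to $\mathcal O(1/(\eps\delta))$ values; a \emph{configuration} for a box is a multiset of rounded widths that fit side by side; and the LP variable is the \emph{height} $x_{C,B}$ allocated to configuration $C$ in box $B$, with $\sum_{C} x_{C,B}\le h(B)$ and a covering constraint per width class. That LP is feasible precisely because it captures side-by-side placement, its support-size bound gives the constant number of used configurations, and the leftover rectangle to the right of each configuration yields the $\mathcal O(1/(\eps^2\delta^2))$ small-item boxes with the stated area. Your parenthetical ``equivalently \dots guess, for each box, one of the configurations'' is not an equivalence with your $x_{il}$ LP (one also does not guess a single configuration per box; one solves the LP over all of them). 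Relatedly, your carving of the free region into ``one rectangle above the stack plus one per distinct width'' presupposes that each box is filled by a single vertical stack, which again only follows from the incorrect LP; in the correct packing each box contains several configurations, and the free region is instead bounded by charging one rectangle per configuration plus one per box, as the paper does.
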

\begin{proof}
We use the fact that it is possible to place all horizontal items into the boxes. We know this since all parts of the horizontal items are completely overlapped by the boxes. In the first step, we do a linear grouping step to round the width of the horizontal items to at most $1/\eps\delta$ different widths. We do this by stacking all horizontal items on top of each other by ascending width. This stack has a height of at most $\frac{1}{\delta}\OPT$ since each item has a width of at least $\delta W$, and the total area of horizontal items is at most $W\OPT$. We now draw horizontal lines at each multiple of $\eps \OPT$ and split each horizontal item which is cut by this line. The items between two of this lines define a group of items. We have generated at most $1/\eps\delta$ of these groups since $\frac{1}{\delta}\OPT / (\eps \OPT) = 1/\eps\delta$. 
We now round up the size of each item in each group to the size of the largest item in its group. As seen in \cite{KarmarkarKarp} we can place all rounded horizontal items fractionally into the boxes except for the widest group. The widest group is placed in the extra box $B_H$. This last group has a total processing time of at most $(\eps + \mu)\OPT$.

To find an assignment of the horizontal items to the horizontal boxes we can solve the following LP. A configuration $C$ for a box $B$ is a multi set of rounded items, whose summed width does not exceed the width of the box $B$. We denote by $C(w)$ the number of items with width $w$ contained in $C$. Since all the horizontal items have width at least $\delta$ each configuration contains at most $1/\delta$ items. Since we have at most $1/\eps\delta$ different item widths we have at most $(1/\eps\delta +1)^{1/\delta}$ different configurations for each box. We denote by $\mathcal{C}_B$ the set of configurations for the box $B$. Let $h(w)$ be the total height of all items with width $w$. For $i \in \{1, \dots, 1/\eps\delta -1\}$ let $w_i$ be the width of the items in group $I$. The configuration LP has now the following form
\begin{align*}
\sum_{B \in \mathcal{B}_H}\sum_{C \in \mathcal{C}_B} C(w_i) x_{C,B} & = h(w_i) & \forall i \in \{1, \dots, 1/\eps\delta -1\}\\
\sum_{C \in \mathcal{C}_B}x_{C,B} & \leq h(B) & \forall B \in \mathcal{B}_H\\
x_{C,B} & \geq 0 & \forall B \in \mathcal{B}_H, C \in \mathcal{C}_B
\end{align*}

Since it is possible to place all horizontal items into the boxes this LP has a solution. This LP has $(1/\eps\delta +1)^{1/\delta} \cdot 1/\eps\delta^2$ variables and $1/\eps\delta + (1+2\eps)/\eps\delta^2$ conditions. So we can find a basic solution, which has $1/\eps\delta + (1+2\eps)/\eps\delta^2 \leq 2/\eps\delta^2$ non zero components in at most $\mathcal{O}((1/\eps\delta)^{1/\delta})$ operations. So in total we use at most $1/\eps\delta + 1/\eps\delta^2$ different configurations.
We fill this configurations greedily with the original items, such that the topmost item is allowed to overlap the configuration border. By an area argument one can see that it is possible to place all horizontal items in this way. 

Now in each configuration for each occurrence of an item size, we remove the last added item from the configuration. Now the fill height of this configuration does not exceed the reserved height of this configuration. All items we have removed form one configuration fit next to each other in the strip, since the configuration was feasible. 
So we can place them next to each other on top of the largest group of items into the box $B_{H0}$. 
Since the items in $H$ have height at most $\mu$ we add per configuration a layer of  height at most $\mu$ to the stack in the box. Since we have at most $2/\eps\delta^2$ configurations, the stack has a height of at most $2\mu/\eps\delta^2 +\leq \eps-\mu$, since $\mu \leq \eps^2\delta^2/2 +1\leq \eps^6\delta^2$. Therefore, the total packing height in $B_H$ is at most $2\eps$.
% \marginpar{$\mu \leq \eps^2\delta^2/2$ aber mit Rundung der boxen auch nur $\mu \leq \eps^2\delta/2$ nötig}  

The configurations we place into the boxes are placed such that they touch the left border of the box. So it can happen that between the right border of the configuration and the right border of the box is some free area. 
This free area builds a container for small items. Since we have at most $1/\eps\delta + (1+2\eps)/\eps\delta^2$ configurations and at most $(1+2\eps)/\eps\delta^2$ boxes, there are at most $(2+4\eps +\delta)/\eps\delta^2 \leq 1/\eps^2\delta^2$ boxes $\mathcal{B}_{S,H}$ for small items. 
Since the solution to the LP satisfies the first condition, the area used by the configurations is at most $AREA(H)$. Therefore the boxes have an area of at least $AREA(\mathcal{B}_H) - AREA(H)$.
\end{proof}

To place the vertical items we use the same strategy as to place the horizontal items. We know from Lemma \ref{lma:NumberBoxes} that the vertical items can be fractionally placed in the boxes $\boxV$ plus an additional one with area $(1-2\eps)W \times \tallItemHeight$. Fractionally means here, that the items are allowed to be sliced vertically.

\begin{lma}
\label{lma:packingVerticalItems}
There is a polynomial time algorithm that places all vertical items into the boxes $\mathcal{B}_V \cup \{B_{V}\}$, if there is a fractional packing for these items into the boxes $\mathcal{B}_V \cup \{B_{V0}\}$. The algorithm needs at most $\mathcal{O}((1/\delta)^{1/\eps\delta})$ operations. This procedure creates at most $9/(\eps^3\delta^2)$ container $\boxSV$ for small items.
The total area of the container $\boxSV$ is at least $Area(\mathcal{B}_V\cup \{B_{V}\}) - Area(V)$.
\end{lma}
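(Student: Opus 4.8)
The argument is the exact transpose of the proof of Lemma~\ref{lma:packingHorizontalItems}: rotate the strip and all vertical items by $90$ degrees, so that vertical items become short and wide and the boxes in $\boxV$ (together with the extra box $B_{V0}$ of width $(1-2\eps)W$ and height $(1/3+\eps)\OPT$ guaranteed by Lemma~\ref{lma:NumberBoxes}) play the role of the horizontal boxes. First I would perform linear grouping on the \emph{heights} of the vertical items: stacking all vertical items side by side in non-decreasing order of height yields a block of width at most $W/\delta$, since each vertical item has height at least $\delta\OPT$ and the total area is at most $W\OPT$; vertical cuts at the multiples of $\eps W$ then split the items into at most $1/\eps\delta$ groups. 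Rounding every item up to the largest height of its group and discarding the single widest group — which has total width at most $\tfrac12\eps W$ and is placed on its own into $B_V$, using that $B_V$ is wider than $B_{V0}$ by exactly $\tfrac12\eps W$ — still leaves a fractional packing of the rounded items into $\boxV\cup\{B_{V0}\}$, by the standard linear-grouping argument.

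Next I would solve the transposed configuration LP. A configuration $C$ for a box $B\in\boxV\cup\{B_{V0}\}$ is a multiset of rounded heights whose sum does not exceed $h(B)$ (one ``column''); since every rounded height is at least $\delta\OPT$ and $h(B)=\mathcal{O}(\OPT)$, a column contains only $\mathcal{O}(1/\delta)$ items, so there are only a bounded number of configurations per box. With variables $x_{C,B}$ equal to the total width devoted to configuration $C$ in $B$, equalities forcing each rounded-height class to be covered exactly, the inequalities $\sum_C x_{C,B}\le w(B)$ for every box, and non-negativity, the LP is feasible because the fractional packing above is a feasible point. A basic feasible solution has at most $\mathcal{O}(1/\eps\delta)+|\boxV\cup\{B_{V0}\}| = \mathcal{O}(1/(\eps^3\delta^2))$ non-zero components and is found in $\mathcal{O}((1/\delta)^{1/\eps\delta})$ operations.

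It remains to convert this fractional solution into an integral packing of the \emph{original} vertical items and to account for the leftover area, exactly as for horizontal items. For every box $B$ and every used configuration $C$ I would reserve a vertical strip of width $x_{C,B}$ inside $B$, fill it greedily with the original (rotated) vertical items allowing, for each rounded-height class, the last inserted item to protrude past the reserved width, then remove those protruding items; the removed items together with the items that the LP assigned to $B_{V0}$ have total width at most the slack of $B_{V0}$ plus $\mathcal{O}(\eps W)$ and height at most $(1/3+\eps)\OPT$, hence fit into $B_V$. The free space above each column block, and to the right of the last used strip in each box, forms the containers $\boxSV$ for small items; since there are $\mathcal{O}(1/(\eps^3\delta^2))$ non-zero variables and $\mathcal{O}(1/(\eps^3\delta^2))$ boxes, tallying the constants as in Lemma~\ref{lma:packingHorizontalItems} bounds their number by $9/(\eps^3\delta^2)$. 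Finally, because the LP equalities make the total width of all columns equal to the total width of the vertical items, the vertical items occupy area exactly $A(V)$, so the combined area of the containers in $\boxSV$ is at least the combined area of $\boxV$ and $B_V$ minus $A(V)$.

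I expect the only genuinely delicate point to be the fractional-to-integral conversion: one has to verify that the overflow produced by the greedy filling, added to whatever the LP pushed into $B_{V0}$, still fits inside the single box $B_V$ — that is, that its width stays below $(1-3\eps/2)W$ and its height below $(1/3+\eps)\OPT$ — and that the bookkeeping of the container count stays within $9/(\eps^3\delta^2)$. Everything else is a line-by-line translation of the horizontal case.
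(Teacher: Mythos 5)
Your plan diverges from the paper's proof in two substantive ways, one of which creates a genuine gap.

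First, the heights of the vertical items are already rounded to a small number of distinct values by Lemma~\ref{lma:rounding}, so the linear-grouping step you introduce is not needed, and as stated it does not go through. Stacking the vertical items side by side gives a block of total width at most $W/\delta$; cutting at multiples of $\eps W$ yields groups of width $\eps W$, not $\eps W/2$. The slack between $B_{V}$ (width $(1-3\eps/2)W$) and $B_{V0}$ (width $(1-2\eps)W$) is only $\eps W/2$, so the discarded widest group does not fit on its own. Worse, the widest group consists of the tallest vertical items (height close to $(1/3+\eps)\OPT$), so they cannot be stacked two high inside $B_V$ to save width. You would have to cut at multiples of $\eps W/2$, doubling the number of height classes, or simply drop grouping entirely and use the rounded heights that are already available.

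Second, and more fundamentally, the paper does not transpose the horizontal LP verbatim. Instead of variables $x_{C,B}$ indexed by configuration and box, it uses variables $x_C$ indexed by configurations only, with one equality per rounded item height and one \emph{suffix} inequality per candidate box height: the total width of configurations of height at least $h$ must not exceed the total width of boxes of height at least $h$. This exploits the fact that a column can be placed in any box tall enough, which has no analogue for horizontal configurations whose widths are tied to a specific box. The payoff is a basic solution with only $\mathcal{O}(1/\eps\delta)$ non-zero variables, rather than the $\mathcal{O}(1/(\eps^3\delta^2))$ you would get from a per-box LP whose constraint count is dominated by $|\boxV|$. That sparsity is exactly what the paper uses to obtain the stated bound of $9/(\eps^3\delta^2)$ containers and to keep the total area of items sliced at configuration and box borders small enough to repack into $B_{V0}$ via rotated FFDH. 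With your LP you would still get $\mathcal{O}(1/(\eps^3\delta^2))$ containers asymptotically, but you cannot inherit the paper's constants, and the overflow-area bookkeeping you flag as the delicate point would have to be reworked from scratch; the paper's tighter LP is precisely what makes that accounting go through cleanly.
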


\begin{proof}
Consider all the boxes in $\mathcal{B}_V$. Each of this boxes $B$ has a height $h(B)$ and a width $w(B)$. We are interested in the total width of all boxes which have a specific height $h$. We denote that width with $w_{\mathcal{B}_V}(h)$. Let $H_B := \{i\eps\delta \OPT | i = 1, \dots , 1/\eps\delta\} \cup \{i \eps \OPT | i = 1, \dots, (1 +5\eps)/\eps\}$ the set of different container heights (container with height $> \OPT$ are generated by pseudo items and have a height, which is a multiple of $\eps\OPT$). Let $H_I:= \{i\eps^k \OPT | i = 1, \dots , 1/\eps, k = 1, \dots, \log_{\eps}(\delta) +1\}$ be the set of heights for vertical items. For each $h$ in $H_I$ let $w_I(h)$ be the width of all items in $i \in \mathcal{V}$ with height $h$. A configuration $C$ is a multiset of item heights out of $H_I$. We denote by $C(h)$ the number of items with height $h$ contained in $C$ and by $h(C)$ the sum of the item heights contained in $C$. 
Now consider the following linear program
\begin{align*}
\sum_{C \in \mathcal{C}} C(h)x_C & = w_I(h) & \forall h \in H_I\\
\sum_{C\in \mathcal{C}\atop h(C) \geq h} x_C & \leq \sum_{h' \in H_B, h' \geq h} w_{\mathcal{B}_V}{h'} & \forall h \in H_B
\end{align*}
$x_C$ can be interpreted as the width of the configuration $C$. With the first type of inequalities we ensure that for each type of item height we have reserved enough area to place all of these sliced items into the configurations. By the second type of inequalities we ensure that we can place all configurations into the boxes. The variable $x_C$ defines the total width of the configuration $C$. The LP has $|H_I| + |H_B| \leq 2/\eps\delta$ inequalities and at most $(|H_I|+1)^{(1+5\eps)/\delta}$ configurations. So we can find a basic solution with at most $2/\eps\delta$ non zero components in at most $\mathcal{O}((k/\eps^2)^{2/\delta}) = \mathcal{O}((1/\eps)^{24/\eps\delta})$ operations.

First, we sort the configuration by height and fill them greedily splitting the vertical jobs if necessary, such that in each configuration in each contained size there are at most two split items. Since each configuration has a height of at most $(1+5\eps)\OPT$, and the width of the vertical items is at most $\mu W$, the total area of fractional packed items is at most $2(1+5\eps)\mu/\eps\delta \cdot W\OPT$.

We put the configurations greedily into the boxes for vertical items, starting with the smallest configuration size putting it into the smallest container, which height is large enough to contain the configuration. By this packing, it can happen that some vertical items are again cut at the container borders. Since we have at most $8/(\eps^3\delta^2)$ container, the total area of the so cut items is at most $8\mu/(\eps^3\delta^2) \cdot W\OPT$. So the total area of the items jet to pack is at most $\mu(2(1+5\eps)\eps^2\delta+8)/(\eps^3\delta^2) \cdot W\OPT \leq (8+4\eps^3\delta)\mu/(\eps^3\delta^2)W\OPT$.

We are going to pack these items into the extra box $B_{V0}$. These box has a height of at most $(1/3 +\eps)\OPT$. If we rotate this box and alt the fractional items by $90$ degree, we can use the FFDH algorithm to pack these items. The generated solution has a height of at most $(6 \cdot (8+4\eps^3\delta)\mu/(\eps^3\delta^2)+ \mu) W$, since the generated packing uses at most $2 AREA(I) + (\mu W) w$ area to pack the items, where $w$ is the width of the packing area, so in this case $w > 1/3 \cdot \OPT$.  
Since $\eps \leq 1/25$ we have $(6 \cdot (18+4\eps^3\delta)\mu/(\eps^3\delta^2)+ \mu) W \leq ((6 \cdot 8+25\eps^3\delta)\mu/(\eps^3\delta^2)) W \leq \mu/2\eps^5\delta^2 \cdot W$.
Since $\mu \leq \eps^6\delta^2$ the packing has a height of at most $\frac{\epsilon}{2} W$. We now rotate the items back and this packing fits into the box of height $(1/3 + 2\eps)\OPT$ and width $\frac{\epsilon}{2} W$. Combined with the box $B_{V0}$ this is a box with width $(1-3\eps/2)W$ and height $(1/3 +\eps) \OPT$ so it matches $B_V$.

%\begin{lma}
%If $\boxV$ is created by a partition of an optimal packing, the algorithm for Lemma \ref{lma:packingVerticalItems} creates at most $9/(\eps^3\delta^2)$ container $\boxSV$ for small items.
%The total area of the container $\boxSV$ has at least the size of the area of all small items previously contained in the boxes in $\boxV$ plus the size of the area $(1-2\eps)W \times \tallItemHeight$.
%Furthermore for each box $B \in \boxV$ at each point on the x-axis in the container $B$ there is at most one container in $\boxSV$ that intersects with the vertical line at this point on the x-axis.
%\end{lma}
%\begin{proof}

We place each configuration such that it touches the bottom of the box. It is possible that there is some free area above an inserted configuration. This free area builds a container for small items. Since we have at most $8/(\eps^3\delta^2)$ container in $\boxV$ and at most $2/\eps\delta$ configurations, there are at most $(8 + 2\eps^2\delta)/(\eps^3\delta^2) \leq 9/(\eps^3\delta^2)$ container created by this algorithm to assign the vertical items to the container. 

Since the basic solution to the linear program fulfils the equality $\sum_{C \in \mathcal{C}} C(h)x_C  = w_I(h)  \forall h \in H_I$ the configurations use exactly the area the items in $V$ used in the original packing. Therefore, the free area is at least $Area(\mathcal{B}_V\cup \{B_{V}\}) - Area(V)$.
\end{proof}

We now describe how to pack the small items. We place them into the boxes $\boxSH$ and $\boxSV$. Note that the boxes are also generated in the extra box for vertical items $B_{V0}$, which has an area of $(1-2\eps)W \times (1+\eps)\OPT$. 
The total area of the boxes for small items is at least $Area(\mathcal{B}_V\cup \{B_{V}\}) - Area(V) + AREA(\mathcal{B}_H) - AREA(H)$. Therefore, by Lemma \ref{lma:structuralLemma} it holds that $Area(\boxSH \cup\boxSV) \geq AREA(S) + (1-2\eps)(1/3+\eps)\OPT W$, since the free area in the boxes $\boxH \cup \boxV \cup \{V_0\}$ is at least $AREA(S) + (1-2\eps)(1/3+\eps)\OPT W$.
Since the vertical items do not use more space in the packing than they had before, the area we can place small items in is at least $(1-2\eps)W \times (1+\eps)\OPT$ larger than the area of the small items.

\begin{lma}
\label{lma:packingSmallItems}
There is a polynomial algorithm, that places all small items into the boxes $\boxSH$ and $\boxSV$.
The algorithm needs at most $\mathcal{O}(n\log n + 1/\eps^3\delta^2)$ operations.
\end{lma}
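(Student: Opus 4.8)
The plan is to place the small items with the Next Fit Decreasing Height (NFDH) heuristic, treating all the containers in $\boxSH\cup\boxSV$ as one ordered list of shelf‑bins, and to pay for the area NFDH wastes out of the surplus free area guaranteed by the structural lemma. Recall that a small item has width at most $\mu W$ and height at most $\mu\OPT$, that $\mu\le\eps^6\delta^2$, that by Lemma~\ref{lma:structuralLemma} together with the discussion preceding the statement $A(\boxSH\cup\boxSV)\ge A(S)+(1-2\eps)(1/3+\eps)\OPT W$, and that by Lemmas~\ref{lma:packingHorizontalItems} and \ref{lma:packingVerticalItems} there are only $\mathcal{O}(1/(\eps^3\delta^2))$ such containers, each of width at most $W$ and height at most $(1+5\eps)\OPT$.

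First I would discard every container whose width is below $\mu W$ or whose height is below $\mu\OPT$: no small item fits into such a container in any case. Each discarded container has area at most $\mu W(1+5\eps)\OPT$, and there are $\mathcal{O}(1/(\eps^3\delta^2))$ of them, so the total area thrown away is $\mathcal{O}(\mu W\OPT/(\eps^3\delta^2))$. In every remaining (``usable'') container both dimensions dominate those of any small item, so a usable container can absorb further small items whenever NFDH finds room.

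Then I would sort the small items by non‑increasing height ($\mathcal{O}(n\log n)$ time) and run NFDH through the usable containers: fill the first one shelf by shelf, and as soon as the current item fits neither on the current top shelf nor as a fresh shelf inside the current container, close that container and resume with the same item in the next usable container. The classical NFDH estimate of Coffman et al.~\cite{Coffman}, applied container by container and using that the items are processed in non‑increasing height order, shows that a \emph{closed} container $B$ ends up holding items of total area at least $w(B)h(B)-\mu W\,h(B)-2\mu\OPT\,w(B)$: every shelf is filled to width more than $w(B)-\mu W$; the items of shelf $i$ are at least as tall as the first item of shelf $i+1$, and those of the topmost shelf are at least as tall as the item that triggered the switch; and the shelves together reach height more than $h(B)-2\mu\OPT$. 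Hence, if some small item could not be placed, every usable container would have been closed, and the unused area inside $\boxSH\cup\boxSV$ would be at most $\mathcal{O}(\mu W\OPT/(\eps^3\delta^2))$ from the usable containers plus $\mathcal{O}(\mu W\OPT/(\eps^3\delta^2))$ from the discarded ones, hence $\mathcal{O}(\eps^3\OPT W)$ because $\mu\le\eps^6\delta^2$. But $A(\boxSH\cup\boxSV)\ge A(S)+(1-2\eps)(1/3+\eps)\OPT W$ forces the unused area to exceed $(1-2\eps)(1/3+\eps)\OPT W$, contradicting this bound for all sufficiently small $\eps$, as assumed throughout this section. Therefore all small items are placed.

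For the running time, sorting costs $\mathcal{O}(n\log n)$ and the NFDH sweep touches each item once and advances through the $\mathcal{O}(1/(\eps^3\delta^2))$ containers once, giving $\mathcal{O}(n\log n+1/(\eps^3\delta^2))$ operations in total. The main point to get right is the per‑container NFDH waste bound, in particular that a closed container really is ``almost full'' even though the items on a shelf may be far shorter than that shelf's height; this is precisely the shelf‑telescoping argument behind the estimate in \cite{Coffman}, and the rest is only checking that the accumulated constant slack $\mathcal{O}(\mu W\OPT/(\eps^3\delta^2))$ stays well below $(1-2\eps)(1/3+\eps)\OPT W$, which is comfortable since $\mu$ is polynomially smaller than $\eps^3\delta^2$.
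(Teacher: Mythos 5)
Your proof is correct and takes essentially the same approach as the paper: both discard containers too small in either dimension to hold any small item, pack the remaining ones with NFDH, bound the per-container waste by $\mathcal{O}(\mu W\OPT)$ via the shelf-telescoping argument, and derive a contradiction from the area surplus guaranteed by Lemma~\ref{lma:structuralLemma}. The only cosmetic difference is that the paper rotates the boxes of $\boxSH$ (and the items placed into them) by $90$ degrees before running NFDH, which only changes constants in the per-container waste bound and is immaterial to the argument.
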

\begin{proof}
In $\boxSH$ and $\boxSV$ we have at most $10/\eps^3\delta^2$ boxes total.
First, we discard all boxes which are smaller than $\mu W \times \mu \OPT$ in one dimension. Each of this boxes has an area of at most $\mu \OPT W$.

We act differently for the boxes in $\boxSV$ and $\boxSH$.
The boxes in $\boxSV$ we fill with the NFDH algorithm, the boxes in $\boxSH$ and the items to be filled in are first rotated by $90$ degree and than packed with the NFDH algorithm.

Assume we can not pack all the small items into the boxes. Let us consider a box $B \in \boxSV$. Let $i$ be the first item, which was packed into $B$. We know that above the last item which is packed we have a free strip of width $w(B)$ and height at most $\mu \OPT$. Let $B_j$ be the $j$-th Strip where the NFDH algorithm places items and let $i_j$ be the first item placed in this strip. We know that the strip $B_j$ contains a free area of at most $(h(i_{j-i}) - h(i_{j}))w(B) + h(i_{j})\mu W$. The last strip $B_k$ contains a free area of at most $h(i_{k})w(B)$. So in total we have at most  $\sum_{j = 0}^{k-1} ((h(i_{j-i}) - h(i_{j}))w(B) + h(i_{j})\mu W) + h(i_{k})w(B) \leq h(i_0)w(B) +  h(B) \mu W \leq w(B) \mu \OPT + \mu \OPT W$ free space in the used strips. So the total free space in the box $B$ is at most $2w(B) \mu \OPT + \mu \OPT W \leq 3\mu \OPT W$ if we use this box and $\mu \OPT W$ if not. Analogously for each box $B \in \boxSH$ we have that the free area in each box is at most $2h(B) \mu W + \mu \OPT W \leq 3\mu \OPT W$ if it is used to fill it with items and at most $\mu \OPT W$ if not. 

So the total free area in $\boxS := \boxSV \cup \boxSH$ is at most $30\mu/\eps^3\delta^2 \cdot \OPT W$. So if $\mu \leq 1/\eps^5\delta^2$ and $\eps \leq 11$ we have $30\mu/\eps^3\delta^2 \cdot \OPT W \leq \eps \OPT W \leq \frac{1}{6}\OPT W$. But if we have at most $\frac{1}{6}\OPT W$ free area in the boxes $\boxS$, the paced small items have an area of at least $A(\mathcal{S})+(1/3+\eps)\OPT \cdot (1-2\eps)W - \frac{1}{6}\OPT W \geq A(\mathcal{S})$. Which is a contradiction to the assumption that there are some small items we could not place.
\end{proof}

To place the tall items we use the dynamic program described by Nadiradze and Wise (see Lemma 6.1 in \cite{nadiradzeWiese} ).

\begin{lma}[\cite{nadiradzeWiese}]
Given a set of bins $B_1, \dots, B_K$ with integral capacities $w(B_j)$ and a set of $n$ items, each being characterized by a size $a_i \in \mathbb{N}$. Let $\tilde{N} := \sum_{j} w(B_j)$. There is an algorithm with running time $(n \tilde{N})^{\mathcal{O}(k)}$ that determines whether there is an assignment of the $n$ items  to the $k$ bins such that each bin $B_j$ is assigned items with a total size of at most $w(B_j)$.
\end{lma}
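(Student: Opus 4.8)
The plan is to use the standard pseudo-polynomial dynamic program over the items, where the DP state records the current vector of bin loads. First I would note that, since every capacity $w(B_j)$ is a non-negative integer with $w(B_j)\le\tilde N$, during any partial assignment the load of bin $B_j$ is an integer in $\{0,\dots,w(B_j)\}$; hence the number of distinct feasible load vectors is at most $\prod_{j=1}^{k}(w(B_j)+1)\le(\tilde N+1)^{k}$, which is polynomial in $\tilde N$ for fixed $k$.

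Next I would fix an arbitrary order $a_1,\dots,a_n$ of the items and, for $i=0,\dots,n$, let $D_i\subseteq\mathbb{N}^k$ be the set of load vectors $(\ell_1,\dots,\ell_k)$ attainable by assigning the first $i$ items to the bins without exceeding any capacity. Then $D_0=\{(0,\dots,0)\}$, and $D_{i+1}$ is obtained from $D_i$ by taking, for every $v=(\ell_1,\dots,\ell_k)\in D_i$ and every bin $j$ with $\ell_j+a_{i+1}\le w(B_j)$, the vector $v+a_{i+1}e_j$, where $e_j$ is the $j$-th unit vector. A feasible assignment of all $n$ items exists if and only if $D_n\neq\emptyset$; if an explicit assignment is wanted, one stores with each newly reached vector a back-pointer to its predecessor together with the bin used, and traces back from any vector in $D_n$.

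For the running time I would store each $D_i$ as a $k$-dimensional boolean array of size $\prod_{j}(w(B_j)+1)\le(\tilde N+1)^k$. Computing $D_{i+1}$ from $D_i$ then scans at most $(\tilde N+1)^k$ vectors, tries $k$ bins for each, and performs $\mathcal{O}(k)$ integer operations per trial, i.e.\ $\mathcal{O}(k^2(\tilde N+1)^k)$ time; summing over all $n$ items this is $\mathcal{O}(nk^2(\tilde N+1)^k)=(n\tilde N)^{\mathcal{O}(k)}$, as claimed.

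There is essentially no genuine obstacle here, as this is the textbook dynamic program for makespan scheduling / bin packing with a constant number of bins; the only things that need checking are that the state space really is polynomial in $\tilde N$ for fixed $k$ (it is, because loads are bounded integers) and that the transition step is efficient. The one subtlety worth flagging is that a scalar summary of the state (such as total assigned size) does not suffice and no simple greedy rule works, so one must carry the full load vector as the DP state; this is precisely what forces the $(\tilde N+1)^k$ factor and hence the dependence on $k$ in the exponent.
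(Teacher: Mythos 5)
Your proposal is correct and is the standard pseudo-polynomial dynamic program over load vectors; since the current paper only cites this lemma from Nadiradze and Wiese rather than reproving it, there is no in-paper proof to compare against, but the cited source uses exactly this load-vector DP, so your argument matches the intended one. One small cosmetic point: the lemma statement switches between $K$ and $k$ for the number of bins, which you silently identify — worth noting, but not a gap.
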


For each item size in the tall items, we have $\mathcal{O}(1/\eps^3\delta^2)$ boxes. The total width of all tall items is at most $3W$. So for a given set of boxes for tall items we can find in $(nW)^{\mathcal{O}(1/\eps^3\delta^2)}$ operations a packing of the tall items into the boxes, or decide that such a packing does not exist.

All the described algorithms to place items have a running time of at most $(nW)^{\mathcal{O}(1/\eps^3\delta^2)}$. So we can find a placement of the items in $I$ into a given box partition with at most $(1+2\eps)/\delta^2$ boxes for large items, $(1+2\eps)/\eps\delta^2$ boxes for horizontal items, $\mathcal{O}(1/\eps^3\delta^2)$ boxes for tall items containing just items with the same height, and $\mathcal{O}(1/\eps^3\delta^2)$ boxes for vertical items in at most $(nW)^{\mathcal{O}(1/\eps^3\delta^2)}$ operations.

Let us summarize what the current packing looks like (see figure \ref{fig:partition}): We have stretched the optimal packing area, such that it has a height of $(1+5\eps)\OPT$. We have an extra box $H_0$ for horizontal items, which has height $2 \eps \OPT$ and width $W$. We place this box exactly above the packing area of height $(1+5\eps)\OPT$. For the medium sized items, we have introduced two boxes. One has height $\eps\OPT$ and can be placed above the box for horizontal items. The other has height $\tallItemHeight$ and width $(3\eps/2)W$. We will place this box next to the extra boxes for vertical items, which has height $\tallItemHeight$ and width $(1-3\eps/2)W$. 
%The first box has a width of $(1-2\eps) W$ and the second a width of $\frac{\epsilon}{2}W$. These tree boxes fit next to each other if $(6f(\eps) + \delta)W+ (1-2\eps) W + \frac{\epsilon}{2}W \leq W$. If $\delta \leq \eps/2$ we set $f(\eps) = \eps/6$, to guarantee that the tree boxes fit next to each other. 
So the total height of the current packing is $(4/3+9\eps)\OPT$. So if we substitute $\eps$ with $\eps' := \eps/9$ the simplified packing has a height of at most $(4/3 +\eps)\OPT$.

%The strongest assumption we made about $\eps$ was $\eps \leq 12$. Since $\eps \leq 1/2$ in the beginning we we have that $\eps' = \eps/9 \leq 1/12$.
%The strongest assumption we made about $\mu$ was $\mu \leq \eps^6\delta^2$. So we can choose the value $l$ in Lemma \ref{lma:sizeDelta} as $l:=6$. By this choice we have $\delta \geq (\eps/9)^{6 \cdot 2^{6 \cdot 9 /\eps}}$.

\begin{algorithm}
\caption{Given: $W$, $I$, $\eps >0$ with $1/\eps \in \mathbb{N}_{\geq 2}$}\label{euclid}
\begin{algorithmic}
%\KwData{this text}
\State \texttt{set} $\eps' :=\min \{\eps/9, 1/24\}$

\State\texttt{try} a value for  $\OPT$

\State\texttt{find} the corresponding values for $\delta$ and $\mu$.
\For{each position for the items in $L$}
\For{each size and position of the horizontal boxes}

\State compute the boxes $\mathcal{B}_{T \cup V}$ 

\For{each possible choice to position tall items on the box borders}

\For{ each partition of the boxes  $\mathcal{B}_{T \cup V}$ into the boxes $\boxV$ and $\boxT$ }

\State\texttt{find} for this guessed partition a packing for the items $I$ if possible.
\EndFor
\EndFor
\EndFor
\EndFor
\end{algorithmic}
\end{algorithm}

The algorithm works as follows: First we set $\eps' := \min\{\eps/9,1/24\}$. After that we have to find the height of the optimal packing $\OPT$ with a binary search framework, which takes $\mathcal{O}(\log(\OPT))$ steps. Now  we find the correct values for $\delta$ and $\mu$ and round the items in $T\cup V\cup L$. This can be done in $\mathcal{O}(n/\eps)$. 
%The next step is to find all possible packings for the tall items into at most $4/\eps^3\delta^2$ boxes ($\mathcal{O}(n 4/\eps^3\delta^2 \cdot W^{4/\eps^3\delta^2}) = n\cdot W^{1/\eps^{\mathcal{O}(2^{1/\eps})}}$ operations).
Now we guess the structure of the packing. For this we have to guess the position of the large items ($(W/\eps\delta)^{1/\delta^2}$ possibilities), the position of the horizontal items ($W^{2/\eps\delta}$ possibilities) and the position and width of the boxes for tall items ($W^{2|\boxT|}$ possibilities). 
Since $\delta \geq \eps^{\mathcal{O}(2^{1/\eps})}$ the structure of the packing can be guessed within $W^{1/\eps^{\mathcal{O}(2^{1/\eps})}}$ operations. 
For each of the guessed partitions, we try with the algorithm from Lemma \ref{lma:algorithmToPackTheItems} if we can place the items in $I$ into that partition.
If not, we try an other partition, if yes we try a smaller value for $\OPT$.
The total running time is therefore bounded by $\mathcal{\log(\OPT)} \cdot (nW)^{1/\eps^{\mathcal{O}(2^{1/\eps})}}$. If we approximate $\OPT$ within range $(1+\eps)$ we have to scale $\eps'$ by a constant factor and get a running time of $(nW)^{1/\eps^{\mathcal{O}(2^{1/\eps})}}$.

\section{Conclusion}
We have reduced the upper bound of the approximation ratio for strip packing with pseudo polynomial processing time to $(4/3 +\eps)$. This reduced the bound by $2/30\approx 0.07$ compared to the previous best algorithm. But there is still a large gap to the lower bound of $\frac{5}{4}\OPT$. To match this lower bound no item, which has a height which is larger than $\frac{5}{4}\OPT$ is allowed to be placed outside the packing area. We believe, an algorithm with approximation ratio $(\frac{5}{4}+\eps)\OPT$ should be possible.

\section{Acknowledgements}
We kindly thank the anonymous referees for their valuable comments that helped us improve this paper. This research was supported in part by German Research Foundation (DFG) project JA 612 /14-2.
%Aufteilung in Boxen
%umpackung Horizontaler Items
%umpacken vertikaler Items
% packung kleiner Items
%Algorithmus
%laufzeit

%\small{
\bibliography{lib_afptas.bib}
%}
\bibliographystyle{plain}
\newpage

%\begin{appendix}
%\input{appendix.tex}
%\end{appendix}
\end{document}